\newcommand{\Eat}[1]{} 
\newcommand{\Set}[1]{\ensuremath{\{ #1 \}}}
\newcommand{\SetCard}[1]{\ensuremath{| #1 |}}
\newcommand{\SpecSet}[2]{\ensuremath{\Set{#1 \mid #2}}}
\newcommand{\Ordinal}[1]{\ensuremath{{#1}^{\rm th}}}
\newcommand{\Half}[1][]{\ensuremath{%
\frac{\ifthenelse{\equal{#1}{}}{1}{#1}}{2}}\xspace}
\newcommand{\AAA}{\ensuremath{\mathcal{A}}\xspace}
\newcommand{\CCC}{\ensuremath{\mathcal{C}}\xspace}
\DeclareMathOperator*{\argmin}{argmin}
\DeclareMathOperator*{\argmax}{argmax}
\DeclareSymbolFont{AMSb}{U}{msb}{m}{n}
\DeclareMathSymbol{\N}{\mathord}{AMSb}{"4E}
\DeclareMathSymbol{\Z}{\mathord}{AMSb}{"5A}
\DeclareMathSymbol{\R}{\mathord}{AMSb}{"52}
\newcommand{\poly}{\ensuremath{\mbox{poly}}\xspace}
\newcommand{\Complexity}[1]{\ensuremath{\text{#1}}\xspace}
\newcommand{\Hardness}[2]{\ensuremath{\text{#1-#2}}\xspace}
\newcommand*{\RNum}[1]{\expandafter\@slowromancap\romannumeral #1@}
\newcommand{\InsertAlgorithm}[3]%
{\begin{algorithm}[ht]
\caption{\sc #1}\label{#2}
\begin{algorithmic}[1]
\vspace{0.1cm}
\baselineskip=1.1\baselineskip
#3
\end{algorithmic}\end{algorithm}}
\def\AlgAssign{\ensuremath{\leftarrow}\xspace}
\newcommand{\InsertFigure}[3]{
\begin{figure}[h]
\includegraphics[width=#2]{./figures/#1.pdf}
\caption{#3}
\label{fig:#1}
\end{figure}}
\definecolor{USCcardinal}{rgb}{0.598,0.000,0.000}
\definecolor{USCgold}{rgb}{0.996,0.797,0.000}
\def\binarysearch{Binary Search\xspace}
\def\quasip{quasipolynomial\xspace}
\def\TOTQ{\ensuremath{K}\xspace}
\def\EachRound{\ensuremath{r}\xspace}
\def\NumRound{\ensuremath{k}\xspace}
\newcommand{\MAXDEGREE}{\ensuremath{\Delta}\xspace}
\def\QuasiDTIME{\ensuremath{O\Big(n^{O(\log n)}\Big)}\xspace}
\def\MAXCYCLE{\ensuremath{c}\xspace}
\def\ErrProb{\ensuremath{\delta}\xspace}
\def\MAXDIST{\ensuremath{D}\xspace}
\def\COSTFUNCTION{\ensuremath{\mathcal{C}}\xspace}
\def\Tolerance{\ensuremath{\lambda}\xspace}
\def\Capacity{\CCC}
\def\Marked{\ensuremath{M}\xspace}
\def\MultiWeights{%
\mbox{\ensuremath{\text{\sc Multiplicative-Weights}}}\xspace}
\def\TargetFinal{\ensuremath{\mu(t)}\xspace}
\def\SumFinal{\ensuremath{\Gamma}\xspace}
\def\NumDim{\ensuremath{D}\xspace}
\def\vc{\boldsymbol}
\def\true{\ensuremath{\text{true}}\xspace}
\def\false{\ensuremath{\text{false}}\xspace}
\def\firstplayer{vertex player\xspace}
\def\secondplayer{edge player\xspace}
\def\NumExists{\ensuremath{\hat{k}}\xspace}
\def\ExtraNode{\ensuremath{\hat{v}}\xspace}
\newcommand{\CritGadgetA}[1]{\ensuremath{T_{#1}}\xspace}
\newcommand{\CritGadgetB}[1]{\ensuremath{T'_{#1}}\xspace}
\newcommand{\CritNodeA}[1]{\ensuremath{t_{#1}}\xspace}
\newcommand{\CritNodeB}[1]{\ensuremath{t'_{#1}}\xspace}
\newcommand{\Weight}[1]{\ensuremath{%
\ifthenelse{\equal{#1}{}}{\omega}{\omega_{#1}}}\xspace}
\newcommand{\Reach}[3][]{\ensuremath{%
\ifthenelse{\equal{#1}{}}{N(#2,#3)}{N_{#1}(#2,#3)}}\xspace}
\newcommand{\ReachDist}[4][]{\ensuremath{%
\ifthenelse{\equal{#1}{}}{N(#2,#3,#4)}{N_{#1}(#2,#3,#4)}}\xspace}
\newcommand{\DirPotential}[2]{\ensuremath{%
\Psi_{#1}(#2)}\xspace}
\newcommand\NodeWeight[1][]{\ensuremath{%
\ifthenelse{\equal{#1}{}}{\mu}{\mu(#1)}}\xspace}
\newcommand{\Potential}[2]{%
\ensuremath{\Phi_{#1}(#2)}\xspace}
\newcommand{\WPotential}[2][]{\ensuremath{%
\ifthenelse{\equal{#1}{}}{\Phi(#2)}{\Phi_{#1}(#2)}}\xspace}
\newcommand{\SumWeight}[2][]{\ensuremath{%
\ifthenelse{\equal{#1}{}}{\Gamma(#2)}{\Gamma_{#1}(#2)}}\xspace}
\newcommand{\TSSA}[2]{%
\ensuremath{\text{\sc TargetSearch}}(#1, #2)\xspace}
\newcommand{\AtDistance}[2]{\ensuremath{\Gamma_{#1}(#2)}\xspace}
\newcommand{\NestCycle}[2]{%
\ensuremath{H_{#1, #2}}\xspace}
\newcommand{\GQBF}[1]{\ensuremath{\text{QBF}_{#1}}\xspace}
\newcommand{\Not}[1]{\ensuremath{\overline{#1}}\xspace}
\newcommand{\Clause}[1]{\ensuremath{C_{#1}}\xspace}
\newcommand{\ClGadgetA}[1]{\ensuremath{P_{#1}}\xspace}
\newcommand{\ClGadgetB}[1]{\ensuremath{P'_{#1}}\xspace}
\newcommand{\ClNodeA}[1]{\ensuremath{p_{#1}}\xspace}
\newcommand{\ClNodeB}[1]{\ensuremath{p'_{#1}}\xspace}
\newcommand{\mySubsection}[1]{\subsection*{#1}}
\begin{document}

\title{Deterministic and Probabilistic Binary Search in Graphs}

\author{%
Ehsan Emamjomeh-Zadeh%
\thanks{%
Department of Computer Science,
University of Southern California,
emamjome@usc.edu} 
\and
David Kempe%
\thanks{%
Department of Computer Science,
University of Southern California,
dkempe@usc.edu}
\and
Vikrant Singhal%
\thanks{
Department of Computer Science,
University of Southern California,
vikrants@usc.edu
}
}

\begin{titlepage}
\maketitle
\begin{abstract}

We consider the following natural generalization of \binarysearch:
in a given undirected, positively weighted graph,
one vertex is a \emph{target}.
The algorithm's task is to identify the target
by adaptively querying vertices.
In response to querying a node $q$,
the algorithm learns either that $q$ is the target,
or is given an edge out of $q$ that
lies on a shortest path from $q$ to the target.
We study this problem in a general noisy model in which
each query independently receives a correct answer 
with probability $p > \Half$ (a known constant),
and an (adversarial) incorrect one with probability $1 - p$.

Our main positive result is that when $p = 1$ (i.e., all answers are
correct), $\log_2 n$ queries are always sufficient. 
For general $p$, we give an (almost information-theoretically optimal)
algorithm that uses, in expectation, no more than
$(1 - \ErrProb) \frac{\log_2 n}{1 - H(p)} + o(\log n) + O(\log^2 (1/\ErrProb))$
queries, and identifies the target correctly
with probability at least $1 - \ErrProb$.
Here, $H(p) = -(p \log p + (1 - p) \log(1 - p))$ denotes the entropy.
The first bound is achieved by the algorithm
that iteratively queries a 1-median of the nodes not ruled out yet;
the second bound by careful repeated invocations of 
a multiplicative weights algorithm.

Even for $p = 1$,
we show several hardness results for the problem of determining
whether a target can be found using \TOTQ queries.
Our upper bound of $\log_2 n$ implies a \quasip-time
algorithm for undirected connected graphs;
we show that this is best-possible under
the Strong Exponential Time Hypothesis (SETH).
Furthermore, for directed graphs, or for undirected graphs with
non-uniform node querying costs, the problem is
\Hardness{PSPACE}{complete}.
For a semi-adaptive version, in which one may query \EachRound nodes
each in \NumRound rounds, we show membership in
$\Sigma_{2\NumRound-1}$ in the polynomial hierarchy, and hardness for
$\Sigma_{2\NumRound-5}$.

\pagebreak


\end{abstract}
\end{titlepage}



\section{Introduction}\label{sec:introduction}

One way of thinking about the classical \binarysearch algorithm
is as follows:
Given a path with a target vertex $t$ at an unknown position, 
an algorithm adaptively queries vertices, with the goal of discovering $t$.
In response to a query of $q$, the algorithm either learns that $q$ is
the target, or finds out whether the target is to the left or right of $q$.

This view of \binarysearch suggests a natural generalization to trees.
The algorithm knows the tree ahead of time,
and is trying to find the target $t$ in the tree.
When a vertex $q$ is queried which is not the target,
it reveals the subtree in which the target is located.
As shown by Jordan~\cite{jordan:1869:assemblages},
every tree has a separator vertex with the property that each of its
subtrees contains at most half of the vertices. Thus,
as pointed out by Onak and Parys~\cite{onak-parys:2006:trees-vertex-linear},
iteratively querying such a separator of the current
remaining subtree, then eliminating all subtrees known
not to contain the target
finds a target in a tree of $n$ vertices
within at most $\log n$ queries.\footnote{%
Throughout this paper, unless noted otherwise,
all logarithms are taken base 2.}
Since \binarysearch is optimal for the line,
this bound is also tight in the worst case.
For specific graphs, however, fewer queries may be enough:
for instance, only a single query suffices for a star.
The algorithmic problem of finding the optimal adaptive strategy
for a tree has been solved
by Onak and Parys~\cite{onak-parys:2006:trees-vertex-linear},
who present a linear-time algorithm.
This view of \binarysearch in turn suggests a further generalization
to arbitrary undirected graphs: 
searching for a target by adaptively querying vertices of
a connected undirected graph and receiving directional information.

A sequence of papers
\cite{feige-raghavan-peleg-upfal:1994:noisy,%
karp-kleinberg:2007:noisy,%
BenOr-hassidim:2008:noisy}
have studied a probabilistic version of
\binarysearch on a path, where each
answer to a query is correct only with probability $p > \Half$.
The upshot of this line of work is that a more careful variant of
\binarysearch uses $O(\log n)$ queries in expectation, with an
information-theoretically optimal constant depending on $p$. 
In our present work, we consider this natural generalization to the
noisy model for arbitrary graphs as well.

More formally, we study algorithms under the following general model:
given an arbitrary undirected
(for some of our results, directed) graph $G = (V, E)$
(with $n = \SetCard{V}, m = \SetCard{E}$)
with known positive edge weights\footnote{Thus, we can equivalently
think of searching for a target in a finite metric space, defined by
the shortest-path metric. An extension to infinite metric spaces is
discussed briefly in Section~\ref{sec:conclusion}.}
\Weight{e}, an unknown target node $t$ is to be located.
When a vertex $q$ is queried,
the algorithm receives a correct response with probability 
$p > \Half$, and an incorrect one with probability $1 - p$.
\begin{itemize}
\item If the query is answered correctly,
the response is that $q$ is the target, or an edge $e$,
incident on $q$, which lies on a \emph{shortest} $q$-$t$ path.
If there are multiple such edges $e$, then the one revealed to the
algorithm is arbitrary, i.e.,
it is a correct answer, but could be chosen adversarially
among correct answers.
\item If the query is answered incorrectly,
the response is completely arbitrary, and chosen by an adversary.
Of course, the adversary may choose to actually give a correct answer.
\end{itemize}

This model subsumes target search on a tree,
where there is a \emph{unique} path from $q$ to $t$.
Therefore, algorithms for it can be seen as natural
generalizations of the fundamental \binarysearch algorithm
to arbitrary graphs (and hence finite metric spaces).
As a first result, we show (in Section~\ref{sec:deterministic}) that
in the absence of incorrect answers,
the positive results from the line (and tree)
generalize to arbitrary weighted undirected graphs:

\begin{theorem}\label{thm:undirected-deterministic-informal}
For $p = 1$, in any connected undirected graph
with positive edge weights, a target can be found
in at most $\log n$ queries in the worst case.
\end{theorem}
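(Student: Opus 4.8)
The plan is to lift the tree-separator strategy to arbitrary positively weighted graphs by always querying a \emph{weighted $1$-median} of the vertices still in play. Write $d(\cdot,\cdot)$ for the shortest-path metric of $G$. The algorithm maintains a set $S\subseteq V$ of vertices not yet ruled out as the target, starting from $S=V$, and preserves the invariant $t\in S$. While $\SetCard{S}>1$, it computes a $1$-median $q\in\argmin_{v\in V}\Phi(v)$, where $\Phi(v):=\sum_{u\in S}d(v,u)$, and queries $q$. If the answer is ``$q$ is the target'' --- which, since $p=1$, can happen only when $q=t$ --- it returns $q$. Otherwise the answer is an edge $e=(q,w)$ lying on some shortest $q$-$t$ path, and the algorithm sets $S\leftarrow S_e:=\SpecSet{v\in S}{d(q,v)=\Weight{e}+d(w,v)}$, i.e., it keeps exactly the candidates for which $e$ could be the first edge of a shortest path leaving $q$. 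Once $\SetCard{S}=1$, it outputs the unique remaining vertex.

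Two facts must be checked. First, \emph{soundness} (the target is never discarded): this is where $p=1$ enters --- the revealed edge $e=(q,w)$ genuinely lies on a shortest $q$-$t$ path, so $d(q,t)=\Weight{e}+d(w,t)$ and hence $t\in S_e$. Second, and this is the crux, \emph{progress}: for the $1$-median $q$ and \emph{every} edge $e=(q,w)$ incident to $q$ we have $\SetCard{S_e}\le\SetCard{S}/2$, so it is irrelevant which correct edge an adversary reveals. I would prove this by an exchange argument. Observe first that $q\notin S_e$, since $d(q,q)=0<\Weight{e}\le\Weight{e}+d(w,q)$. Now suppose $\SetCard{S_e}>\SetCard{S}/2$, i.e., $\SetCard{S_e}>\SetCard{S\setminus S_e}$. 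For $v\in S_e$ we have $d(w,v)=d(q,v)-\Weight{e}$ by definition of $S_e$, whereas for $v\in S\setminus S_e$ the triangle inequality gives $d(w,v)\le d(q,v)+\Weight{e}$. Summing over all $v\in S$ then yields
\begin{align*}
\Phi(w)\;\le\;\Phi(q)-\Weight{e}\bigl(\SetCard{S_e}-\SetCard{S\setminus S_e}\bigr)\;<\;\Phi(q),
\end{align*}
contradicting the choice of $q$ as a minimizer of $\Phi$; hence $\SetCard{S_e}\le\SetCard{S}/2$.

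Assembling the bound is then routine: every query either identifies the target outright or replaces $S$ by a subset of size at most $\Floor{\SetCard{S}/2}$ that still contains $t$, and the loop halts as soon as $\SetCard{S}=1$. Starting from $\SetCard{S}=n$, the number of queries is therefore at most $\Floor{\log n}\le\log n$.

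The step I expect to be the main obstacle is the progress claim. For unweighted trees one can simply invoke Jordan's centroid/separator theorem, but general positively weighted graphs admit no such off-the-shelf combinatorial separator, and one must moreover ensure that \emph{all} edges incident to the queried vertex are balancing, so that adversarial tie-breaking among correct answers cannot hurt. The point is that picking $q$ to minimize $\Phi$ and arguing via the triangle-inequality exchange above settles both issues simultaneously; everything else --- soundness of the update and the halving count --- is bookkeeping.
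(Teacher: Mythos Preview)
Your proposal is correct and essentially identical to the paper's own proof: both query a $1$-median of the surviving candidate set $S$ and use the same exchange/triangle-inequality comparison with the neighbor $w$ along the returned edge to show $\SetCard{S_e}\le\SetCard{S}/2$. Your additional remarks that $q\notin S_e$ and that the halving holds for \emph{every} incident edge (so adversarial tie-breaking is harmless) are nice explicit observations that the paper leaves implicit, but the core argument is the same.
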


In the noisy case as well, the positive results from the line
extend to arbitrary weighted undirected graphs,
albeit with a more complex algorithm and analysis. 
In Section~\ref{sec:noisy}, we prove the following theorem:

\begin{theorem}\label{thm:undirected-noisy-informal}
For any $p > \Half$
and any connected undirected graph with positive edge weights,
a target can be found with probability $1 - \ErrProb$
in $(1 - \ErrProb) \frac{\log n}{1 - H(p)} + o(\log n) + O(\log^2 (1/\ErrProb))$ 
queries in expectation.
($H(p) = -p \log p - (1-p)\log (1-p)$
is the entropy.)
\end{theorem}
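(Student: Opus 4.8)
The plan is to reduce the noisy search problem to a weighted ``target identification'' game that can be attacked with a multiplicative-weights (MW) update, and then to boost the success probability by repetition. First I would set up the following framework: maintain a probability distribution (weight vector) over the candidate vertices $V$, initialized to uniform. At each step, query a vertex $q$ chosen so that the answer partitions $V$ into informative pieces; specifically, for each candidate edge $e$ out of $q$, let $S_e$ be the set of vertices $v$ for which $e$ lies on a shortest $q$--$v$ path, and let $S_q = \{q\}$. The crucial structural fact, inherited from the deterministic analysis behind Theorem~\ref{thm:undirected-deterministic-informal}, is that a $1$-median (or a suitable weighted separator) $q$ of the current weight vector guarantees that every $S_e$ has weight at most half the total. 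Upon receiving answer $e$, multiply the weight of every vertex \emph{not} consistent with $e$ by $(1-p)$ and every consistent vertex by $p$ (a Bayesian update treating the channel as if it were honest); since a correct answer occurs with probability $p$, in expectation each step multiplies the target's relative weight up while shrinking the total, yielding an expected per-step gain of roughly $1 - H(p)$ bits toward isolating $t$. Iterating until one vertex carries almost all the weight gives the leading term $\frac{\log n}{1 - H(p)}$.

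Second, I would make the stopping rule and the error budget precise. Run the MW phase until the maximum-weight vertex $v^\star$ has relative weight at least $1 - \ErrProb'$ for a suitably chosen $\ErrProb' \ll \ErrProb$; a standard potential/martingale argument (tracking $\log(\mu(t)/\Gamma)$, the log of the target's share) shows this happens after $(1-\ErrProb)\frac{\log n}{1-H(p)} + o(\log n)$ queries in expectation, using a concentration bound (Azuma--Hoeffding on the bounded-increment martingale) to control deviations and to absorb lower-order terms into the $o(\log n)$. The additive $O(\log^2(1/\ErrProb))$ term comes from an outer verification/amplification loop: having produced a candidate $v^\star$, repeatedly query it $O(\log(1/\ErrProb))$ times and take a majority vote to confirm it is the target with confidence $1 - \ErrProb$; if the vote rejects, restart (or resume) the MW phase. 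Since each restart costs $O(\log(1/\ErrProb))$ and the expected number of restarts is $O(\log(1/\ErrProb))$, this contributes $O(\log^2(1/\ErrProb))$ in expectation, and the $(1-\ErrProb)$ factor on the main term reflects that we only need the leading-order work conditioned on the (probability $\geq 1-\ErrProb$) event of eventual success.

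The main obstacle I anticipate is handling the \emph{adversarial} nature of incorrect answers within the MW analysis. Unlike the path case, an adversary controlling both which correct edge is revealed (when the answer is correct) and the entirely arbitrary wrong answer (when it is incorrect) can try to steer the weight mass onto a wrong vertex. The key lemma to push through is that, regardless of the adversary's choices, the \emph{expected} log-weight of the true target relative to the total increases by at least $1 - H(p) - o(1)$ per query — this requires showing that the separator property ($\mathrm{weight}(S_e) \le \tfrac12 \Gamma$) makes even adversarially chosen answers ``costly'' to the adversary in the potential $\log(\mu(t)/\Gamma)$, because a wrong answer that boosts some $S_e$ must also have boosted the target with the ``correct'' complementary move with probability $p > 1-p$. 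Making this exchange argument tight enough to get the information-theoretically optimal constant $\frac{1}{1-H(p)}$ (rather than a loose $O(\log n)$) — and carefully choosing the threshold $\ErrProb'$, the number of repetitions, and the concentration parameters so that all error terms collapse into $o(\log n) + O(\log^2(1/\ErrProb))$ — is where the real work lies.
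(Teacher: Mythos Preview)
Your high-level framework is right and matches the paper: multiplicative weights over the candidate set, query a weighted $1$-median, update weights by $p$ or $1-p$ according to consistency, and track the potential $\log(\mu(t)/\Gamma)$. The structural fact you cite (every $S_e$ has at most half the total weight) is exactly the paper's Lemma~\ref{lem:weighted-median-half}.

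However, there is a genuine gap. You define $S_q = \{q\}$ as one of the possible responses but never address the case where $\mu(q) \geq \tfrac12 \Gamma$. In that regime your martingale argument actually fails: if $q \neq t$ and $\mu(q)/\Gamma$ is close to $1$ (but still below your stopping threshold $1-\delta'$), the adversary can, on each incorrect round, answer ``$q$ is the target.'' Then $\mu(t)$ is multiplied by $1-p$ while $\Gamma$ is multiplied by roughly $p$, so the increment in $\log(\mu(t)/\Gamma)$ is about $\log((1-p)/p) < 0$; a short computation shows the \emph{expected} per-step increment becomes negative for $p$ close to $\tfrac12$. So the process can stall indefinitely with a wrong vertex holding most of the mass. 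The paper's fix is precisely the ingredient you are missing: whenever a node reaches weight $\geq \tfrac12 \Gamma$, it is \emph{marked} and its weight is reset to $0$, which restores the halving property deterministically in every step. The paper then runs two MW phases (first on $V$, then on the at most $O(\log n)$ marked nodes) and a final brute-force verification of the $O(\log\log n)$ survivors; this is what collapses the extra cost into $o(\log n)$ rather than $\Theta(\log n \log\log n)$.

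Your derivation of the $(1-\delta)$ factor is also off. It does not come from ``only needing work on the success event''; the algorithm does the work regardless. The paper obtains it by a separate gambling trick: when $\delta \geq 1/\log n$, output an arbitrary vertex with probability $\delta - 1/\log n$ and otherwise run the full algorithm with error budget $1/\log n$. This shaves a $(1-\delta)$ factor off the expected query count while keeping the success probability at $1-\delta$. Similarly, the $O(\log^2(1/\delta))$ term in the paper arises from the choice of the Hoeffding slack $\lambda$ in the MW phase, not from a restart loop as you suggest (your claim that the expected number of restarts is $O(\log(1/\delta))$ is unsupported).
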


The bound in Theorem~\ref{thm:undirected-deterministic-informal}
is tight even for the line.
Notice also that randomization cannot help:
it is easy to see that finding a target
selected uniformly at random from the line takes $\Omega(\log n)$
queries in expectation, and Yao's Minimax Principle therefore
implies the same lower bound for any randomized algorithm.
Moreover, as pointed out
by Ben-Or and Hasidim~\cite{BenOr-hassidim:2008:noisy},
the bound in Theorem~\ref{thm:undirected-noisy-informal}
is tight up to the $o(\log n) + O(\log^2 (1/\ErrProb))$ 
term even for the line.

\mySubsection{Hardness Results}

A natural optimization version of the problem is to ask,
given an instance, what is the minimum number of queries
necessary in the worst case
to find a target; or, as a decision problem, whether 
--- given a graph and target number \TOTQ of queries --- an
unknown target can be found within \TOTQ queries.
For trees, a sequence of papers discussed in detail
below~\cite{iyer-ratliff-vijayan:1988:node-ranking,%
schaffer:1989:node-ranking-linear,%
onak-parys:2006:trees-vertex-linear}
establishes that this problem can be solved in linear time.

For general graphs, even in the special case $p = 1$, 
the complexity of the problem
and some of its variants is quite intriguing.
The upper bound from
Theorem~\ref{thm:undirected-deterministic-informal} 
implies a simple $O(m^{\log n} \cdot n^2 \log n)$-time
algorithm using exhaustive search, thus making it unlikely that the
problem is \Hardness{NP}{hard}.
(The reason is that the exhaustive search tree's height
will be no more than $\log n$.)
However, we show in Section~\ref{sec:negative}
that the given \quasip time is essentially
the limit for the running time:
an $O(m^{o(\log n)})$-time algorithm would contradict the
Exponential-Time Hypothesis (ETH),
and
an $O(m^{(1 - \epsilon) \log n})$-time algorithm
for any constant $\epsilon > 0$ would contradict the
Strong Exponential-Time Hypothesis (SETH).

A natural further generalization 
is to make the queries \emph{semi-adaptive}:
in each of \NumRound rounds, the algorithm gets to query
\EachRound nodes.
We consider this model with $p = 1$.
It is immediate that for $\NumRound = 1$, the problem is
NP-complete.
For constant $\NumRound > 2$, we show (in Section~\ref{sec:negative})
that the decision problem is in  $\Sigma_{2\NumRound - 1}$, the
\Ordinal{(2\NumRound - 1)} level of the polynomial hierarchy.
We give a nearly matching lower bound, by showing that it is also
$\Sigma_{2\NumRound - 5}$-hard.
Thus, semi-adaptive \binarysearch on undirected graphs
defines a natural class of problems
tracing out the polynomial hierarchy.

Even the fully adaptive version with $p = 1$ can easily become
computationally intractable with small modifications.
Specifically, we show in Section~\ref{sec:negative}
that the problem is PSPACE-complete for directed graphs\footnote{%
Note that for a directed cycle, $n - 1$ queries are necessary in
the worst case, as answers to a query
to a non-target node reveal nothing about the target.}
and also for undirected graphs in which nodes may have non-uniform
query costs, 
and the target is to be identified within a given budget.
All the hardness proofs are similar to each other,
and they are presented together.

\mySubsection{Edge Queries on Trees}

In Appendix~\ref{sec:edge-queries},
we turn our attention to a more frequently studied 
variant of the problem on trees,
in which queries are posed to edges instead of vertices. 
The algorithm can query an edge $e = (u, v)$, and
the response reveals whether the target is in the subtree
rooted at $u$ or at $v$. (We assume here that $p = 1$.)
This provides less information than querying one of the
two vertices, and the difference can be significant
when vertices have high degrees.

The edge query variant is considered in several recent papers,
which have focused on the algorithmic question of finding the optimal
adaptive strategy which minimizes the worst-case number of queries
\cite{BenAsher-farchi-newman:1999:trees-edge-poly,%
onak-parys:2006:trees-vertex-linear,%
mozes-onak-weimann:2008:trees-edge-linear}.
Upper and lower bounds on the maximum number of queries required
in trees of maximum degree \MAXDEGREE had been shown by 
Ben-Asher and Farchi to be 
$\log_{\MAXDEGREE/(\MAXDEGREE-1)} (n) \in \Theta(\MAXDEGREE \log n)$
and $\frac{\MAXDEGREE-1}{\log \MAXDEGREE} \log n$, 
respectively \cite{BenAsher-farchi:1997:trees}.
While the authors call these bounds ``matching,'' there is a gap of
$\Theta(\log \MAXDEGREE)$ between them.
In Appendix~\ref{sec:edge-queries}, we present an
improved algorithm for the problem, which uses at most 
$1 + \frac{\MAXDEGREE - 1}{\log (\MAXDEGREE + 1) - 1} \log n$
queries in the worst case, thus practically matching the
lower bound of \cite{BenAsher-farchi:1997:trees}.

\mySubsection{Related Work}

The algorithmic problem of identifying a target by \emph{edge queries}
in DAGs was initiated by Linial and Saks~\cite{linial-saks:1985:searching},
who studied it for several specific classes of graphs.
In general, the problem is known to be
\Hardness{NP}{hard}~\cite{carmo-donadelli-kohayakawa-laber:2004:poset%
,dereniowski:2008:edge-ranking}.
As a result, several papers have studied it specifically on trees. 

On trees, both the vertex and the edge query questions
(when $p = 1$) are equivalent 
to the respective vertex and edge ranking problems
(see, e.g.,
\cite{iyer-ratliff-vijayan:1988:node-ranking,%
schaffer:1989:node-ranking-linear,%
lam-yue:2001:edge-ranking-linear}). 
The vertex ranking problem requires that each vertex be assigned
a label $\ell_v$ such that if two vertices $u, v$
have the same label $\ell$,
then there is some vertex $w$ on the $u$-$v$ path with
strictly larger label.
The labels can be interpreted as the order in
which vertices will be queried (decreasingly). 
For the vertex ranking problem, 
Iyer, Ratliff, and Vijayan~\cite{iyer-ratliff-vijayan:1988:node-ranking}
gave an $O(n \log n)$ algorithm,
which was improved to linear time by Sch\"{a}ffer
\cite{schaffer:1989:node-ranking-linear}. 
Similarly, the edge ranking problem was solved for trees in linear
time by Lam and Yue~\cite{lam-yue:2001:edge-ranking-linear}.
Lam and Yue~\cite{lam-yue:1998:edge-ranking-hard}
also established that for general graphs, the problem is \Hardness{NP}{hard}.

Following the work of
Sch\"{a}ffer~\cite{schaffer:1989:node-ranking-linear} 
and Lam and Yue~\cite{lam-yue:2001:edge-ranking-linear},
linear-time algorithms for both versions (the vertex and edge query
models in trees) were rediscovered by
Onak and Parys~\cite{onak-parys:2006:trees-vertex-linear} and 
Mozes, Onak, and Weimann \cite{mozes-onak-weimann:2008:trees-edge-linear}.
The former paper gives an $O(n^3)$ algorithm for the edge query model 
(improving on the $O(n^4 \log^3 n)$ algorithm of Ben-Asher,
Farchi, and Newman~\cite{BenAsher-farchi-newman:1999:trees-edge-poly}),
while also providing a linear-time algorithm for the vertex query model.
The $O(n^3)$ algorithm was subsequently improved to linear time in
\cite{mozes-onak-weimann:2008:trees-edge-linear}.

Edge ranking and vertex ranking
elegantly encode querying strategies
for trees; however, they crucially exploit the fact that
the different possible responses to a vertex or edge query
identify disjoint components.
For general graphs, the resulting structures are not as simple.
Indeed, it is known that for general graphs,
finding an optimal edge ranking is
\Hardness{NP}{hard}~\cite{lam-yue:1998:edge-ranking-hard}.
More evidence of the hardness is provided by our
\Hardness{PSPACE}{completeness} result.

A natural generalization of the problem is
to introduce non-uniform costs on the vertices or edges. 
This problem is studied for lines and trees by Laber, Milidi\'{u}
and Pessoa~\cite{laber-milidiu-pessoa:2001:binary-search-cost} 
and Cicalese
et al.~\cite{cicalese-jacobs-laber-valentim:2012:tree-edge-cost},
respectively (among others).
In particular, Cicalese
et al.~\cite{cicalese-jacobs-laber-valentim:2012:tree-edge-cost}
show that with edge query costs,
the problem is \Hardness{NP}{hard} when the diameter of the tree
is at least $6$ and the maximum degree is at least $3$.
On the other hand,
\cite{cicalese-jacobs-laber-valentim:2012:tree-edge-cost}
presents a polynomial-time algorithm for the case
where the diameter of the given tree is at most $5$. 

An even more significant departure from the model we consider is
to assume that a probability distribution is given over target
locations, instead of the worst-case assumption made here.
The goal then typically is to minimize the expected number
(or total cost) of queries. 
This type of model is studied in several papers,
e.g., \cite{laber-milidiu-pessoa:2001:binary-search-cost,%
cicalese-jacobs-laber-molinaro:2011:tree-edge-average}. 
The techniques and results are significantly different from ours.

Noisy \binarysearch is a special case of a classical
question originally posed by R\'{e}nyi~\cite{renyi:1961:problem}, 
and subsequently restated by
Ulam~\cite{ulam:1991:adventures-mathematician}
as a game between two parties: one party asks questions,
and the other replies through a noisy channel
(thus lying occasionally).
Several models for noise are examined in detail
in \cite{pelc:2002:games-errors}. Rivest
et al.~\cite{rivest-meyer-kleitman-Winklmann-Spencer:1980:coping-error} study a model in which
the number of queries the adversary can answer
incorrectly is a constant (or a function of $n$,
but independent of the total number of queries).
When the adversary can answer at most a constant
fraction $r$ of queries incorrectly,
Dhagat et al.~\cite{dhagat-gacs-winkler:1992:twenty-question-lier},
proved the impossibility of finding the target for
$r \geq \frac{1}{3}$, and provided an $O(\log n)$ algorithm
when $r < 1/3$.
Pedrotti~\cite{pedrotti:1999:costant-rate-lies}
further improved the constant factors in this bound.
A modified version of this model, called the
\emph{Prefix-Bounded Model}, states that for every $i$,
at most an $r$ fraction of the initial $i$ query responses
may be incorrect.
In this model, Pelc~\cite{pelc:1989:error}
(using different terminology)
gives an $O(\log n)$ upper bound on the number of the queries
for every $r < \frac{1}{3}$.
Borgstrom and Kosaraju~\cite{borgstrom-kosaraju:1993:comparison-error}
generalize this bound to any $r < \Half$.
Aslam~\cite{aslam:1995:noisy-learning-searching}
also studies the Prefix-Bounded Model and
provides a reduction from the independent noise model to it.

The \NumRound-batch model, which is
similar to our notion of semi-adaptive querying,
is a different model in which the questioner can ask multiple
non-adaptive questions in a series of \NumRound batches.
Cicalese et al.~\cite{cicalese-mundici-vaccaro:2002:semi-adaptive}
consider the $2$-batch model,
under the assumption that at most a given number of query
responses are incorrect.

The independent noise model (used in our paper) is analyzed by
Feige et al.~\cite{feige-raghavan-peleg-upfal:1994:noisy}.
They use additional queries to backtrack in a search tree; repeated
queries to a vertex are used to obtain improved error probabilities
for queries.
They obtain a $\Theta(\log n)$ bound on the number of
queries with a large (non-optimal) constant depending on $p$.
Karp and Kleinberg~\cite{karp-kleinberg:2007:noisy}
consider the problem to find,
in an array of coins sorted by increasing probability of ``heads,''
the one maximizing the probability of ``heads'' subject to not
exceeding a given target. They show that this problem shares a lot
of similarities with noisy \binarysearch\footnote{%
In fact, this is a more general model
than noisy \binarysearch on a path.},
and use similar techniques
for positive results; they also provide information-theoretic lower bounds.
Their information theoretic lower bounds for the problem are matched
by Ben-Or and Hassidim \cite{BenOr-hassidim:2008:noisy},
whose noise model is the same  as ours. 
Ben-Or and Hassidim further generalize their algorithm to
achieve improved bounds on the query-complexity
of quantum \binarysearch.

Horstein~\cite{horstein:1963:noiseless-feedback}
provides another motivation for
noisy \binarysearch (on the continuous line):
a sender wants to share a (real) number with a receiver
over a noisy binary channel with a perfect feedback channel.
That is, the receiver receives each bit incorrectly
with probability $p$, but the sender knows which bit was received. 
Horstein~\cite{horstein:1963:noiseless-feedback}
analyzes noisy \binarysearch as an
algorithm under this model with a prior on the number to be transmitted:
the sender always sends the bit corresponding to the comparison
between the number and the median according to
the posterior distribution of the receiver. 
Jedynak et al.~\cite{jedynak-frazier-sznitman:2012:noise-entropy-loss}
show that this algorithm is optimal in the sense of
minimizing the entropy of the posterior distribution, while
Waeber et al.~\cite{waeber-frazier-henderson:2013:noisy-responses}
show that the expected value of the distance between this
algorithm's outcome and the target point decreases exponentially
in the number of the queries.

Noisy \binarysearch has been a significant area of study in
information theory and machine learning, where it has often been
rephrased as ``Active Bayesian Learning.''
Its uses can be traced back to early work on
parameter estimation in statistics, e.g., work by
Farrell~\cite{farrell:1964:sample-size} and
Burnashev and Zigangirov~\cite{burnashev-zigangirov:1974:estimation}.
It often takes the form of a search for the correct function
in a function space that maps a given instance space to an image space.
The initial algorithm for the noisy version with image space
$\Set{-1, +1}$ was proposed by
Nowak in \cite{nowak:2009:noisy-binary-search}.
Its running time is $O(\log\SetCard{H})$,
where $H$ is the hypothesis/function space.
A variation of the model was studied by
Yan et al.~\cite{yan-chaudhuri-javidi:2015:learning-noisy}.
Here, the instance space, image space and hypothesis space are,
respectively, $[0, 1]$, $\Set{0,1}$ and
$\SpecSet{f_\theta : [0,1] \rightarrow \Set{0,1}}{\theta \in [0,1],
f_\theta(x) = 1 \textrm{ if } x \geq \theta \textrm{, } 0 \textrm{ otherwise}}$.
They provide an $O(k^{-c})$ error bound,
where $k$ is the number of queries performed,
and $c$ is a constant determined by other parameters
defining the model. The goal is to approximate
the correct value of $\theta$, with the constraint that
the oracle might lie or abstain from providing an answer.
Both abstentions and lies might became more frequent as
the query points get closer to the actual threshold.
For general image spaces,
Naghshvar et al.
in \cite{naghshvar-javidi-chaudhuri:2015:active-learning-noise}
also obtain an $O(\log\SetCard{H})$ bound,
with improved constants for the special
case presented in \cite{nowak:2009:noisy-binary-search}.

\mySubsection{Notation and Preliminaries}

The algorithm is given a positively weighted connected undirected 
(or directed and strongly connected) graph $G = (V, E)$
with $\SetCard{V} = n$ and $\SetCard{E} = m$.
Edge weights are denoted by $\Weight{e} > 0$.
The distance $d(u, v)$ between vertices $u, v$ is the length of 
a shortest $u$-$v$ path with respect to the $\Weight{e}$.
For undirected graphs, $d(u, v) = d(v, u)$.

For a vertex $u$ and edge $e = (u, v)$ incident on $u$ 
(out of $u$ for directed graphs), we denote by
$\Reach{u}{e} = \SpecSet{w \in V}{d(u, w) = \Weight{e} + d(v, w)}$
the set of all vertices $w$ for which
$e$ lies on a shortest $u$-$w$ path.
Let $t$ be the (unknown) target vertex.
The guarantee made to the algorithm is that
when it queries a vertex $q$ 
which is not the target, with probability $p$, it will be
given an edge $e$ out of $q$ such that $t \in \Reach{q}{e}$.
Since $G$ is assumed (strongly) connected, such an answer
will always exist; we reiterate here that if
multiple edges $e$ exist such that $t \in \Reach{q}{e}$,
an arbitrary one (possibly chosen adversarially)
will be returned.
If $q$ is the target, then with probability $p$, this fact is 
revealed to the algorithm.
In both cases, with the remaining probability $1 - p$,
the algorithm is given an arbitrary (adversarial) answer,
which includes the possibility of the correct answer.



\section{Deterministic Model}\label{sec:deterministic}

We first analyze the case $p = 1$, i.e.,
when responses to queries are noise-free.
We establish a tight logarithmic
upper bound on the number of queries required,
which was previously known
only for trees \cite{onak-parys:2006:trees-vertex-linear}.

\begin{theorem}\label{thm:undirected-weak}
There exists an adaptive querying strategy with the following property:
given any undirected,
connected and positively weighted graph $G$
with an unknown target vertex $t$,  
the strategy will find $t$ using at most $\log n$ queries.
(This bound is tight even when the graph is the line.)
\end{theorem}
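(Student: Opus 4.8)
The plan is to mimic the tree strategy: repeatedly query a vertex that is ``central'' with respect to the set of vertices still consistent with all answers received so far, and argue that each query shrinks that set by at least a factor of two. The key definitions I would set up are the following. At any point in the execution, let $S \subseteq V$ be the set of vertices that are still \emph{candidates} for being the target (initially $S = V$); when we query $q$ and receive an edge $e$ out of $q$, we replace $S$ by $S \cap \Reach{q}{e}$, and when we receive ``$q$ is the target'' we are done. For the notion of centrality, I would use a \emph{weighted $1$-median of $S$ in $G$}: a vertex $v$ (not necessarily in $S$) minimizing $\sum_{w \in S} d(v,w)$. The crux is a structural claim: if $v$ is a $1$-median of $S$, then for every edge $e$ incident on $v$, the set $\Reach{v}{e} \cap S$ contains at most $\SetCard{S}/2$ vertices. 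Given that claim, since the answer $e$ to querying $v$ satisfies $t \in \Reach{v}{e}$, the candidate set $S$ is at least halved by each query, so after at most $\log n$ queries $\SetCard{S} = 1$ and the last query (or a direct query to that vertex) reveals $t$. One small wrinkle: I must handle the case $v \notin S$ versus $v \in S$ — if $v \in S$ and the answer is ``target,'' we stop; otherwise $v$ is eliminated as well, which only helps. Strictly one should also verify that $t$ stays in $S$ throughout, which is immediate since every answer received (being correct, as $p=1$) is an edge $e$ with $t \in \Reach{q}{e}$.

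The heart of the argument is the halving claim, and I expect that to be the main obstacle. Here is the approach I would take. Fix a $1$-median $v$ of $S$ and an incident edge $e = (v,u)$. Suppose for contradiction that $\SetCard{\Reach{v}{e} \cap S} > \SetCard{S}/2$. The idea is that moving from $v$ to $u$ decreases the sum of distances to the ``many'' vertices in $\Reach{v}{e} \cap S$ and increases it to the ``few'' others, contradicting optimality of $v$. Concretely, for $w \in \Reach{v}{e} \cap S$ we have $d(u,w) = d(v,w) - \Weight{e}$ by definition of $\Reach{v}{e}$, while for any $w \in V$ we always have $d(u,w) \le d(v,w) + \Weight{e}$ by the triangle inequality. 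Hence
\[
\sum_{w \in S} d(u,w) \;\le\; \sum_{w \in S} d(v,w) \;-\; \Weight{e} \cdot \SetCard{\Reach{v}{e} \cap S} \;+\; \Weight{e} \cdot \bigl(\SetCard{S} - \SetCard{\Reach{v}{e} \cap S}\bigr),
\]
and if $\SetCard{\Reach{v}{e} \cap S} > \SetCard{S}/2$ the right-hand side is strictly less than $\sum_{w \in S} d(v,w)$, contradicting that $v$ is a $1$-median. This shows $\SetCard{\Reach{v}{e} \cap S} \le \SetCard{S}/2$ for every incident edge $e$, which is exactly what we need. I should double-check the edge case $\SetCard{S} = 1$ (then $v$ itself may be taken to be the unique candidate and we simply query it), and note that equality $\SetCard{\Reach{v}{e}\cap S} = \SetCard{S}/2$ is fine since we only need the set to at most halve.

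Finally I would assemble these pieces into the querying strategy and a clean induction: the strategy computes a $1$-median $v$ of the current candidate set $S$, queries it, and updates $S$; by the halving claim the size of $S$ drops from $\SetCard{S}$ to at most $\Floor{\SetCard{S}/2}$ with each query (or we terminate early), so starting from $\SetCard{S} = n$ we reach $\SetCard{S} = 1$ within $\Ceil{\log n}$ queries — and with a slightly more careful accounting (the final singleton set lets us identify $t$ without an extra query, or we observe that once $\SetCard{S}\le 1$ we are done) one gets the stated bound of $\log n$. For tightness, I would invoke the standard information-theoretic lower bound: on the line with $n$ vertices, each query has at most two informative outcomes (``left,'' ``right,'' or ``target''), so an adversary maintaining a candidate interval can force $\log n$ queries; this matches the upper bound and shows it cannot be improved, even with randomization, by Yao's principle as noted in the introduction.
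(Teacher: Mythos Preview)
Your proposal is correct and essentially identical to the paper's proof: both query a $1$-median of the current candidate set $S$ and use the same distance-shift inequality (decrease by $\Weight{e}$ on $\Reach{q}{e}\cap S$, increase by at most $\Weight{e}$ elsewhere) together with optimality of the median to conclude $\SetCard{\Reach{q}{e}\cap S}\le\SetCard{S}/2$. The only cosmetic difference is that you phrase the halving claim as a proof by contradiction, whereas the paper derives $\SetCard{S^+}\le\SetCard{S^-}$ directly from $\Phi_S(v)\ge\Phi_S(q)$.
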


\begin{proof}
In each iteration, 
based on the responses the algorithm has received so far,
there will be a set $S \subseteq V$
of \emph{candidate} nodes remaining
one of which is the target.
The strategy is to always query a vertex $q$ minimizing
the sum of distances to vertices in $S$, i.e., a $1$-median of $S$
(with ties broken arbitrarily).
Notice that $q$ itself may not lie in $S$.

More formally, for any set $S \subseteq V$ and vertex $u \in V$,
we define $\Potential{S}{u} = \sum_{v \in S} d(u, v)$.
The algorithm is given as Algorithm~\ref{alg:undirected-weak-upper}.

\InsertAlgorithm{Target Search for Undirected Graphs}{alg:undirected-weak-upper}{
\STATE{$S \AlgAssign V$.}
\WHILE{$\SetCard{S} > 1$}
\STATE{$q \AlgAssign \text{a vertex minimizing } \Potential{S}{q}$. \label{line:potential}}
\IF{$q$ is the target}
\RETURN{$q$}.
\ELSE
\STATE{$e = (q, v) \AlgAssign \text{response to the query}$.\label{line:response}} 
\STATE{$S \AlgAssign S \cap \Reach{q}{e}$.}\label{line:undirected-new-S}
\ENDIF
\ENDWHILE
\RETURN{the only vertex in $S$.}
}

First, note that \Potential{S}{q} and
\Reach{q}{e} can be computed using Dijkstra's algorithm,
and $q$ can be found by exhaustive search in linear time,
so the entire algorithm takes polynomial time.
We claim that it uses at most $\log n$ queries
to find the target $t$. To see this, consider an iteration
in which $t$ was not found;
let $e = (q, v)$ be the response to the query.
We write $S^{+} = S \cap \Reach{q}{e}$, and $S^{-} = S \setminus S^{+}$.
By definition, the edge $e$ lies on a shortest $q$-$u$ path for all 
$u \in S^{+}$, so that $d(v, u) = d(q, u) - \Weight{e}$
for all $u \in S^{+}$.
Furthermore, for all $u \in S^{-}$, the shortest path from $v$ to $u$
can be no longer than those going through $q$, so that
$d(v, u) \leq d(q, u) + \Weight{e}$ for all $u \in S^{-}$.
Thus, $\Potential{S}{v} \leq
\Potential{S}{q} - \Weight{e} \cdot \big(\SetCard{S^+} - \SetCard{S^-} \big)$.
By minimality of $\Potential{S}{q}$,
it follows that $\SetCard{S^{+}} \leq \SetCard{S^{-}}$, 
so $\SetCard{S^{+}} \leq \Half[\SetCard{S}]$. Consequently,
the algorithm takes at most $\log n$ queries.
\end{proof}

Notice that Theorem~\ref{thm:undirected-weak} implies a
\quasip time algorithm for finding an optimal adaptive strategy
for a given undirected graph.
Because an optimal strategy can use at most $\log n$ queries,
an exhaustive search over all possible vertices $q$ to query at each stage,
and all possible edges that could be given as responses
(of which there are at most as many as the degree of $q$)
will require at most $O(m^{\log n} n^2 \log n)$ time.
In particular, the decision problem of determining whether,
given an undirected graph $G$ and an integer \TOTQ,
there is an adaptive strategy using at most \TOTQ queries, 
cannot be \Hardness{NP}{hard} unless
$\Complexity{NP} \subseteq \QuasiDTIME$.
We later show hardness results for this problem
based on two hypotheses: ETH and SETH.

A natural question is how much the logarithmic bound
of Theorem~\ref{thm:undirected-weak} depends on our assumptions.
In Appendix~\ref{sec:almost-undirected}, we generalize this theorem
to strongly connected graphs which are \emph{almost undirected}
in the following sense: each edge $e$ with weight \Weight{e} belongs
to a cycle of total weight at most $\MAXCYCLE \cdot \Weight{e}$,
for a constant \MAXCYCLE.



\section{Noisy Model}\label{sec:noisy}

When $\Half < p < 1$, i.e., each response is incorrect
independently with probability $1 - p$, our goal is
to find the target with probability at least $1 - \ErrProb$,
for a given \ErrProb, minimizing the expected number of queries
in the worst case. (When $p \leq \Half$, even for the path,
it is impossible to get any useful information
about the target with any number of queries.%
\footnote{Recall that an ``incorrect'' query response
is adversarial; as the adversary may also respond correctly,
an algorithm cannot simply flip all answers.})

First, note that for the path (or more generally, trees),
a simple idea leads to an algorithm using
$c \log n \log \log n$ queries
(where $c$ is a constant depending on $p$):
simulate the deterministic algorithm,
but repeat each query $c \log \log n$ times and use the
majority outcome. A straightforward analysis
shows that with a sufficiently large $c$,
this finds the target with high probability.
For general graphs, however, this idea fails.
The reason is that if there are multiple
\emph{correct} answers for the query,
then there may be no majority among the answers,
and the algorithm cannot infer which answers are correct.

The $c \log n \log \log n$ bound for a path is not
information-theoretically optimal.
A sequence of papers
\cite{feige-raghavan-peleg-upfal:1994:noisy,%
karp-kleinberg:2007:noisy,%
BenOr-hassidim:2008:noisy}
improved the upper bound to $O(\log n)$ for simple paths.
More specifically,
Ben-Or and Hassidim~\cite{BenOr-hassidim:2008:noisy}
give an algorithm that
finds the target with probability $1 - \ErrProb$, using
$\frac{(1 - \ErrProb) \log n}{\Capacity(p)}
+ o(\log n) + O(\log(1/\ErrProb))$ queries,
where $\Capacity(p) = 1 - H(p) = 1 + p \log p + (1 - p)\log(1 - p)$.
As pointed out in \cite{BenOr-hassidim:2008:noisy},
this bound is information-theoretically optimal up to
the $o(\log n) + O(\log(1/\ErrProb))$ term.

In this section, we extend this result to
general connected undirected graphs.
At a high level, the idea of the algorithm is to keep track of
each node's \emph{likelihood} of being the target, based on
the responses to queries so far. 
Generalizing the idea of iteratively querying a median node from
Section~\ref{sec:deterministic}, we now iteratively query a
\emph{weighted} median, where the likelihoods are used as weights.
Intuitively, this should ensure that the total weight of
non-target nodes decreases exponentially faster
than the target's weight. In this sense, the algorithm
shares a lot of commonalities with the multiplicative weights update
algorithm, an idea that is also prominent in
\cite{karp-kleinberg:2007:noisy,BenOr-hassidim:2008:noisy}, for example.

The high-level idea runs into difficulty when there is one node that
accumulates at least half of the total weight. 
In this case, such a node will be the queried weighted median,
and the algorithm cannot ensure a sufficient decrease in the total weight.
On the other hand, such a high-likelihood node is
an excellent candidate for the target. Therefore,
our algorithm \emph{marks} such nodes
for ``subsequent closer inspection,''
and then removes them from the
multiplicative weights algorithm by setting their weight to $0$.

The key lemma shows that with high probability,
within $\Theta(\log n)$ rounds, the target node
has been marked. Therefore, the algorithm could now identify
the target by querying each
of the $O(\log n)$ marked nodes $\Theta(\log \log n)$ times,
and then keeping the node that was identified
as the target most frequently.

However, because there could be up to
$\Theta(\log n)$ marked nodes, this could still take
$\Theta(\log n \log \log n)$ rounds.
Instead, we run a
second phase of the multiplicative weights algorithm,
starting with weights only on the nodes that had been previously marked.
Akin to the first phase, we can show that with high probability,
the target is among the marked nodes in the first
$\Theta(\log \log n)$ rounds of the second phase.
Among the at most $\Theta(\log \log n)$ resulting candidates,
the target can now be identified with high probability by
querying each node sufficiently frequently. 
Because there are only $\Theta(\log \log n)$ nodes remaining,
this final round takes time only $O((\log \log n)^2)$.

Given a function
$\NodeWeight : V \rightarrow \R_{\geq 0}$,
for every vertex $u$, we define
$\WPotential[\NodeWeight]{u} =
\sum_{v \in V} \NodeWeight[v] d(u, v)$
as the weighted sum of distances
from $u$ to other vertices.
The weighted median is then
$\argmin_{u \in V} \WPotential[\NodeWeight]{u}$
(ties broken arbitrarily).
For any $S \subseteq V$, let 
$\SumWeight[\NodeWeight]{S} = \sum_{v \in S} \NodeWeight[v]$ 
be the sum of weights for the vertices in $S$.
The following lemma, proved by exactly the same arguments as
we used in Section~\ref{sec:deterministic},
captures the key property of the weighted median.

\begin{lemma} \label{lem:weighted-median-half}
Let $u$ be a weighted median
with respect to \NodeWeight
and $e = (u, v)$ an edge incident on $u$.
Then,
$\SumWeight[\NodeWeight]{\Reach{u}{e}} 
\leq \SumWeight[\NodeWeight]{V} / 2$.
\end{lemma}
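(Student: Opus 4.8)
The plan is to mirror the argument used in the proof of Theorem~\ref{thm:undirected-weak}, but with the weights \NodeWeight in place of the indicator of membership in $S$. Write $R^{+} = \Reach{u}{e}$ and $R^{-} = V \setminus R^{+}$. The key observation is exactly the one used before: for every $w \in R^{+}$, the edge $e$ lies on a shortest $u$-$w$ path, so $d(v, w) = d(u, w) - \Weight{e}$; and for every $w \in R^{-}$, the triangle inequality along the edge $e$ gives $d(v, w) \leq d(u, w) + \Weight{e}$. (Here I use that $G$ is undirected, so traversing $e$ from $v$ to $u$ costs \Weight{e}.)

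Next I would combine these two estimates weighted by $\NodeWeight[w]$ and sum over all $w \in V$:
\[
\WPotential[\NodeWeight]{v}
\;=\; \sum_{w \in R^{+}} \NodeWeight[w]\, d(v,w) + \sum_{w \in R^{-}} \NodeWeight[w]\, d(v,w)
\;\leq\; \WPotential[\NodeWeight]{u} - \Weight{e}\Big(\SumWeight[\NodeWeight]{R^{+}} - \SumWeight[\NodeWeight]{R^{-}}\Big).
\]
Since $u$ is a weighted median, it minimizes $\WPotential[\NodeWeight]{\cdot}$, so $\WPotential[\NodeWeight]{v} \geq \WPotential[\NodeWeight]{u}$. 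Because $\Weight{e} > 0$, this forces $\SumWeight[\NodeWeight]{R^{+}} \leq \SumWeight[\NodeWeight]{R^{-}}$. Finally, $\SumWeight[\NodeWeight]{R^{+}} + \SumWeight[\NodeWeight]{R^{-}} = \SumWeight[\NodeWeight]{V}$, so $\SumWeight[\NodeWeight]{R^{+}} \leq \SumWeight[\NodeWeight]{V}/2$, which is the claim.

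There is no genuine obstacle here; the lemma is essentially a restatement of the inequality already established for the deterministic case, with nonnegative weights instead of $0/1$ indicators, and the same proof goes through verbatim. The only point worth a sentence of care is that vertices $w$ with $\NodeWeight[w] = 0$ contribute nothing to either side, so the argument is insensitive to which vertices have been zeroed out (e.g.\ marked-and-removed nodes), and that the inequality for $R^{-}$ only needs the ``$\leq$'' direction of the triangle inequality, which holds for any edge $e$ incident on $u$ regardless of whether $e$ is a shortest-path edge toward those vertices. I would therefore keep the write-up to just the two displayed estimates, the median-minimality step, and the one-line conclusion.
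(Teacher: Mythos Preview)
Your proof is correct and follows exactly the approach the paper intends: it is the weighted version of the argument in the proof of Theorem~\ref{thm:undirected-weak}, replacing the cardinalities $|S^{+}|, |S^{-}|$ by the weighted sums $\SumWeight[\NodeWeight]{R^{+}}, \SumWeight[\NodeWeight]{R^{-}}$. The paper itself does not spell this out, noting only that the lemma is ``proved by exactly the same arguments as we used in Section~\ref{sec:deterministic}.''
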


We now specify the algorithm formally.
Algorithm~\ref{alg:mult-weights} encapsulates the
multiplicative weights procedure,
and Algorithm~\ref{alg:prob-binary} shows how to combine
the different pieces.
Compared to the high-level outline above, the target bounds of
$\Theta(\log n)$ and $\Theta(\log \log n)$ are modified
somewhat in Algorithm~\ref{alg:prob-binary}, mostly to account
for the case when $\ErrProb$ is very small.\Eat{
In fact, as it will be clear momentarily,
this algorithm is designed only for $\ErrProb \leq 1/\log n$.
We will explain our trick for larger values of \ErrProb
in the proof of Theorem~\ref{thm:noisy}.}

\Eat{ 
We are using a Wikipedia version of Hoeffding bound
in this section which is
$\Pr[X \geq \mu + \Tolerance] \leq
e^{-2 \Tolerance^2 \TOTQ}$.
This implies that having \TOTQ coins, each of them head
with probability $p$, the probability that
less than $p - \Tolerance$ fraction of them are head is
at most \ErrProb if
$\TOTQ > \ln(1/\ErrProb) \frac{1}{2 \Tolerance^2}$.}

\InsertAlgorithm{\MultiWeights ($S \subseteq V, \TOTQ$)}{alg:mult-weights}{
\STATE{$\NodeWeight[v] \AlgAssign 1/|S|$ for all vertices $v \in S$\\
and $\NodeWeight[v] \AlgAssign 0$ for all $v \in V \setminus S$.}
\STATE{$\Marked \AlgAssign \emptyset$.}
\FOR{$K$ iterations}
	\STATE{Let $q$ be a weighted median with respect to $\NodeWeight$.}
	\IF{$\NodeWeight[q] \geq \Half \cdot \SumWeight[\NodeWeight]{V}$}
		\STATE{Mark $q$, by setting
		$\Marked \AlgAssign \Marked \cup \Set{q}$.}
		\STATE{$\NodeWeight[q] \AlgAssign 0$.}
	\ELSE
		\STATE{Query node $q$.}
		\FORALL{nodes $v \in S$}
			\IF{$v$ is consistent with the response}
				\STATE{$\NodeWeight[v] \AlgAssign
				p \cdot \NodeWeight[v]$.}
			\ELSE
				\STATE{$\NodeWeight[v] \AlgAssign
				(1- p) \cdot \NodeWeight[v]$.}
			\ENDIF
		\ENDFOR
	\ENDIF
\ENDFOR
\RETURN{$\Marked$}
}

\InsertAlgorithm{Probabilistic Binary Search $(\ErrProb)$}%
{alg:prob-binary}{
\STATE{$\ErrProb' \AlgAssign \ErrProb/3$.}
\STATE{Fix $\Tolerance_1 =
\min(\sqrt{\frac{1}{\log \log n}}, \frac{\Capacity(p)}{2 \log(p/(1-p))})$.}
\label{line:lambda1}
\STATE{$\TOTQ_1 \AlgAssign
\max\Set{\frac{\log n}{\Capacity(p) - \Tolerance_1 \log(p/(1 - p))} + 1,
\frac{\ln(1 / \ErrProb')}{2 \Tolerance_1 ^ 2}}$.}
\STATE{$S_1 \AlgAssign \MultiWeights(V, \TOTQ_1)$.}
\STATE{Fix $\Tolerance_2 =
\frac{\Capacity(p)}{2 \log(p/(1-p))}$.}\label{line:lambda2}
\STATE{$\TOTQ_2 \AlgAssign \max\Set{
\frac{\log \SetCard{S_1}}{\Capacity(p) - \Tolerance_2 \log(p/(1 - p))} + 1,
\frac{\ln(1 / \ErrProb')}{2 \Tolerance_2^2}}$.}
\STATE{$S_2 \AlgAssign \MultiWeights(S_1, \TOTQ_2)$.}
\FORALL{$v \in S_2$}
	\STATE{Query $v$ repeatedly
	$\frac{2 \ln(\SetCard{S_2}/\ErrProb')}{(2p - 1)^2}$ times.}
	\IF{$v$ is returned as the target for at least half of these queries}
		\RETURN{$v$.}
	\ENDIF
\ENDFOR
\RETURN{failure.}
}

To analyze the performance of Algorithm~\ref{alg:mult-weights},
we keep track of the sum of the node weights
and show, by induction, that after $i$ iterations,
$\SumWeight[\NodeWeight]{V} \leq 2^{-i}$.
Consider the \Ordinal{i} iteration:
\begin{itemize}
\item If there exists a node whose weight is
at least half of the total node weight,
\MultiWeights sets its weight to $0$.
Therefore, in this case, the sum of node weights
drops to at most half of its previous value.
\item If \MultiWeights queries a median $q$,
and is told 
that $q$ is the target,
then $q$'s weight decreases by a factor of $p$,
and the other nodes' weights decrease by a factor of $1 - p$.
The new sum of node weights is
$p \NodeWeight[q] +
(1 - p) (\SumWeight[\NodeWeight]{V} - \NodeWeight[q])
\leq \SumWeight[\NodeWeight]{V} / 2$,
because $\NodeWeight[q] \leq \SumWeight[\NodeWeight]{V} / 2$.
\item If \MultiWeights queries a median $q$,
and the response is an edge $e = (q, v)$,
the algorithm multiplies the weights of nodes
in $\Reach{q}{e}$ by $p$ 
and the weights of all other nodes by $1 - p$.
The new total weight is
$p \SumWeight[\NodeWeight]{\Reach{q}{e}} +
(1 - p) \big(\SumWeight[\NodeWeight]{V} -
\SumWeight[\NodeWeight]{\Reach{q}{e}}\big)$,
which is at most $\SumWeight[\NodeWeight]{V} / 2$,
because Lemma~\ref{lem:weighted-median-half} implies that
$\SumWeight[\NodeWeight]{\Reach{q}{e}} \leq
\SumWeight[\NodeWeight]{V} / 2$.
\end{itemize}

The key lemma for the analysis
(Lemma~\ref{lem:multiplicative-weights})
shows that with high probability, the
target is in the set returned by the Multiplicative Weights algorithm.
Recall that $\Capacity(p)$ is defined
as $1 - H(p) = 1 + p \log p + (1 - p) \log (1 - p)$.

\begin{lemma} \label{lem:multiplicative-weights}
Let $S \subseteq V$ be a subset of vertices containing the target
and $0 < \Tolerance < \frac{\Capacity(p)}{\log(p / (1 - p))}$.
If
$\TOTQ \geq \frac{\ln(1/\ErrProb')}{2 \Tolerance^2}$
and
$\TOTQ > \frac{\log \SetCard{S}}{\Capacity(p) -
\Tolerance \log(p / (1 - p))}$,
then with probability at least $1 - \ErrProb'$, the target is
in the set returned by $\MultiWeights(S,\TOTQ)$.
\end{lemma}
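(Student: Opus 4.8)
The plan is to track two quantities through the $\TOTQ+1$ iterations of \MultiWeights: the total weight $\SumWeight[\NodeWeight]{V}$ (which, by the displayed induction in the text, is at most $2^{-i}$ after $i$ iterations, counting both ``mark'' steps and ``query'' steps), and the target's own weight $\NodeWeight[t]$. The key observation is that whenever a genuine query is made and the response happens to be \emph{correct}, the target is consistent with that response (either $q=t$ is revealed, or the returned edge $e$ lies on a shortest $q$–$t$ path, so $t \in \Reach{q}{e}$), and hence $\NodeWeight[t]$ is multiplied by $p$ rather than $1-p$; on a mark step or an incorrect query, $\NodeWeight[t]$ is multiplied by at least $1-p$ (on a mark step it is unchanged unless $t$ itself is marked, in which case we are already done). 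So if among the first $\TOTQ+1$ iterations there are $a$ mark-or-incorrect steps and $b$ correct-query steps with $a+b = \TOTQ+1$, then $\NodeWeight[t] \geq \frac{1}{\SetCard{S}} p^{b} (1-p)^{a}$, provided $t$ was never marked.

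Next I would suppose, for contradiction, that $t$ is never marked during the run. Then in every iteration $t$ carries positive weight and in particular $t$ is never the (unique) last vertex; more importantly, $\NodeWeight[t] \le \SumWeight[\NodeWeight]{V}$. Combining the two bounds after all $\TOTQ+1$ iterations gives
\[
\frac{1}{\SetCard{S}}\, p^{b} (1-p)^{a} \;\le\; \NodeWeight[t] \;\le\; \SumWeight[\NodeWeight]{V} \;\le\; 2^{-(\TOTQ+1)}.
\]
Taking $-\log$ of both sides, this says $b \log(1/p) + a \log(1/(1-p)) \ge \TOTQ + 1 - \log\SetCard{S}$. Writing $a = (\TOTQ+1) - b$ and recalling $\Capacity(p) = 1 - H(p) = 1 + p\log p + (1-p)\log(1-p)$, one rearranges this into a lower bound on $a$ of the form $a \ge \Capacity(p)(\TOTQ+1) - \log\SetCard{S} + (\text{something})\cdot$, i.e.\ a bound forcing the fraction of ``bad'' steps $a/(\TOTQ+1)$ to be noticeably larger than $1-p$. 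Precisely, one should get that $a \geq (1-p + \Tolerance')(\TOTQ+1)$ for a suitable $\Tolerance'$ comparable to $\Tolerance$, using the hypothesis $\TOTQ \ge \frac{\log\SetCard{S}}{\Capacity(p) - \Tolerance\log(p/(1-p))}$ to absorb the $\log\SetCard{S}$ term.

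The final step is a Hoeffding (Chernoff) bound. Among the iterations that are genuine queries, each is answered incorrectly independently with probability at most $1-p$; mark steps contribute to $a$ but there are at most $O(\log\SetCard{S})$ of them over the whole run (each mark step permanently zeroes a vertex), which is a lower-order term that can be folded into the slack. So the expected number of ``bad'' steps is at most $(1-p)(\TOTQ+1) + O(\log\SetCard{S})$, and Hoeffding's inequality bounds the probability that $a$ exceeds this by an additive $\Tolerance(\TOTQ+1)$ by $\exp(-2\Tolerance^2(\TOTQ+1)) \le \exp(-2\Tolerance^2 \TOTQ) \le \ErrProb'$, using the second hypothesis $\TOTQ \ge \frac{\ln(1/\ErrProb')}{2\Tolerance^2}$. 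Hence with probability at least $1-\ErrProb'$ the event forcing $t$ to stay unmarked cannot occur, i.e.\ $t \in \Marked$, which is exactly the returned set.

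The main obstacle I anticipate is bookkeeping rather than ideas: correctly separating the mark steps from the query steps in the count of ``bad'' multiplicative factors for $\NodeWeight[t]$, making sure the $O(\log\SetCard{S})$ mark steps are genuinely negligible against the $\Tolerance(\TOTQ+1)$ slack (this is where the precise constant in front of $\Tolerance\log(p/(1-p))$ in the statement comes from), and choosing the Hoeffding deviation parameter so that the arithmetic with $\Capacity(p)$ and $\log(p/(1-p))$ closes exactly with the two stated lower bounds on $\TOTQ$. I would also double-check the boundary case where a query returns ``$q$ is the target'' for $q \ne t$ (an incorrect answer): then $t$ is inconsistent with the response and its weight is multiplied by $1-p$, which is consistent with treating that iteration as a ``bad'' step, so the analysis goes through unchanged.
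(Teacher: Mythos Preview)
Your high-level plan (Hoeffding to control the number of bad responses, track $\NodeWeight[t]$ versus $\SumWeight[\NodeWeight]{V}$, derive a contradiction if $t$ is never marked) is exactly the paper's approach. But your specific bookkeeping introduces a gap that your proposed patch does not close.

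The problem is the decision to lump mark steps into $a$ and use the weakened bound $\NodeWeight[t] \ge \frac{1}{|S|}\,p^{b}(1-p)^{a}$. This throws away precisely the information you need: on a mark step, $\NodeWeight[t]$ is multiplied by $1$, not by $1-p$, while $\SumWeight[\NodeWeight]{V}$ is still at most halved. So mark steps \emph{increase} the ratio $\NodeWeight[t]/\SumWeight[\NodeWeight]{V}$; they are not ``bad'' steps at all. Once you replace $1$ by $1-p$ for each mark step, the inequality you extract, namely $a > (1-p+\Tolerance)(\TOTQ+1)$, no longer contradicts Hoeffding, because Hoeffding only controls the genuinely random part (the number of incorrect query responses), not the deterministic count $m$ of mark steps. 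Your fix, bounding $m$ by $O(\log|S|)$, is not justified: ``each mark step permanently zeroes a vertex'' yields only $m \le |S|$, and there is no reason the algorithm cannot mark many distinct vertices over $\TOTQ+1$ iterations.

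The clean repair (and what the paper does) is simply not to weaken the bound. Let $q$ be the number of actual query iterations and $X$ the number of those in which $t$ is inconsistent with the response. Then $\NodeWeight[t] = \frac{1}{|S|}\,p^{\,q-X}(1-p)^{X}$ exactly. Since $q \le \TOTQ+1$ and $\log p < 0$, one has $\NodeWeight[t] \ge \frac{1}{|S|}\,p^{(\TOTQ+1)-X}(1-p)^{X}$; combined with $\SumWeight[\NodeWeight]{V} \le 2^{-(\TOTQ+1)}$ this forces $X > (1-p+\Tolerance)(\TOTQ+1)$. Now $X$ is bounded above by the number of incorrect responses among all $\TOTQ+1$ potential queries, which Hoeffding controls directly: with probability at least $1-\exp(-2\Tolerance^2\TOTQ) \ge 1-\ErrProb'$ there are at most $(1-p+\Tolerance)(\TOTQ+1)$ of them. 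That is the contradiction; no bound on the number of mark steps is needed.
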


\begin{proof}
By a standard Hoeffding Bound~\cite{hoeffding:1963:probability},
the probability that
fewer than $(p - \Tolerance) \TOTQ$ queries are answered correctly
is at most $e^{- 2 \TOTQ \Tolerance^2}$.
By the first bound on \TOTQ in Lemma~\ref{lem:multiplicative-weights},
this probability is at most $\ErrProb'$.

If we assume that the target is not marked by the end of
the \MultiWeights algorithm, its final weight is at least
$\TargetFinal \geq \frac{\big(p^{p - \Tolerance}(1 - p)^{1 - p + \Tolerance}\big)^\TOTQ}{\SetCard{S}}$,
with probability at least $1 - \ErrProb'$.
On the other hand, we showed that the final sum of node
weights is upper-bounded by $\SumFinal \leq 2^{-\TOTQ}$.
Using both bounds, we obtain that
\begin{align*}
\log(\TargetFinal/ \SumFinal) 
& \geq \TOTQ \cdot \log(2 p^{p - \Tolerance}(1 - p)^{1 - p + \Tolerance}) - \log \SetCard{S}\\
& = \TOTQ \cdot (1 - H(p) - \Tolerance \log (p / (1 - p))) - \log \SetCard{S}\\
& = \TOTQ \cdot (\Capacity(p) - \Tolerance \log (p / (1 - p))) - \log \SetCard{S}
\; > \; 0,
\end{align*}
where the final inequality follows from the second assumed bound on
\TOTQ in the lemma.
This implies that $\TargetFinal > \SumFinal$, a contradiction.
\end{proof}

The next lemma shows that the final phase of testing
each node finds the target with high probability.

\begin{lemma} \label{lem:repeatedly-querying}
Assuming that the target is in $S$,
if each node $v \in S$ is queried at least
$\TOTQ \geq \frac{2 \ln(\SetCard{S}/\ErrProb')}{(2p - 1)^2}$ 
times, then with probability at least $1 - \ErrProb'$, the true target
is the unique node returned as the ``target'' by at least half of
the queries.
\end{lemma}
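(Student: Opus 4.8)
The plan is to establish Lemma~\ref{lem:repeatedly-querying} by a union bound over the $|S|$ nodes, controlling for each node the probability that the ``majority vote'' of its repeated queries is misleading. First I would fix attention on the true target $t$. Each of the $\TOTQ$ queries to $t$ independently returns the correct answer (namely, ``$t$ is the target'') with probability at least $p$, and I want to show that with probability at least $1 - \ErrProb'/|S|$, strictly more than half of these queries return ``$t$''. The number of correct responses is stochastically dominated below by a $\mathrm{Binomial}(\TOTQ, p)$ random variable whose mean is $p\TOTQ$; the ``bad'' event is that this count drops to at most $\TOTQ/2$, i.e., deviates below its mean by at least $(p - \Half)\TOTQ = \frac{2p-1}{2}\TOTQ$. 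By a Hoeffding bound this probability is at most $\exp\!\big(-2\TOTQ (p - \Half)^2\big) = \exp\!\big(-\TOTQ (2p-1)^2/2\big)$, and plugging in the assumed bound $\TOTQ \geq \frac{2\ln(|S|/\ErrProb')}{(2p-1)^2}$ makes this at most $\ErrProb'/|S|$.

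Next I would handle the non-target nodes. For any $v \in S$ with $v \neq t$, I claim that with probability at least $1 - \ErrProb'/|S|$, node $v$ is \emph{not} returned as the target for at least half of its $\TOTQ$ queries. The point is that a correct response to a query at $v$ never reports ``$v$ is the target'' (since $v \neq t$, a correct answer is an edge on a shortest $v$-$t$ path, or simply not the assertion that $v$ is the target); only an incorrect response, occurring with probability at most $1 - p$, can do so. So the number of queries returning ``$v$ is the target'' is stochastically dominated above by $\mathrm{Binomial}(\TOTQ, 1-p)$, with mean $(1-p)\TOTQ$, and the bad event is that this count reaches $\TOTQ/2$, a deviation above the mean of at least $(\Half - (1-p))\TOTQ = \frac{2p-1}{2}\TOTQ$. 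The same Hoeffding bound and the same choice of $\TOTQ$ give probability at most $\ErrProb'/|S|$ for this bad event.

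Finally I would assemble the pieces with a union bound over all $|S|$ nodes (the target and the at most $|S| - 1$ non-targets): with probability at least $1 - |S| \cdot \frac{\ErrProb'}{|S|} = 1 - \ErrProb'$, every non-target node fails the ``returned at least half the time'' test and the target passes it. On this event, the algorithm's loop over $v \in S_2$ returns exactly the target, and it is the unique node with this property. That establishes the claim. The one subtlety worth being careful about is the stochastic-domination step: the adversary controls the ``incorrect'' responses and may strategically report ``$v$ is the target'' as often as possible (for non-targets) or report anything other than ``$t$'' (for the target), so I should phrase the argument in terms of the worst-case adversarial choice, noting that in the worst case every incorrect response at a non-target node $v$ names $v$, and every incorrect response at the target names something other than $t$; the $p$ versus $1-p$ split is under the algorithm's control (it comes from the noise model), so the binomial domination is legitimate regardless of adversary behavior. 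This adversarial bookkeeping is the main thing to get right; the concentration calculation itself is routine.
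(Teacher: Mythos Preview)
Your proposal is correct and follows essentially the same approach as the paper's proof: set the deviation parameter to $p-\Half$, apply a Hoeffding bound to each of the $|S|$ nodes to get failure probability at most $\ErrProb'/|S|$ per node, and take a union bound. The paper phrases it uniformly as ``at least half of the queries to each node are answered correctly,'' while you explicitly separate the target and non-target cases and spell out the adversarial stochastic-domination step; this extra care is sound but does not change the argument.
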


\begin{proof}
We prove Lemma~\ref{lem:repeatedly-querying}
using standard tail bounds and a union bound.
Let $\Tolerance = p - \Half$.
As in the proof of Lemma~\ref{lem:multiplicative-weights},
for each queried node, a Hoeffding Bound establishes that
the probability that at most
$(p - \Tolerance) \TOTQ = \frac{\TOTQ}{2}$
queries are correctly answered is at most
$e^{-2 \TOTQ \Tolerance^2}$.
Because of the lower bound on \TOTQ,
this probability is upper-bounded by $\ErrProb' / \SetCard{S}$.
By a union bound, the probability that any queried vertex has
at least half of its queries answered incorrectly is at most
$\ErrProb'$. Barring this event, the target vertex is confirmed as
such more than half the time, while none of the non-target vertices
are confirmed as target vertices at least half the time.
\end{proof}

Combining these lemmas, we prove the following:

\begin{lemma} \label{lemma:noisy:small}
Given $\ErrProb > 0$, Algorithm~\ref{alg:prob-binary}
finds the target with probability at least $1 - \ErrProb$,
using no more than
$\frac{\log n}{\Capacity(p)} + o(\log n) + O(\log^2 (1/\ErrProb))$
queries.
\end{lemma}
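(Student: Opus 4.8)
The strategy is to bound the number of queries used by each of the three stages of Algorithm~\ref{alg:prob-binary} separately, combine the success probabilities via a union bound, and then simplify the arithmetic under the assumption $\ErrProb \le 1/\log n$ (equivalently, $\log(1/\ErrProb') \ge \log\log n - O(1)$). First I would verify the three failure events are controlled. By Lemma~\ref{lem:multiplicative-weights} applied with $S = V$ and $\Tolerance = \Tolerance_1$: the choices of $\Tolerance_1$ and $\TOTQ_1$ in lines~\ref{line:lambda1}--\ref{line:lambda2} are exactly tailored so both hypotheses of that lemma hold (the first bound on $\TOTQ_1$ covers the $\Capacity(p) - \Tolerance_1 \log(p/(1-p))$ requirement once we check $\Tolerance_1 < \Capacity(p)/\log(p/(1-p))$, which holds for $n$ large since $\Tolerance_1 \to 0$; the second covers the Hoeffding bound), so the target lies in $S_1$ with probability at least $1 - \ErrProb'$. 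Similarly, Lemma~\ref{lem:multiplicative-weights} applied to $S = S_1$ with $\Tolerance = \Tolerance_2$ — a fixed constant chosen so that $\Capacity(p) - \Tolerance_2 \log(p/(1-p)) = \Capacity(p)/2 > 0$ — puts the target in $S_2$ with probability at least $1 - \ErrProb'$. Finally, Lemma~\ref{lem:repeatedly-querying} with $S = S_2$ guarantees the correct target is identified in the last loop with probability at least $1 - \ErrProb'$. Since $\ErrProb' = \ErrProb/3$, a union bound over these three events gives overall success probability at least $1 - \ErrProb$.

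Next I would count queries. Stage 1 uses at most $\TOTQ_1 + 1$ queries. The first term of $\TOTQ_1$ is $\frac{\log n}{\Capacity(p) - \Tolerance_1 \log(p/(1-p))} = \frac{\log n}{\Capacity(p)}\bigl(1 + O(\Tolerance_1)\bigr) = \frac{\log n}{\Capacity(p)} + O(\Tolerance_1 \log n)$, and since $\Tolerance_1 = O(1/\sqrt{\log\log n})$ under the hypothesis $\ErrProb \le 1/\log n$ (so that $1/\ErrProb' \le 3\log n$ and hence $\max\{1/\ErrProb', \log\log n\}$ is within a constant of $\log\log n$... more precisely $\Tolerance_1^{-2} = \max\{1/\ErrProb',\log\log n\}$, giving $\Tolerance_1 \log n = \log n / \sqrt{\max\{1/\ErrProb', \log\log n\}} \le \log n/\sqrt{\log\log n} = o(\log n)$), this term is $\frac{\log n}{\Capacity(p)} + o(\log n)$. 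The second term of $\TOTQ_1$ is $\frac{\ln(1/\ErrProb')}{2\Tolerance_1^2} = \frac{\ln(1/\ErrProb')}{2}\max\{1/\ErrProb', \log\log n\}$, which is $O(\log(1/\ErrProb) \cdot \max\{1/\ErrProb, \log\log n\})$. Under $\ErrProb \le 1/\log n$ this is $O\bigl(\log(1/\ErrProb)/\ErrProb\bigr)$ which I would bound by $O(\log^2(1/\ErrProb)) + o(\log n)$ (e.g., when $1/\ErrProb \le \log\log n$ the term is $O(\log\log n \cdot \log\log\log n) = o(\log n)$, and otherwise $1/\ErrProb > \log\log n$ so... here one needs to be a little careful — in the regime where $1/\ErrProb$ is large, $\log(1/\ErrProb)/\ErrProb$ can exceed $\log^2(1/\ErrProb)$, so the cleanest route is to note that the $\max$ inside $\Tolerance_1^{-2}$ means $\Tolerance_1^{-2} \le 1/\ErrProb' + \log\log n$, hence the second term is at most $\frac{\ln(1/\ErrProb')}{2}(1/\ErrProb' + \log\log n) = O(\log(1/\ErrProb)/\ErrProb) + o(\log n)$; absorbing $O(\log(1/\ErrProb)/\ErrProb)$ into the stated $O(\log^2(1/\ErrProb))$ term is only valid if we further restrict, so I would instead just keep it as a clean $O(\log(1/\ErrProb)/\ErrProb)$ and observe that within the regime $\ErrProb \le 1/\log n$ the Lemma only claims an $O(\log^2(1/\ErrProb))$ bound — this discrepancy is the point I flag below).

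For Stage 2, the crucial observation is that $\SetCard{S_1} \le \TOTQ_1 + 1$, because \MultiWeights marks at most one node per iteration over $\TOTQ_1 + 1$ iterations. Hence $\log \SetCard{S_1} = O(\log\log n) + O(\log(1/\ErrProb))$, and with $\Tolerance_2$ a constant, $\TOTQ_2 = O(\log \SetCard{S_1}) + O(\log(1/\ErrProb')) = O(\log\log n) + O(\log(1/\ErrProb)) = o(\log n) + O(\log(1/\ErrProb))$. Then $\SetCard{S_2} \le \TOTQ_2 + 1 = o(\log n) + O(\log(1/\ErrProb))$, so Stage 3 uses at most $\SetCard{S_2} \cdot \frac{2\ln(\SetCard{S_2}/\ErrProb')}{(2p-1)^2} = O\bigl((\log\log n + \log(1/\ErrProb)) \cdot (\log\log\log n + \log\log(1/\ErrProb) + \log(1/\ErrProb))\bigr)$ queries, which is $o(\log n) + O(\log^2(1/\ErrProb))$ (the cross term $\log\log n \cdot \log(1/\ErrProb)$ is $o(\log n)$ since $\log(1/\ErrProb) \le \log\log n$ under the hypothesis, so the product is $O((\log\log n)^2) = o(\log n)$). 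Summing the three stages gives the claimed bound $\frac{\log n}{\Capacity(p)} + o(\log n) + O(\log^2(1/\ErrProb))$.

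The main obstacle is the bookkeeping in Stage 1's second term and the Stage 3 product: one must genuinely use the hypothesis $\ErrProb \le 1/\log n$ to force $\log(1/\ErrProb) \le \log\log n$, which is what collapses all the mixed $\polylog$ cross-terms into $o(\log n)$, and one must be careful that the $O(\log^2(1/\ErrProb))$ in the statement is actually an adequate catch-all for the $\frac{\ln(1/\ErrProb')}{2\Tolerance_1^2}$ contribution in the relevant parameter regime — the cleanest fix is to observe that when $1/\ErrProb' \le \log\log n$ this term is $o(\log n)$ and absorbed there, and when $1/\ErrProb' > \log\log n$ it equals $\frac{\ln(1/\ErrProb')}{2\ErrProb'} = O(\log(1/\ErrProb)/\ErrProb)$, and then note that this (admittedly larger-looking) quantity is what the general Theorem~\ref{thm:undirected-noisy-informal} bound allows once one drops the restriction $\ErrProb \le 1/\log n$; for Lemma~\ref{lemma:noisy:small} itself, since $\ErrProb \le 1/\log n$, we have $1/\ErrProb \ge \log n$, so $\log(1/\ErrProb)/\ErrProb$ — wait, that is $\ge \log n \log\log n$, which is NOT $o(\log n)$. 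So in fact Lemma~\ref{lemma:noisy:small} is stated for small $\ErrProb$ and its $O(\log^2(1/\ErrProb))$ term is precisely large enough to swallow this: one checks $\log(1/\ErrProb)/\ErrProb \le \log^2(1/\ErrProb)$ is FALSE for small $\ErrProb$, so the genuinely correct reading is that the $\TOTQ_1$ second term is $O(\log(1/\ErrProb) \cdot (1/\ErrProb))$ and this is the dominant contribution — meaning the stated lemma bound's $O(\log^2(1/\ErrProb))$ should in fact be read together with the parameter choices, and the honest verification is simply to substitute the explicit $\Tolerance_1, \TOTQ_1, \Tolerance_2, \TOTQ_2$ values and confirm term-by-term that the sum does not exceed $\frac{\log n}{\Capacity(p)} + o(\log n) + O(\log^2(1/\ErrProb))$ — this final substitution-and-simplification, done carefully with the $\ErrProb \le 1/\log n$ constraint in hand, is exactly the content of the proof and is where all the care must go.
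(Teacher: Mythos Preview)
Your overall approach is exactly the paper's: apply Lemma~\ref{lem:multiplicative-weights} to $(V,\Tolerance_1)$, then to $(S_1,\Tolerance_2)$, then Lemma~\ref{lem:repeatedly-querying}, union bound over the three $\ErrProb'$-failures, and count queries stage by stage. Two points deserve attention.

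\textbf{Inequality direction.} You repeatedly use ``$\ErrProb \le 1/\log n$ forces $\log(1/\ErrProb) \le \log\log n$'' and ``$1/\ErrProb' \le 3\log n$''. These are backwards: $\ErrProb \le 1/\log n$ gives $1/\ErrProb' \ge 3\log n$ and $\log(1/\ErrProb) \ge \log\log n$. Fortunately this does not break your Stage~3 bookkeeping; it just means the cross-term $\log\log n \cdot \log(1/\ErrProb)$ should be absorbed into $O(\log^2(1/\ErrProb))$ (using $\log\log n \le \log(1/\ErrProb)$) rather than into $o(\log n)$. The paper does the same absorption.

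\textbf{The second term of $\TOTQ_1$.} Your hesitation here is well-founded. With $\Tolerance_1$ literally as written in Line~\ref{line:lambda1}, under the hypothesis $\ErrProb \le 1/\log n$ one has $1/\ErrProb' \ge 3\log n > \log\log n$, so $\Tolerance_1^{-2} = 1/\ErrProb'$ and
\[
\frac{\ln(1/\ErrProb')}{2\Tolerance_1^2} \;=\; \frac{\ln(1/\ErrProb')}{2\,\ErrProb'},
\]
which is \emph{not} $O(\log^2(1/\ErrProb))$. The paper's proof simply asserts ``the second one is either $O((\log\log n)^2)$ or $O(\log^2(1/\ErrProb'))$'' without computation; that claim only holds if $\Tolerance_1^{-2} = \max\{\log(1/\ErrProb'),\, \log\log n\}$ rather than $\max\{1/\ErrProb',\,\log\log n\}$. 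This is almost certainly a typo in the algorithm, and with that correction the paper's arithmetic (and yours) goes through cleanly: the second term becomes $\tfrac{1}{2}\ln(1/\ErrProb')\log(1/\ErrProb') = O(\log^2(1/\ErrProb))$ in the relevant regime. So your instinct that the substitution ``is where all the care must go'' is right, but the resolution is to fix the parameter, not to find a more delicate bound.
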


\begin{proof}
Because $\Tolerance_1$ and $\TOTQ_1$, satisfy both bounds of
Lemma~\ref{lem:multiplicative-weights} by definition,
Lemma~\ref{lem:multiplicative-weights} can be applied,
implying that with probability at least $1-\ErrProb'$,
the set $S_1$ contains the target.
Note that the maximum in the definition of $\TOTQ_1$ is
taken over two terms. The first one is
$\frac{\log n}{\Capacity(p)} + o(\log n)$
because $\Tolerance_1 = o(1)$.
The second one is
$O(\log \log n \cdot \log (1/\ErrProb'))
= O((\log \log n)^2) + O(\log^2(1/\ErrProb'))$.
Therefore, $\SetCard{S_1}$ is bounded by
$\frac{\log n}{\Capacity(p)} + o(\log n) + O(\log^2 (1/\ErrProb'))$.

For the second invocation of $\MultiWeights$, the conditions of
Lemma~\ref{lem:multiplicative-weights} are again
satisfied by definition,
so $S_2$ will contain the target with probability at least
$1-2\ErrProb'$.
The maximum in the definition of $\TOTQ_2$ is
again taken over two terms.
The first one is $O(\log \log n + \log \log (1/\ErrProb'))$
(by the bound on $\SetCard{S_1}$)
and the second one is $O(\log(1/\ErrProb'))$.

Finally, by Lemma~\ref{lem:repeatedly-querying}, the target will be
returned with probability at least $1-3\ErrProb' = 1-\ErrProb$.
The final phase again only makes $o(\log n) + O(\log^2 (1/\ErrProb))$ queries,
giving us the claimed bound on the total number of queries.
\end{proof}

To obtain the bound
$(1- \ErrProb)
\frac{\log n}{\Capacity(p)} + o(\log n) + O(\log^2(1/\ErrProb))$
from Theorem~\ref{thm:undirected-noisy-informal},
we can reason as follows:
\begin{itemize}
\item If $\ErrProb < \frac{1}{\log n}$, then
$\frac{\log n}{\Capacity(p)} 
= \frac{(1-\ErrProb) \log n}{\Capacity(p)} +
  \frac{\ErrProb \log n}{\Capacity(p)}$,
and $\frac{\ErrProb \log n}{\Capacity(p)} = O(1)$ can be absorbed into
the $o(\log n)$ term.
\item If $\ErrProb \geq 1/\log n$, we can modify the algorithm
using a simple trick from \cite{BenOr-hassidim:2008:noisy}:
With probability $\ErrProb - \frac{1}{\log n}$, the modified algorithm
outputs an arbitrary node without any queries; 
otherwise, it runs Algorithm~\ref{alg:prob-binary}
with error parameter $\hat{\ErrProb} = \frac{1}{\log n}$.
The success probability is then
$(1 - \ErrProb + \frac{1}{\log n})(1 - \frac{1}{\log n}) \geq 1 - \ErrProb$,
while the number of queries in expectation is
\begin{align*}
(1 - \ErrProb + \frac{1}{\log n})
\cdot \big( \frac{\log n}{\Capacity(p)} + o(\log n) + O(\log^2(1/\ErrProb))\big)
= \\
(1 - \ErrProb) \cdot \frac{\log n}{\Capacity(p)} + o(\log n) + O(\log^2(1/\ErrProb)).
\end{align*}
\end{itemize}

Hence, we have proved our main theorem:

\begin{theorem} \label{thm:noisy}
There exists an algorithm with the following property:
Given a connected undirected graph and positive $\ErrProb > 0$,
the algorithm finds the target with probability at least 
$1 - \ErrProb$, using no more than
$(1 - \ErrProb) \frac{\log n}{\Capacity(p)} +
o(\log n) + O(\log^2(1/\ErrProb))$
queries in expectation.
\end{theorem}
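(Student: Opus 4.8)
The plan is to obtain Theorem~\ref{thm:noisy} from Lemma~\ref{lemma:noisy:small} by a short case analysis on the size of $\ErrProb$ relative to $\frac{1}{\log n}$, since Lemma~\ref{lemma:noisy:small} already establishes the claim (up to the missing $(1-\ErrProb)$ factor) in the regime $\ErrProb \le \frac{1}{\log n}$.

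First, in the case $\ErrProb < \frac{1}{\log n}$, I would invoke Lemma~\ref{lemma:noisy:small} directly: Algorithm~\ref{alg:prob-binary} succeeds with probability at least $1 - \ErrProb$ using at most $\frac{\log n}{\Capacity(p)} + o(\log n) + O(\log^2(1/\ErrProb))$ queries. It then suffices to write $\frac{\log n}{\Capacity(p)} = \frac{(1-\ErrProb)\log n}{\Capacity(p)} + \frac{\ErrProb \log n}{\Capacity(p)}$ and to note that $\frac{\ErrProb \log n}{\Capacity(p)} < \frac{1}{\Capacity(p)} = O(1)$ is absorbed into the $o(\log n)$ term, yielding exactly the claimed bound.

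Second, in the case $\ErrProb \ge \frac{1}{\log n}$, I would use the standard device (as in \cite{BenOr08}) of letting the algorithm give up early: with probability $\ErrProb - \frac{1}{\log n}$ it outputs an arbitrary vertex with no queries, and otherwise it runs Algorithm~\ref{alg:prob-binary} with the smaller error parameter $\hat{\ErrProb} = \frac{1}{\log n}$ (legitimate, since Lemma~\ref{lemma:noisy:small} applies whenever the error parameter is at most $\frac{1}{\log n}$). The overall success probability is then at least $(1 - \ErrProb + \frac{1}{\log n})(1 - \frac{1}{\log n})$, and expanding this and using $\ErrProb \ge \frac{1}{\log n}$ shows it is at least $1 - \ErrProb$. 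For the query complexity, the expected number of queries is $(1 - \ErrProb + \frac{1}{\log n})$ times the bound of Lemma~\ref{lemma:noisy:small} with $\hat{\ErrProb}$; since $\log^2(1/\hat{\ErrProb}) = \log^2 \log n = o(\log n)$, since $\log^2(1/\ErrProb) \le \log^2 \log n$ in this regime, and since the leftover $\frac{1}{\log n} \cdot \frac{\log n}{\Capacity(p)} = O(1)$ term is absorbed into $o(\log n)$, the product collapses to $\frac{(1-\ErrProb)\log n}{\Capacity(p)} + o(\log n) + O(\log^2(1/\ErrProb))$.

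Since the substantive work is already done in Lemma~\ref{lemma:noisy:small} (and beneath it in Lemmas~\ref{lem:weighted-median-half}, \ref{lem:multiplicative-weights}, and \ref{lem:repeatedly-querying}, together with the potential argument showing $\SumWeight[\NodeWeight]{V} \le 2^{-i}$), I do not expect a real obstacle in this last step. The only points requiring care are verifying the inequality $(1 - \ErrProb + \frac{1}{\log n})(1 - \frac{1}{\log n}) \ge 1 - \ErrProb$ in the second case, and checking that every stray $O(1)$ summand and every $(1+o(1))$ prefactor is genuinely swallowed by the $o(\log n) + O(\log^2(1/\ErrProb))$ slack rather than concealing a dependence on $\ErrProb$ that would spoil the $(1-\ErrProb)$ factor.
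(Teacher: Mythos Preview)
Your proposal is correct and follows essentially the same two-case argument as the paper: invoke Lemma~\ref{lemma:noisy:small} directly when $\ErrProb < \frac{1}{\log n}$ and absorb $\frac{\ErrProb \log n}{\Capacity(p)} = O(1)$ into $o(\log n)$; otherwise use the early-abort trick from \cite{BenOr08} with $\hat{\ErrProb} = \frac{1}{\log n}$. You are in fact slightly more careful than the paper in noting explicitly that $\log^2(1/\hat{\ErrProb}) = (\log\log n)^2 = o(\log n)$ and that $\log^2(1/\ErrProb) \le (\log\log n)^2$ in the second regime, so the $O(\log^2(1/\ErrProb))$ term is not an issue there.
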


Notice that except for having $O(\log^2(1/\ErrProb))$ instead
of $O(\log(1/\ErrProb))$, this bound matches the one
obtained by
Ben-Or and Hassidim~\cite{BenOr-hassidim:2008:noisy} for the line.



\section{Hardness Results}\label{sec:negative}
We next turn to the question of
determining the optimal number of
queries required to find a target, or
--- equivalently --- the
decision problem of determining whether \TOTQ queries
are sufficient in the worst case.
All our hardness results
apply already for the noise-free version,
i.e., the special case where $p = 1$;
they are proved in Appendix~\ref{sec:hardness-proofs}.
For undirected graphs,
the result from Section~\ref{sec:deterministic}
implies a \quasip-time algorithm,
which makes the problem unlikely to
be \Hardness{NP}{hard}.
However, we show hardness results for this problem
based on two well-known conjectures.

\begin{conjecture*}{ETH}
The Exponential Time Hypothesis (ETH)
\cite{impagliazzo-paturi:2001:eth}
states that there exists
a positive constant $\hat{s}$ such that
$3$-CNF-SAT is not solvable in
time $2^{s n} \cdot \poly(n, m)$,
for any constant $s < \hat{s}$.
Here, $n$ and $m$ are the number of variables
and the number of clauses, respectively.
\end{conjecture*}

\begin{conjecture*}{SETH}
The Strong Exponential Time Hypothesis (SETH)
\cite{impagliazzo-paturi:2001:eth,calabro-impagliazzo-paturi:2009:seth} 
states that CNF-SAT does not admit
any algorithm with running time $2^{(1 - \epsilon) n} \cdot \poly(n, m)$
for any constant $\epsilon > 0$.
\end{conjecture*}

\begin{theorem}\label{thm:eth-hardness}
The problem of deciding,
given a connected undirected graph $G = (V, E)$
and integer \TOTQ,
whether there is an adaptive strategy
that finds the target in at most \TOTQ queries

(1) 
does not admit any algorithm
with $m^{o(\log n)}$  running time
unless the ETH is false.

(2) 
does not admit any algorithm
with $O(m^{(1 - \epsilon)\log n})$
running time
for any constant $\epsilon > 0$
unless the SETH is false.

\end{theorem}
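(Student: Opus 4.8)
The plan is a reduction from CNF-SAT --- and, for part~(1), from its special case 3-CNF-SAT --- engineered so that a formula $\phi$ with $N$ variables and $M$ clauses is turned, in time $2^{O(\sqrt N)}$, into an instance $(G,K)$ of the decision problem in which $G$ is a connected, positively weighted graph with $n=2^{\sqrt N\,(1+o(1))}$ vertices, only $m=n^{1+o(1)}$ edges, and $K=\Theta(\sqrt N)$ chosen to sit just below $\log n$ (padding $G$ if necessary), such that $\phi$ is satisfiable iff the target can always be found within $K$ queries. A blocking trick is essential for the parameters: partition the variables into $b=\Ceil{\sqrt N}$ blocks $B_1,\dots,B_b$ of at most $b$ variables each, so that each block has at most $2^b\le 2^{\sqrt N(1+o(1))}$ partial assignments, one query can ``address'' a whole block, and $b$ queries are needed to address all blocks.

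I would build $G$ from three ingredients, all low-degree except at $b$ hubs. (i) A \emph{selection} part: for each block a hub of degree $\Theta(2^b)$ whose incident edges are in bijection with the partial assignments of that block, sitting inside a forcing structure (most cleanly described via a game between a vertex-player choosing queries and an edge-player choosing responses) so that a large pool of otherwise indistinguishable candidate targets can be split only by querying hubs; this forces every strategy to spend $\ge b$ of its queries, the $i$-th such ``selection'' query pinning down a partial assignment $a_i$ for block $i$. (ii) A \emph{verification} part: a short path gadget $P_j$ carrying a couple of candidate targets for each clause $C_j$, strung along a common spine, plus a distinguished candidate $\hat v$. (iii) A \emph{connecting} part: paths with carefully perturbed lengths wiring the menu edges of the hubs to the clause gadgets and to $\hat v$, arranged so that selecting $a_i$ on block $i$ removes from play exactly those $P_j$ whose clause $C_j$ is satisfied by $a_i$ on $B_i$, and so that $\hat v$ stays reachable iff the selected partial assignments fit together into a \emph{satisfying} total assignment. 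Then after the $b$ selection queries the alive candidates are $\hat v$ alone when the selected assignment satisfies $\phi$, and $\hat v$ together with (candidates of) the gadget of a violated clause otherwise; set $K=b$ plus the $o(\sqrt N)$ slack needed to finish from such a one-clause-gadget configuration.

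Correctness has two directions. For ``$\phi$ satisfiable $\Rightarrow$ $K$ queries suffice'': the strategy performs the $b$ selection queries; the structural point to establish from the gadget design is that, since all of the adversary's responses must be consistent with a single true target, the alive set after these queries always has the restricted shape the slack can clean up --- in particular consistency with a single target confines the surviving clause gadgets to one clause's gadget, and with a satisfying assignment selected, $\hat v$ is then the unique survivor or co-survives with just that gadget. For ``$\phi$ unsatisfiable $\Rightarrow$ $K$ queries do not suffice'': a potential argument on the candidate set, in the style of the proof of Theorem~\ref{thm:undirected-weak}, shows every strategy must still make $\ge b$ genuine selection queries, and since no total assignment satisfies $\phi$, whatever is selected violates some clause, whose gadget then keeps two indistinguishable candidates alive one query past the budget.

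The genuinely hard part is ingredient~(iii): making the shortest-path metric realize ``selecting a partial assignment removes exactly the clause gadgets it satisfies'', breaking ties so that a single true target leaves exactly one clause gadget in play, while keeping $G$ \emph{sparse} --- $m=n^{1+o(1)}$ (the menu edges alone already cost $b\cdot 2^b=2^{\sqrt N(1+o(1))}$ edges, so all the connecting structure must stay near-linear in $n$). Sparseness is exactly what makes part~(2) tight: with $n=2^{\sqrt N(1+o(1))}$ and $m=n^{1+o(1)}$ one has $(\log n)^2=N(1+o(1))$ and hence $m^{(1-\epsilon)\log n}=2^{(1-\epsilon)(1+o(1))N}$, so an $O(m^{(1-\epsilon)\log n})$-time decision algorithm, composed with the $2^{O(\sqrt N)}$-time reduction, would decide CNF-SAT in $2^{(1-\epsilon/2)N}\cdot\mathrm{poly}(N,M)$ time for all large $N$, contradicting the SETH; likewise an $m^{o(\log n)}$-time algorithm runs in $2^{o(N)}$ time, deciding 3-CNF-SAT subexponentially and contradicting the ETH. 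For part~(1) alone any $m=\mathrm{poly}(n)$ construction suffices; only the SETH constant in part~(2) forces the near-linear edge count.
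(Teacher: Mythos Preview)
Your overall framework---partition the $N$ variables into $b\approx\sqrt N$ blocks, build a graph with $n\approx 2^{\sqrt N}$ vertices and $m=n^{1+o(1)}$ edges, force $b$ ``selection'' queries via gadgets sized so that any other strategy blows the budget, and wire in clause gadgets so that satisfiability is equivalent to a $(b+O(1))$-query strategy---is exactly the paper's approach, and your parameter arithmetic in the final paragraph is correct.

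But your selection mechanism as written has the roles reversed. You propose one hub per block with $\Theta(2^b)$ incident edges in bijection with partial assignments, and you have the algorithm \emph{query the hub}. When the algorithm queries a vertex, it is the \emph{adversary} who picks the response edge; so querying your hub would let the adversary, not the algorithm, choose the partial assignment, and then there is no reason a satisfying assignment ever gets selected. The reduction needs the algorithm to do the selecting. The paper fixes this by inverting your picture: for each block $j$ and each assignment $a\in\{0,1\}^b$ there is a separate \emph{assignment vertex} $v_{j,a}$, and the algorithm selects $a$ by choosing \emph{which} $v_{j,a}$ to query. The forcing is done by two ``critical gadgets'' per block $j$---undirected paths of length $2^{b-j+3}-1$---reachable only through the $v_{j,a}$'s, so that failing to query some $v_{j,\cdot}$ in round $j$ leaves a critical gadget that cannot be searched within the remaining budget.

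You also overestimate the difficulty of ingredient~(iii). No weights or ``carefully perturbed lengths'' are needed; the paper's graph is unweighted. One extra vertex $\hat v$ is adjacent to every $v_{j,a}$, and whenever $a$ satisfies clause $C_\ell$, a fresh intermediate node joins $v_{j,a}$ to the midpoint of the clause path $P_\ell$ by a length-$2$ path. An adversary answer ``toward $\hat v$'' at $v_{j,a}$ then simultaneously rules out block~$j$'s critical gadgets and every $P_\ell$ with $C_\ell$ satisfied by $a$. The tie-breaking you worry about (a single true target leaving exactly one clause gadget alive) is unnecessary: for the ``unsatisfiable $\Rightarrow$ too many queries'' direction it suffices that \emph{some} clause gadget survives all $b$ selection responses, and for the converse direction the adversary's first non-$\hat v$ response already localizes the target to a single gadget that the $O(1)$ slack handles. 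The edge count stays near-linear automatically, since each of the $b\cdot 2^b$ assignment vertices touches at most $M$ intermediate nodes and the total is $O(Mb2^b)=n^{1+o(1)}$.
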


Our next hardness result is that the decision problem is
\Hardness{PSPACE}{complete} for \emph{directed} graphs. 
In fact, we will show a more fine-grained characterization
for the following ``semi-adaptive'' version of the problem,
even for undirected graphs:

\begin{problem}\label{pro:non-adaptive}
Given a connected undirected or
strongly connected directed graph $G$, 
and integers \NumRound and \EachRound,
the algorithm queries \EachRound vertices in each round
$1, 2, \ldots, \NumRound$. 
In response to the \EachRound queries, 
prior to the next round, it receives one edge out of each queried
vertex, lying on a shortest path from that vertex to the target.
The decision problem \TSSA{\NumRound}{\EachRound} is
to ascertain whether the
target can be found using \NumRound rounds of \EachRound queries each.
\end{problem}

When $\EachRound = 1$ and \NumRound is part of the input,
we obtain the original problem. When $\NumRound = 1$
(i.e., under the non-adaptive query model),
a straightforward reduction from
\textsc{Set Cover} establishes \Hardness{NP}{hardness}.
Our complexity-theoretic results
for \TSSA{\NumRound}{\EachRound} are the following:

\begin{theorem}\label{thm:directed-weak}
\begin{enumerate}
\item When the graph is directed
and \NumRound is part of the input,
\TSSA{\NumRound}{1} is \Hardness{PSPACE}{complete}.
\item When $\NumRound \geq 3$ is a constant
but \EachRound is part of the input, 
\TSSA{\NumRound}{\EachRound} is contained in $\Sigma_{2\NumRound-1}$
in the polynomial hierarchy,
and $\Sigma_{2\NumRound-5}$-hard.
This holds even for undirected graphs.
\end{enumerate}
\end{theorem}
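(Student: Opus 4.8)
The plan is to build all hardness results around a single family of gadget reductions, starting from quantified Boolean formulas (QBF) with a bounded number of quantifier alternations, and then specializing. For part~(1), I would reduce from \textsc{True QBF} (which is \Complexity{PSPACE}-complete). The idea is to encode each variable $x_i$ as a small ``decision gadget'': a short directed substructure in which the algorithm's single query in round $i$ forces it to commit to a value for $x_i$, and the adversarial choice of the returned shortest-path edge (recall the adversary picks among correct answers) lets the opposing player set the next variable. Directedness is essential here: as the footnote notes, on a directed cycle a query to a non-target node reveals nothing, so the graph designer has fine control over what information each query leaks. Each clause of the matrix is represented by a ``verification gadget'' reachable only if the partial assignment chosen by the queries satisfies it; the target is placed so that it can be isolated within \TOTQ${}=\NumRound$ queries iff the formula is true. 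Membership in \Complexity{PSPACE} is the easy direction: an alternating polynomial-time procedure guesses the algorithm's query (existential) and the adversary's response edge (universal) round by round, using only the $\log n$-bounded recursion depth guaranteed by the analogue of Theorem~\ref{thm:undirected-weak} — actually, for directed graphs we do not have that bound, so instead membership follows because an optimal strategy can be described as a polynomial-depth (in fact depth-$\NumRound\le n$) alternating computation, and $\Complexity{AP}=\Complexity{PSPACE}$.

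For part~(2), the upper bound $\Sigma_{2\NumRound-1}$ comes from writing the semi-adaptive game as a fixed number of quantifier alternations: ``$\exists$ queries in round 1, $\forall$ responses, $\exists$ queries in round 2, $\forall$ responses, \ldots, $\exists$ queries in round $\NumRound$, $\forall$ responses, then check that $|S|=1$.'' That is $\NumRound$ blocks of $\exists$ and $\NumRound$ blocks of $\forall$, but the final check after the last $\forall$ is in \Complexity{P}, and the strategy in round~1 can be folded so that the count is $2\NumRound-1$ alternations rather than $2\NumRound$; I would verify this bookkeeping carefully, since it is where the exact index comes from. The hardness direction reduces from $\Sigma_{2\NumRound-5}$-\textsc{QBF}: again encode the $2\NumRound-5$ quantifier blocks using $2\NumRound-5$ of the $2\NumRound-1$ available alternations (pairs of rounds/responses), using the remaining handful of rounds as ``slack'' to set up the formula's clause-checking phase, and make the parameter \EachRound (now part of the input) large enough that within a single round the algorithm can probe all the clause gadgets in parallel. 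The constant gap of $4$ between $2\NumRound-1$ and $2\NumRound-5$ presumably reflects exactly this fixed overhead of set-up and clause-verification rounds that cannot be charged to quantifier alternations.

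The main obstacle will be making the variable-assignment gadgets \emph{faithful} in both directions simultaneously: I must guarantee (i) that a query in round $j$ genuinely forces the algorithm to commit to one value, so it cannot hedge by extracting partial information about several variables at once, and (ii) that the adversary's freedom in choosing among multiple correct shortest-path edges exactly simulates the universally quantified player, with no extra power and no less. Controlling shortest-path ties under positive edge weights — ensuring that precisely the intended set of edges is ``correct'' out of each queried vertex, and that \Reach{q}{e} carves the candidate set the way the reduction needs — is the delicate part, and it is where I expect most of the technical work to lie; the fact that the paper states all three hardness proofs ``are similar to each other'' suggests a single carefully engineered gadget does all the work, so getting that one gadget right is the crux.
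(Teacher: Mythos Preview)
Your high-level plan is broadly the paper's: reduce from QBF, use gadgets so that the algorithm's query in round $i$ commits to the $\exists$-variable $x_i$ and the adversary's choice of returned edge sets the $\forall$-variable $y_i$, with clause gadgets that must all be ruled out by the end. The ``slack'' of two extra rounds for clause-checking is exactly why the hardness lands at $\Sigma_{2\NumRound-5}$ rather than $\Sigma_{2\NumRound-1}$, and directed cycles are indeed what make the fully adaptive directed case \Complexity{PSPACE}-hard. Two points, however, deserve correction or sharpening.

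\textbf{The $\Sigma_{2\NumRound-1}$ membership argument.} Your reason for $2\NumRound-1$ rather than $2\NumRound$ --- ``the strategy in round~1 can be folded'' --- is not right, and it is not clear what it would even mean. The naive encoding is $\exists Q_1\,\forall R_1\,\cdots\,\exists Q_\NumRound\,\forall R_\NumRound\,[\text{target unique}]$, which is $\Sigma_{2\NumRound}$. The paper drops the \emph{last} universal block, not the first existential one: once all $\NumRound$ query sets are fixed, the residual question ``can the \secondplayer respond so that at least two candidate targets survive?'' is decidable in polynomial time. The reason is that one can enumerate all $O(n^2)$ pairs $\{t_1,t_2\}$ and, for each query \emph{independently}, check whether some response edge is simultaneously consistent with $t_1$ and $t_2$; the responses to different queries do not interact. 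This is the non-obvious step, and your plan does not contain it.

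\textbf{Forcing the query sequence.} You say the decision gadget ``forces the algorithm to commit,'' but the mechanism needs to be spelled out, and it is the heart of the reduction. The paper attaches to each round $i$ a pair of \emph{critical gadgets} (directed cycles of length $\NumRound+3-i$ for the \Complexity{PSPACE} proof; undirected paths of length $\Theta((\EachRound+1)^{\NumRound+2-i})$ for the $\Sigma$ proof) reachable only through the two literal vertices $u_{i,j},\Not{u_{i,j}}$. The size is calibrated so that locating a target inside such a gadget takes exactly $\NumRound-i$ further rounds. Hence if in round $i$ the algorithm queries anything other than one of $u_{i,j},\Not{u_{i,j}}$ for each $j$, it cannot afterwards distinguish which of the $2\EachRound$ critical gadgets at level $i$ contains the target \emph{and} still finish on time. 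This threat is what pins down the query in round $i$; without it, nothing stops the algorithm from hedging. Your write-up should make this forcing argument explicit, since ``controlling shortest-path ties'' is actually the easier part --- the graph is unweighted and the ties are deliberate.
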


The proof consists of a reduction from the
\textsc{Quantified Boolean Formulas (QBF)} problem and its variant
with a constant depth of quantifier nesting.

The graphs constructed in the $\Sigma_{2\NumRound-5}$-hardness proof
are undirected, but the \Hardness{PSPACE}{hardness} proof crucially
relies on long directed cycles to build a threat of a large number
of required queries, while avoiding an exponential blowup in the
reduction.\footnote{
Indeed, long cycles are the main obstacle
to efficient \binarysearch: 
in Appendix~\ref{sec:almost-undirected},
we show that the positive result from Section~\ref{sec:deterministic}
can be extended to strongly connected directed graphs which are almost
undirected in the sense that each edge appears in a short cycle.}

If node queries can incur non-uniform cost, then the same effect can
be obtained for undirected graphs, giving us the following theorem

\begin{theorem}\label{thm:undirected-cost}
It is \Hardness{PSPACE}{complete} to determine,
given a connected and unweighted
undirected graph $G = (V, E)$ 
with vertex cost function $\COSTFUNCTION: V \rightarrow Z_{+}$ 
and an integer \TOTQ, whether there is an adaptive query strategy
which always finds the target while paying a total cost of at most \TOTQ.
This hardness holds even when $G$ has diameter at most $13$.
\end{theorem}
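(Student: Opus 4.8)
The plan is to prove \Hardness{PSPACE}{completeness} in the usual two steps, obtaining hardness by recycling the reduction behind the \Hardness{PSPACE}{hardness} part of Theorem~\ref{thm:directed-weak}, but replacing its ``long directed cycle'' threat --- which is exactly what an undirected graph cannot support --- by a single vertex of very large query cost.

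\textbf{Membership in PSPACE.} A querying strategy is the algorithm's half of a game: at each step the algorithm names a vertex to query and the adversary returns an answer consistent with \emph{some} still-possible target, until the candidate set $S$ is a singleton; the algorithm wins if its accumulated cost never exceeds \TOTQ. Since $p=1$, re-querying a vertex can only repeat an answer while wasting budget, so without loss of generality no vertex is queried twice; hence every play has length at most $n$ and out-degree at most \MAXDEGREE. Deciding whether the algorithm has a winning strategy is therefore the evaluation of an alternating game tree of polynomial depth and branching --- equivalently, a quantified sentence $\exists q_1\,\forall a_1\,\exists q_2\,\forall a_2\,\cdots$ with polynomially many alternations over polynomial-size domains and a polynomial-time matrix --- and this lies in PSPACE.

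\textbf{The reduction.} We reduce from \textsc{Qbf}. Given $\Phi = \exists x_1\,\forall x_2\,\exists x_3\cdots Q_n x_n\,\phi$ with $\phi$ a CNF, we construct an unweighted undirected graph $G$, a cost function \COSTFUNCTION, and a budget \TOTQ so that a strategy of worst-case cost at most \TOTQ exists iff $\Phi$ is true. Query choices of the algorithm play the part of the $\exists$-player, and answers of the adversary the part of the $\forall$-player. For each variable $x_i$ there is a small gadget exposing two alternatives corresponding to $x_i = \true$ and $x_i = \false$; for an existential $x_i$ the algorithm commits to one alternative by \emph{which} vertex it queries, while for a universal $x_i$ the adversary commits by which incident edge it returns. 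The gadgets are chained so that the alternative for $x_i$ must be resolved --- the corresponding part of $S$ narrowed down --- before any query can reveal information about $x_{i+1}$; this enforces the quantifier order, and, crucially, it is enforced by the cost function (premature queries are useless or unaffordable) rather than by distances, since the graph is unweighted. After $x_n$, a clause-checking stage reads off the committed assignment: if it satisfies $\phi$, a short, fixed sequence of unit-cost queries isolates the target, and we take \TOTQ just large enough for this honest run; if it falsifies $\phi$, the surviving candidate set is a \emph{bomb} --- a single vertex $z$ of cost $C > \TOTQ$ together with $M > \TOTQ$ cheap ``decoy'' candidates attached to $z$ in a star-like fashion, so that querying any decoy returns only the edge back toward $z$ (eliminating that one decoy), the only pinpointing query is $z$ itself, and defusing the bomb costs either $C$ or $M-1$ unit queries, both exceeding \TOTQ. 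This bomb is the undirected-with-costs analogue of the length-$L$ directed cycle used in the directed proof: a polynomial-size gadget threatening more than the budget without blowing up the instance. One then checks, by induction on the quantifier prefix, that plays of the search game correspond to plays of the \textsc{Qbf} game, so the algorithm has a budget-\TOTQ strategy iff $\Phi$ is true; the construction is plainly polynomial-time.

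\textbf{Bounded diameter.} To force diameter at most $13$, every gadget is built with constant hop-radius and all gadgets are wired to a small fixed set of hub vertices, so that any two vertices lie within a constant number of hops, which with the constants chosen suitably is at most $13$. This is the main obstacle: in an \emph{unweighted} graph all distances are tiny integers, many shortest paths coexist, and there is almost no room to encode which branch contains the target, so one must verify that every query in every gadget still returns \emph{exactly} the intended edge, that no shortcut through a hub lets the algorithm learn a later variable's branch early or finish the clause check without having ``paid'' for the earlier gadget choices, and that the bomb is genuinely un-defusable from inside this cramped metric. The cost function is what makes this possible --- it does the work that long edges or long cycles would otherwise do, so the gadgets can be kept shallow. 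Once these local properties are established, correctness of the reduction and the diameter bound follow, completing the proof.
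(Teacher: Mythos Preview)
Your high-level idea --- replace the long directed cycles of the \Hardness{PSPACE}{hardness} construction in Theorem~\ref{thm:directed-weak} by expensive-to-query gadgets --- is exactly what the paper does, and your membership argument is fine. But the concrete reduction you sketch has a genuine gap, and it diverges from the paper in a way that matters.

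In the paper's construction there is not a single terminal ``bomb''; rather, the directed cycles that you want to replace are the \emph{critical gadgets} \CritGadgetA{i}, \CritGadgetB{i}, one pair per quantifier level $i$, and in the directed proof they have \emph{graded} lengths so that identifying a target inside a level-$i$ gadget costs exactly $\TOTQ - i$. This grading is precisely what forces the \firstplayer to spend his \Ordinal{i} query on $u_i$ or $\Not{u_i}$: any other \Ordinal{i} query leaves both level-$i$ critical gadgets live, and there is then not enough remaining budget to both pick the right one and search inside it. The paper's undirected-with-costs version keeps this structure verbatim and simply replaces each critical gadget by a two-vertex edge whose endpoints each carry query cost $\TOTQ - i$; Lemma~\ref{lemma:gadgets} then holds for the same reason as before. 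Your proposal instead puts a single bomb after the clause stage and asserts that ``premature queries are useless or unaffordable,'' but you give no mechanism for this, and a single terminal threat cannot by itself enforce the quantifier order: nothing in your sketch prevents the \firstplayer from querying the $x_3$-gadget before the $x_1$-gadget and still reaching the clause stage with the same residual budget.

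A second gap is the diameter bound. Your ``wire everything to hubs'' move is exactly the kind of shortcut that can destroy the reduction: a hub adjacent to many gadgets will, when queried (or when lying on shortest paths from a queried vertex), leak information across levels and across clause gadgets. The paper needs no hubs at all; because the critical gadgets shrink from linear-length cycles to two-vertex edges, and the clause gadgets are constant-length paths, the whole graph already has diameter at most $13$ by inspection. So the fix is not to add structure for small diameter but to observe that the cost-based gadgets are already tiny. In short: reuse the construction of Theorem~\ref{thm:directed-weak} wholesale, swap each level-$i$ critical gadget for a cost-$(\TOTQ-i)$ edge, lengthen the clause paths to $15$ (so three unit-cost queries are needed inside one), drop the directed $\ExtraNode\!\to\! u_{i,j}$ edges, and set $\TOTQ = \NumExists + 3$; the correctness argument and the diameter bound then carry over essentially unchanged.
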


Notice that for any connected and unweighted
undirected graph with diameter
\MAXDIST, the target can always be found using at most $\MAXDIST - 1$
queries, as follows: 
start with an arbitrary node; whenever the edge 
$e = (q, v)$ is revealed in response to a query $q$, the next
vertex to query is $v$.
If the optimal number of queries is upper-bounded by a constant,
then the exhaustive search outlined in Section~\ref{sec:deterministic} needs
only polynomial time to find an optimal querying strategy. 
Thus, Theorem~\ref{thm:undirected-cost} shows that non-uniform costs
significantly change the problem.



\section{Conclusion and Open Problems} \label{sec:conclusion}

We proved that using vertex queries (which reveal either that
the target is at the queried vertex, or exhibit an edge on
a shortest path to the target), a target can always be found in an
undirected positively weighted graph using at most $\log n$ queries. 
When queries give the correct answer only with probability
$p > \Half$, we obtain an (essentially information-theoretically optimal)
algorithm which finds the target with probability at least
$1 - \ErrProb$ using
$(1 - \ErrProb) \log n /(1 - H(p)) + o(\log n) + O(\log^2(1/\ErrProb))$
queries in expectation.
We also proved several hardness results for
different variants, ranging from hardness under the ETH and SETH to
\Hardness{PSPACE}{completeness}.

Our work raises a number of immediate technical questions
for future work, but also some broader directions.
Some of the more immediate directions are:
\begin{enumerate}
\item In light of our hardness results for finding the exact answer,
how hard is it to find an \emph{approximately} optimal strategy?
Can this be done in polynomial time?

\item For directed graphs which are almost undirected,
close the gap
(see Appendix~\ref{sec:almost-undirected})
between the upper bound of 
$\MAXCYCLE \ln(2) \cdot \log n$, and the lower bound of
$\frac{\MAXCYCLE-1}{\log \MAXCYCLE} \cdot \log n$ on the number of
queries required.
Clearly, if $\MAXCYCLE \in \omega(n/\log n)$,
the upper bound cannot be tight, and we conjecture that
the lower bound may be asymptotically tight.

\item For undirected graphs, is the problem of deciding if an
adaptive strategy using at most \TOTQ queries exists in \Complexity{NP}?
What about \Complexity{Co-NP}?

\item For the semi-adaptive version, close the gap between the 
$\Sigma_{2\NumRound-5}$-hardness and the membership in 
$\Sigma_{2\NumRound-1}$. We believe that a more involved hardness
proof establishes $\Sigma_{2\NumRound-3}$-hardness,
which still leaves a gap with the upper bound.
\end{enumerate}

The semi-adaptive version of the problem raises a number of
interesting questions. In preliminary work, we have shown that
for the \NumDim-dimensional hypercube (with $n = 2^\NumDim$ vertices),
an algorithm using 2 queries per round can
use significantly fewer rounds than one using only a single query per round
(namely, $\frac{\log n}{2}$ instead of $\log n$).
Surprisingly, moving from 2 to 3 or more queries per round does
not give any further significant improvement; the next improvement
in the worst-case number of rounds arises when the number of queries
per round is $\EachRound = \Omega(\log \NumDim)$.
This fact is based on a combinatorial result of
Kleitman and Spencer~\cite{kleitman-spencer:1973:independent-sets}.
This observation raises the question
how a larger number of queries per round affects
the number of rounds required.
In particular, 
when the worst case over all connected undirected graphs is
considered,
will a larger number of queries per round
always guarantee a multiplicatively smaller number of rounds?

As discussed in the introduction, we can think of a positively
edge-weighted graph as a finite metric space.
This naturally suggests a generalization to infinite metric spaces. 
For instance, we may try to solve a puzzle such as: 
``Which major city in the United States lies
straight East of San Diego,
and straight Southeast of Seattle?\footnote{Dallas, TX.}''
More generally, for an arbitrary metric space $d$, 
the ``direction'' in which the target lies
from the queried point $q$ can be characterized by revealing
a point $v \neq q$ such that $d(q,v) + d(v,t) = d(q,t)$.

When the metric space is $\R^\NumDim$ with $\NumDim \geq 2$,
endowed with the $L_p$ norm for $p \in (1, \infty)$,  
it suffices to query two points $q_1, q_2$
not collinear with the target:
their query responses reveal two lines on which $t$ must lie,
and these lines intersect in a unique point.
For the $L_1$ norm in $\NumDim \geq 2$ dimensions,
on the other hand,
an adversary can always provide only answers
that get closer to the target in the first dimension,
without ever revealing any information about the second. 
Hence, no adaptive algorithm can provide any guarantees.

For more general metric spaces, e.g., arbitrary surfaces, the question
of how many queries are required to locate a point is wide open. 
In preliminary work, we have shown that in an
arbitrary polygon with $n$ corners, a target can always be found using
$O(\log n)$ queries. However, our approach does not extend to higher
genus.

There are many other fundamental generalizations of the model
that are possible, and would be of interest. 
These include querying an approximate median instead of an exact
one, receiving answers that lie on approximately shortest paths, or
searching for multiple targets at once. Absent careful definitions,
some of these generalizations make the problem trivially impossible
to solve, but with suitable definitions,
challenging generalizations arise.



\section{Acknowledgments}

We would like to thank Noga Alon, Sepehr Assadi, Yu Cheng,
Shaddin Dughmi, Bobby Kleinberg, Daniel Ro\ss, Shanghua Teng, Adam
Wierman, and Avi Wigderson for useful discussions and pointers.



\bibliographystyle{abbrv}
\bibliography{../bibliography/names,../bibliography/conferences,../bibliography/references}


\newpage

\appendix


\section{Hardness Proofs} \label{sec:hardness-proofs}

Here, we supply the hardness proofs omitted
from Section~\ref{sec:negative}.
We begin with the proof for Theorems~\ref{thm:directed-weak}
and \ref{thm:undirected-cost}. 
The proofs are quite similar, so we present them together.

\Eat{
\begin{rtheorem}{Theorem}{\ref{thm:directed-weak}}
\begin{enumerate}
\item When \NumRound is part of the input,
even \TSSA{\NumRound}{1} is \Hardness{PSPACE}{complete}
for directed graphs $G$.
\item When $\NumRound \geq 3$ is a constant
but \EachRound is part of the input, 
\TSSA{\NumRound}{\EachRound} is contained in $\Sigma_{2\NumRound-1}$
in the polynomial hierarchy,
and $\Sigma_{2\NumRound-5}$-hard.
This holds even when the graph $G$ is undirected.
\end{enumerate}
\end{rtheorem}

\begin{rtheorem}{Theorem}{\ref{thm:undirected-cost}}
It is \Hardness{PSPACE}{complete} to determine,
given a connected and unweighted
undirected graph $G = (V, E)$ 
with vertex cost function $\COSTFUNCTION: V \rightarrow Z_{+}$ 
and an integer \TOTQ, whether there is an adaptive query strategy
which always finds the target while paying a total cost of at most \TOTQ.
This hardness holds even when $G$ has diameter at most $9$.
\end{rtheorem}
}

First, recall (see \cite[Chapter 17.2]{papadimitriou:1994:complexity}
for a more detailed discussion)
that $\Sigma_{2\NumRound - 1}$ (here, we are interested
only in odd classes) can be characterized as
all problems that can be expressed as
\[
\exists \vc{x}_1 \forall \vc{y}_1 \exists \vc{x}_2 \forall \vc{y}_2
\cdots \exists \vc{x}_\NumRound:
F(\vc{x}_1, \vc{y}_1, \vc{x}_2, \vc{y}_2, \ldots, \vc{x}_\NumRound),
\]
where each $\vc{x}_i = (x_{i,1}, \ldots, x_{i,\EachRound}),
\vc{y}_i = (y_{i,1}, \ldots, y_{i,\EachRound})$ 
is a vector of \EachRound Boolean variables, 
and $F$ is a predicate that can be evaluated in polynomial time.
A canonical hard problem for $\Sigma_{2\NumRound-1}$, which we will
use for the hardness proof here, is the problem
\GQBF{\NumRound}, in which we specifically choose 
$F(\vc{x}_1, \vc{y}_1, \ldots, \vc{x}_\NumRound) = \bigwedge_{\ell = 1}^m \Clause{\ell}$
as a CNF formula.
Without loss of generality, we may assume that no clause contains
both a variable and its negation, since such a clause is always
satisfied, and can thus be pruned.
When $\NumRound$ is part of the input, \GQBF{\NumRound} is equivalent
to the well-known QBF problem --- in that case,
the problem is \Hardness{PSPACE}{hard}
even when each $\vc{x}_i, \vc{y}_i$ is just a single Boolean variable.

\GQBF{\NumRound} (and thus also QBF) can be considered as a two-player
game in which the players take turns choosing values for the variables.
During the \Ordinal{i} round, the first player assigns
either \true or \false to each of the \EachRound variables in $\vc{x}_i$;
subsequently, the second player chooses Boolean values for the
$\EachRound$ variables in $\vc{y}_i$. 
The first player's objective is to satisfy $F$ while the second player
wants $F$ to become \false.
Using this interpretation, \GQBF{\NumRound} can be rephrased as follows:
``Is there a winning strategy for the first player?''

\begin{extraproof}{Theorem \ref{thm:directed-weak}}
We consider the adaptive query problem as a two-player game,
in which the first player (whom we call the \firstplayer)
queries \EachRound vertices at a time, while the second player (the
\secondplayer), for each queried vertex $q$, chooses an
outgoing edge from $q$ lying on a shortest path from $q$ to the target. 
The \firstplayer wins if after \NumRound rounds, he can uniquely
identify the target vertex based on the responses, while the
\secondplayer wins if after \NumRound rounds of \EachRound queries
each, there is still more than one potential target vertex
consistent with all answers.

To prove membership in $\Sigma_{2\NumRound - 1}$ (and thus also in
\Complexity{PSPACE} when \NumRound is part of the input), first
notice that the \firstplayer's choice of \EachRound nodes to query
can be encoded in $\EachRound \log(n)$ bits, i.e., Boolean variables,
as can the \secondplayer's response of \EachRound edges.
Membership in $\Sigma_{2\NumRound}$ would now be obvious,
as we could quantify over all of the \secondplayer's responses,
and given all responses, it is obvious
how to decide whether the target is
uniquely identified.

To prove the stronger bound of $\Sigma_{2\NumRound - 1}$, 
consider the following decision problem:
given all queries, as well as possibly responses to some of them, 
can the \secondplayer respond to the remaining queries while
keeping at least two candidate targets?
This question can be decided as follows.
The \secondplayer enumerates all target pairs $\Set{t_1, t_2}$.
For each query independently, she checks if there is a
response that would be consistent with both $t_1$ and $t_2$.
If such responses exist, then the \secondplayer can ensure that the
\firstplayer cannot differentiate between $t_1$ and $t_2$; otherwise,
the \firstplayer will succeed. By exhaustively checking all pairs,
the \secondplayer tries all possible winning strategies.

Let $\NumExists = \NumRound - 2$.
To prove hardness, we reduce from \GQBF{2\NumExists - 1},
i.e., we let $\NumExists$ be the number of $\exists$ quantifiers
in the formula.
For the \Hardness{PSPACE}{hardness} proof, we only need each
$\vc{x}_i, \vc{y}_i$ to be a single variable.

Given an instance of \GQBF{2\NumExists - 1}, 
in which each $\vc{x}_i, \vc{y}_i$ has (without loss of generality)
exactly \EachRound Boolean variables
($\EachRound = 1$ for the \Hardness{PSPACE}{hardness} reduction),
we construct an unweighted strongly connected graph 
$G = (V, E)$ with the following pieces,
illustrated in Figure~\ref{fig:reduction}.

\InsertFigure{reduction}{\columnwidth}%
{A schematic depiction of the graphs produced by the reduction.}

\begin{itemize}
\item For each variable $x_{i, j}$,
add two \emph{literal vertices of type \RNum{1}},
named $u_{i, j}$ and $\Not{u_{i, j}}$.
Similarly, corresponding to each variable $y_{i, j}$,
add two \emph{literal vertices of type \RNum{2}},
named $v_{i, j}$ and $\Not{v_{i, j}}$. 
For every $1 \leq i < \NumExists$
and $1 \leq j \leq \EachRound$,
add undirected edges
from both $u_{i, j}$ and $\Not{u_{i, j}}$
to both $v_{i, j}$ and $\Not{v_{i, j}}$.
Add undirected edges between each pair of literal vertices
of type \RNum{2}.

Furthermore, add an extra vertex \ExtraNode.
Add undirected edges between \ExtraNode
and each of the literal vertices of type \RNum{2},
as well as between \ExtraNode
and both $u_{\NumExists, j}$ and $\Not{u_{\NumExists, j}}$
for all $1 \leq j \leq \EachRound$.
For the \Hardness{PSPACE}{hardness} reduction,
also add directed edges from \ExtraNode to both
$u_{i,j}$ and $\Not{u_{i,j}}$ for $i < \NumExists$;
however, for the \Hardness{$\Sigma_{2\NumExists - 1}$}{hardness},
there is no edge between \ExtraNode and
$u_{i,j}$ or $\Not{u_{i,j}}$ for $i < \NumExists$.

\item For each $1 \leq i \leq \NumExists,
1 \leq j \leq \EachRound$,
add two \emph{critical gadgets} \CritGadgetA{i,j} and \CritGadgetB{i,j}
that will be specified momentarily.
Choose two arbitrary distinct vertices
in each critical gadget as \emph{critical nodes}. 
Let \CritNodeA{i,j} and \Not{\CritNodeA{i,j}}
be critical nodes of \CritGadgetA{i,j},
and \CritNodeB{i,j} and \Not{\CritNodeB{i,j}} be in \CritGadgetB{i,j}.
Add undirected edges between $u_{i, j}$
and both \CritNodeA{i,j} and \CritNodeB{i,j},
as well as between $\Not{u_{i, j}}$
and both \Not{\CritNodeA{i,j}} and \Not{\CritNodeB{i,j}}.

For the \Hardness{PSPACE}{hardness} reduction, the \Ordinal{i}
critical gadgets are directed cycles of length $\NumExists + 3 - i$;
for the \Hardness{$\Sigma_{2\NumExists - 1}$}{hardness} proof, 
they are undirected paths of length
$2(\EachRound + 1)^{\NumExists + 2 - i} - 1$.

\item Corresponding to each clause \Clause{\ell} in $F$,
add two \emph{clause gadgets} \ClGadgetA{\ell} and \ClGadgetB{\ell}.
Each clause gadget is an undirected path of length
$2(\EachRound + 1)^2-1$.
(For the \Hardness{PSPACE}{hardness} reduction, where $\EachRound = 1$,
the path has length 7.)
Let \ClNodeA{\ell} and \ClNodeB{\ell} be vertex number
$2(\EachRound+1)$ on the respective paths.
For the case $\EachRound = 1$, this is the midpoint of the path, and
more generally, the leftmost of the set of vertices dividing the path
into $\EachRound+1$ equal-sized subpaths.

\item Whenever a literal ($x_{i, j}$,
$\Not{x_{i, j}}$, $y_{i, j}$, $\Not{y_{i, j}}$)
appears in clause \Clause{\ell}, add two \emph{intermediate nodes}
specific to this pair of a literal and clause.
Add undirected edges between the intermediate nodes and
the corresponding literal vertex
($u_{i, j}$, $\Not{u_{i, j}}$, $v_{i, j}$, $\Not{v_{i, j}}$)
and add an undirected edge between the first intermediate node and
\ClNodeA{\ell}, and between the second intermediate node and \ClNodeB{\ell}.
(The intermediate nodes connect the corresponding literal
vertex to both \ClNodeA{\ell} and \ClNodeB{\ell}
via paths of length $2$.) See Figure~\ref{fig:reduction}.
\end{itemize}

Recall that
we set the total number of rounds to be $\NumRound = \NumExists + 2$.
The \firstplayer is allowed to query \EachRound vertices in each round.
Notice that the graph we construct is strongly connected, because all
the critical gadgets and clause gadgets are strongly connected.
Moreover, although there are some directed edges in the construction for
\Hardness{PSPACE}{hardness}, all the edges are undirected
for the \Hardness{$\Sigma_{2\NumExists-1}$}{hardness} reduction.

Furthermore, while the size of the critical gadgets for the
$\Sigma_{2\NumExists - 1}$-hardness reduction grows exponentially in
\NumRound, it is polynomial for constant
\NumExists (and therefore, \NumRound).
The reduction thus always takes polynomial time.
We begin with a simple lemma elucidating the role of the critical
gadgets and clause gadgets.

\begin{lemma}[Gadgets] \label{lemma:gadgets}
\begin{enumerate}
\item Conditioned on knowing that the target is in
$\CritGadgetA{i,j}$ (or $\CritGadgetB{i,j}$), and nothing else,
$\NumExists + 2 - i = \NumRound - i$ rounds of \EachRound queries
are necessary and sufficient to find the target in the worst case.
\item Conditioned on knowing that the target is in a specific clause
gadget $\ClGadgetA{\ell}$ or $\ClGadgetB{\ell}$, but nothing more, 
2 more rounds of queries are necessary.
The same number is also sufficient to identify the target uniquely,
even when the target may be in an intermediate node adjacent to the
clause gadget.
\end{enumerate}
\end{lemma}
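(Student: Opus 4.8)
The plan is to handle each gadget as an isolated search instance: ``the target is known to lie in gadget $G'$ (and nothing else)'' is treated as a game in which the candidate set starts as the vertex set of $G'$ (plus, for part 2, the adjacent intermediate nodes), while the \firstplayer may still query any vertex of $G$. The first thing I would establish is the structural observation that makes queries outside a gadget essentially useless: by construction, every critical gadget is joined to the rest of $G$ only at its two critical nodes, and every clause gadget only at $\ClNodeA{\ell}$ (resp.\ $\ClNodeB{\ell}$). Hence, for any vertex $q$ outside a gadget, every shortest path from $q$ into the gadget enters through one of these (at most two) attachment points, and --- since the gadget itself is a path or a cycle --- the entry point used is a monotone function of the target's position in the gadget. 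Consequently a single query, whether inside or outside, splits the current set of candidate positions into at most two intervals about one ``threshold'' position (an inside query at $p_c$ thresholds at $c$ and additionally excludes $p_c$; an outside query thresholds at the point where its two entry options tie). For the PSPACE gadgets, which are directed cycles, the situation degenerates further: a query to a non-target vertex returns the unique out-edge, which lies on the (unique) shortest path to every target, so it conveys nothing at all.

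For part 1, I would first prove sufficiency. In the undirected case the gadget is a path on $2(\EachRound+1)^{k}-1$ vertices with $k=\NumExists+2-i$; the \firstplayer runs $(\EachRound+1)$-ary search, each round querying the $\EachRound$ vertices that split the current candidate sub-path into $\EachRound+1$ equal pieces, which takes a sub-path on $2(\EachRound+1)^{j}-1$ vertices to one on $2(\EachRound+1)^{j-1}-1$ vertices, so $k$ rounds leave a single candidate. In the directed case the gadget is a cycle of length $\NumExists+3-i$, and querying any $\NumExists+2-i$ distinct vertices either hits the target or leaves exactly one unqueried vertex. For necessity I would use an adversary that maintains an interval $I$ of surviving candidates on the gadget path (initially all of it): given the round's $\le \EachRound$ queries, their $\le \EachRound$ thresholds cut $I$ into $\le \EachRound+1$ pieces, the adversary commits to the largest piece $P$, and answers every query consistently with ``the target is in $P$'' --- always possible since $P$ lies entirely on one side of each threshold (a tie, if a threshold falls on a vertex, is broken toward $P$). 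At most $\EachRound$ vertices of $I$ are consumed as thresholds, so $|I|+1$ shrinks by a factor at most $\EachRound+1$ per round; after $k-1$ rounds $|I|+1 \ge 2(\EachRound+1)\ge 4$, so at least two candidates survive. The same counting (with non-target queries revealing nothing) gives the lower bound $L-1$ for a directed cycle of length $L$.

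For part 2, necessity of two rounds is the $k=2$ instance of the above --- extra candidates at the intermediate nodes only help the adversary --- leaving at least $2\EachRound+1 \ge 3$ candidates after one round. For sufficiency I would have the \firstplayer query, in round one, the $\EachRound$ dividers of the clause path, one of which is $\ClNodeA{\ell}$ (resp.\ $\ClNodeB{\ell}$), the unique gadget vertex incident to all the relevant intermediate nodes. If a queried vertex is the target we are done; if the query at $\ClNodeA{\ell}$ returns an edge into an intermediate node $X$, then since that edge is the unique shortest $\ClNodeA{\ell}$-$X$ path (each intermediate node has degree $2$), $X$ is the only candidate the answer can point to, so the target is $X$; otherwise the responses confine the target to one of the $\EachRound+1$ subpaths, each on $2\EachRound+1$ vertices and containing no intermediate nodes, and a second round querying that subpath's $\EachRound$ dividers cuts it into $\EachRound+1$ singletons, leaving at most one candidate.

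I expect the main obstacle to be the necessity direction: arguing rigorously that no query --- including one made outside the gadget or chosen adaptively from earlier answers --- can extract more than one threshold's worth of information about the target's position inside the gadget, and that the adversary's evolving commitment to a piece $P$ is always globally realizable by some fixed target. This hinges entirely on the ``attached at $\le 2$ vertices'' structure and the monotonicity of the entry point, together with a routine but delicate check of the floor/ceiling arithmetic --- which is precisely why the gadget path lengths are taken of the form $2(\EachRound+1)^{k}-1$.
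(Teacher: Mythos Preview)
The proposal is correct and follows the same approach as the paper's (very terse) proof: for sufficiency, evenly spaced queries implement $(\EachRound+1)$-ary search on the path and exhaustive querying handles the directed cycle; for necessity, the adversary keeps the largest of the at most $\EachRound+1$ pieces, and induction on the length $2(\EachRound+1)^{k}-1$ gives the bound. Your write-up is considerably more careful --- in particular you explicitly treat queries made \emph{outside} the gadget via the ``at most one threshold per query'' observation, a point the paper's three-sentence proof omits entirely --- but the underlying strategy is identical.
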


\begin{proof}
We begin with the first part of the lemma:
For the directed cycle, until all but one of the vertices have been
queried, there are always at least two candidates; querying all but
one vertex is also clearly sufficient.

For a path of length $M$, querying \EachRound nodes can at most
eliminate those nodes, and results at best in $\EachRound + 1$ subpaths
of length $(M-\EachRound)/(\EachRound+1)$ each.
Spacing the query nodes evenly matches this bound.
A simple proof by induction now establishes the bound.

The proof of the second part is analogous: 
The length of the path is chosen to require exactly 2 rounds of
\EachRound queries.
The choice of the vertices \ClNodeA{\ell} and \ClNodeB{\ell} 
ensures that if the target was at an intermediate vertex, this
fact is revealed in the first round of querying.
\end{proof}

We claim that the first player in the QBF game has a winning strategy
if and only if the \firstplayer can find any target in $G$ using at most
\NumRound rounds of queries.

(1) First, assume that there exists a strategy for finding
the target in $G$ using no more than \NumRound rounds of queries.

Let $Q_i$ be the set of vertices queried by the \firstplayer
in the \Ordinal{i} round.
We claim the following, for each $i \leq \NumExists$: 
if $Q_{i'} \subseteq \SpecSet{u_{i',j}, \Not{u_{i',j}}}{j=1,\ldots,\EachRound}$
for each $i' < i$, 
and all of the second player's responses were toward either
$v_{i',j}$ or $\Not{v_{i',j}}$, then 
$Q_i \subseteq \SpecSet{u_{i,j}, \Not{u_{i,j}}}{j=1, \ldots, \EachRound}$,
and furthermore, $Q_i$ contains exactly one of 
$\Set{u_{i,j}, \Not{u_{i,j}}}$ for each $j$.
The reason is that under the assumption about prior queries,
all critical gadgets \CritGadgetA{i,j} and \CritGadgetB{i,j} are still in $S$.
Having already used $i-1$ rounds previously, the \firstplayer has only
$\NumRound - (i-1)$ rounds left,
and $\NumRound - i$ rounds are
necessary in order to find a target in
one of these critical gadgets,
by the first part of Lemma~\ref{lemma:gadgets}.
Since the \secondplayer can also choose which of the $2\EachRound$
critical gadgets contains the target, the \firstplayer has to identify the
correct gadget using at most one round, which is only accomplished by
choosing one of $u_{i,j}, \Not{u_{i,j}}$ for each $j$.

We now define the following mapping from the \firstplayer's strategy
to a winning strategy for the first player in the formula game.
When the \firstplayer queries $u_{i,j}$, the first player assigns 
$x_{i,j} = \true$, whereas when the \firstplayer queries $\Not{u_{i,j}}$,
the first player sets $x_{i,j} = \false$. 
For $i < \NumExists$, in response to the first player's setting of $x_{i,j}$,
the second player will set $y_{i,j}$ either \true or \false.
If the second player sets $y_{i,j} = \true$, then we have the \secondplayer
reveal $\Not{v_{i,j}}$, whereas when
$y_{i,j} = \false$, the \secondplayer reveals $v_{i,j}$ instead.
By the previous claim, if $i < \NumExists$, the vertex player's next
queries must be to either $u_{i + 1,j}$ or $\Not{u_{i + 1,j}}$, meaning
that we can next set all of the $x_{i + 1,j}$ by the same procedure.
For the \Ordinal{\NumRound} round of queries, we make the \secondplayer
reveal the edge toward \ExtraNode.
We thus obtain a variable assignment to all $x_{i,j}$ and $y_{i,j}$,
and it remains to show that it satisfies all clauses \Clause{\ell}.

By assumption, having used
$\NumExists = \NumRound - 2$ rounds of queries, the vertex
player can always identify the target with the remaining 
$2$ rounds. 
Note that for each matching pair 
$(\ClGadgetA{\ell}, \ClGadgetB{\ell})$ of clause gadgets,
both \ClGadgetA{\ell} and \ClGadgetB{\ell} are entirely
in the candidate set, or both have been completely ruled out.
This is because all responses were pointing to $v_{i,j}$ or
$\Not{v_{i,j}}$, and were thus symmetric for both \ClGadgetA{\ell} and
\ClGadgetB{\ell}.
By the second part of Lemma~\ref{lemma:gadgets}, it would take at
least 2 rounds of queries to identify a node in a
known clause gadget, and one more to identify the correct gadget.
This is too many rounds of queries, so no clause gadgets can 
be in $S$, and all clause gadget instances must have been
ruled out previously.

This means that for each clause \Clause{\ell}, there must have been
$1 \leq i \leq \NumExists$
such that the responses to the \Ordinal{i} query
ruled out the target being in either \ClGadgetA{\ell} or \ClGadgetB{\ell}.
Suppose that the queried set $Q_i$ contained a node
$q_{i,j} \in \Set{u_{i,j}, \Not{u_{i,j}}}$,
and $v_{i,j}$ or $\Not{v_{i,j}}$ (or \ExtraNode if $i = \NumExists$) 
is the answer.
Then, the clause gadgets could have been ruled out in one of two ways:
\begin{itemize}
\item $q_{i,j}$ is connected to \ClGadgetA{\ell} and \ClGadgetB{\ell}
via intermediate nodes, 
i.e., there is a path of length $2$ from $q_{i,j}$ to 
\ClNodeA{\ell} and also \ClNodeB{\ell}.
These clause gadgets have been ruled out because the answer of the query
was not toward one of the intermediate nodes
(connected to \ClNodeA{\ell} or \ClNodeB{\ell}).
In this case, by definition,
the literal corresponding to $q_{i,j}$ (either $x_{i,j}$ or $\Not{x_{i,j}}$)
is in \Clause{\ell}, and because $q_{i,j}$ was queried,
the literal is set to \true. 
Thus, \Clause{\ell} is satisfied under the assignment.

\item An intermediate node connects $v_{i,j}$ to \ClGadgetA{\ell} (and
\ClGadgetB{\ell}) 
and the \secondplayer chose $\Not{v_{i,j}}$, or --- symmetrically ---
an intermediate node connects $\Not{v_{i,j}}$ 
to \ClGadgetA{\ell} (and \ClGadgetB{\ell})
and the \secondplayer chose $v_{i,j}$
Without loss of generality, assume the first case.
In that case, by definition, $y_{i,j} = \true$ (because $\Not{v_{i,j}}$ was
chosen), and by construction of the graph, $y_{i,j} \in \Clause{\ell}$.
Again, this ensures that \Clause{\ell} is satisfied under the assignment.
\end{itemize}
In summary, we have shown that all clauses are satisfied,
meaning that the first player in the formula game has won the game.

(2) For the converse direction, assume that the first player
has a winning strategy in the formula game.
We will use this strategy
to construct a winning strategy for the \firstplayer in the target
search game.

We begin by considering a round $i \leq \NumExists$
in which the vertex player needs to
make a choice. 
Assume that for each round $i' < i$, 
$Q_{i'} \subseteq \SpecSet{u_{i',j},\Not{u_{i',j}}}{j=1,\ldots,\EachRound}$, 
and furthermore, $Q_{i'}$ contains exactly one of
$\Set{u_{i',j},\Not{u_{i',j}}}$ for each $j$.
Also assume that the \secondplayer responded
with either $v_{i',j}$ or $\Not{v_{i',j}}$ for each query of
$u_{i',j}$ or $\Not{u_{i',j}}$.
(We will see momentarily that these assumptions are warranted.)
Interpret an answer pointing to $v_{i',j}$ as setting $y_{i',j} = \false$,
and an edge pointing to $\Not{v_{i',j}}$ as setting $y_{i',j} = \true$.
Consider the choice for $x_{i,j}$ prescribed by the first player's assumed
winning strategy, based on the history so far.
The vertex player will query $u_{i,j}$ if $x_{i,j} = \true$,
and $\Not{u_{i,j}}$ if $x_{i,j} = \false$.
We distinguish several cases, based on the response:
\begin{itemize}
\item If one of the queried vertices is the target, then clearly, the
  \firstplayer has won.

\item If the \secondplayer's response is toward an instance of
critical gadgets, then the target is known to lie in that
critical gadget. 
By Lemma~\ref{lemma:gadgets}, there exists a query strategy for
\CritGadgetA{i} (or \CritGadgetB{i}) 
which can find the target using at most $\NumRound - i$ rounds of
\EachRound queries. 
Together with the $i$ rounds of queries already used by the \firstplayer,
this gives a successful identification with at most \NumRound
rounds of queries total.

\item If the answer is toward an intermediate node connected to one of
the queried nodes, then the target must lie in the corresponding
clause gadget, say \ClGadgetA{\ell},
or is one of the intermediate nodes connected to this gadget.
By Lemma~\ref{lemma:gadgets}, it takes at most 2 more
rounds of queries to identify the target; together with the first $i$
rounds of queries, this is a successful identification with at most
\NumRound rounds of queries.

\item This leaves the case when the \secondplayer
chooses the edge toward $v_{i,j}$ or $\Not{v_{i,j}}$
(or \ExtraNode, if $i = \NumExists$),
justifying the assumption made earlier
that for each of the first $i$ rounds of queries,
the \secondplayer responds by revealing edges
toward either $v_{i,j}$ or $\Not{v_{i,j}}$.
\end{itemize}

In summary, the fact that the assignment satisfies all \Clause{\ell} implies
that the target cannot lie on any clause gadget.
The fact that the \secondplayer responded with edges toward $v_{i,j}$
or $\Not{v_{i,j}}$ to the first \NumExists rounds of queries implies
that the target cannot lie on any critical gadgets, either.

The remaining case is the \Ordinal{\NumExists} round
of queries, when the \secondplayer's response may contain edges toward
\ExtraNode.
Then, the only candidate nodes remaining after \NumRound rounds of queries 
are (some of) the literal vertices, (some of) the intermediate nodes
and \ExtraNode. 
In this case, $2$ more rounds are sufficient
for both of the reductions, as follows.
For the \Hardness{PSPACE}{hardness} reduction,
query \ExtraNode, and subsequently query the vertex
$u_{i, j}$, $\Not{u_{i, j}}$, $v_{i,j}$ or $\Not{v_{i,j}}$
with which the \secondplayer responds.
(Recall that there are edges from \ExtraNode to all the literal vertices.)
That query either reveals the target, or points to 
an intermediate vertex which is then known to be the target.
For the \Hardness{$\Sigma_{2\NumExists - 1}$}{hardness} reduction,
\ExtraNode is queried in the first extra round.
Next, the vertex player simultaneously queries whichever of
$v_{i, j}, \Not{v_{i, j}}$ the \secondplayer responded with, 
as well as the node of $u_{i, j}, \Not{u_{i, j}}$ that he has not
queried yet.
This will either reveal the target in one of the queried nodes, or
reveal an edge to an intermediate vertex which is then known to be the
target.

Finally, notice that the number of rounds was chosen to be
$\NumRound = \NumExists + 2$;
expressing the hardness result in terms of the number
of rounds of the target search game implies the claimed
$\Sigma_{2\NumRound - 5}$-hardness for the semi-adaptive version with
\NumRound rounds. 
We believe that a somewhat more complex construction and argument
improves this bound to $\Sigma_{2\NumRound - 3}$-hardness.
\end{extraproof}

\begin{extraproof}{Theorem~\ref{thm:undirected-cost}}
We only detail the changes in the gadgets of
Theorem~\ref{thm:directed-weak} here.
First, we set an upper bound of $\TOTQ = \NumExists + 3$ on the cost,
instead of the bound $\NumRound = \NumExists + 2$ on the number of rounds.
Recall that for critical gadgets \CritGadgetA{i} and \CritGadgetB{i},
we only need the following property:
$\TOTQ - i$ queries are necessary and sufficient in the worst case
to find the target in each of them.
We therefore let each \CritGadgetA{i} and \CritGadgetB{i} consist of
two vertices connected with an undirected edge.
For each of these nodes, the query cost is $\TOTQ - i$.
The new critical gadgets still satisfy Lemma~\ref{lemma:gadgets},
which was all that the proof required.

Clause gadgets are replaced with paths of length 15 (instead of paths
of length 7), meaning that instead of two queries, three queries are
now necessary and sufficient to identify a target in a clause gadget.
Also, \ClNodeA{\ell} and \ClNodeB{\ell}
are again the middle points of these paths.
Finally, in the new construction, no directed edges are inserted from 
\ExtraNode to the $u_{i,j}, \Not{u_{i,j}}$ for $i < \NumExists$.
The only difference is that now, when all critical and clause gadgets have
been ruled out after $\NumExists$ queries, it 
takes $3$ more queries to find the target.
The rest of the proof stays the same as before.

It is easy to check that the diameter of the resulting graph
is at most $13$.
\end{extraproof}

\bigskip

We now prove Theorem~\ref{thm:eth-hardness}.
Hardness results based on ETH and SETH
have appeared in several recent works: for example,
Braverman et al.~\cite{braverman-ko-weinstein:2015:nash}
proved a \quasip-time lower bound for approximating
the best Nash equilibrium, while
Abbould et al.~\cite{abboud-backurs-williams:2015-lcs} and
Bringmann et al.~\cite{bringmann-kunnemann:2015:quadratic-lower-bound}
rule out sub-quadratic algorithms for a family of classical
string problems (e.g., Longest Common Subsequence).

\begin{extraproof}{Theorem~\ref{thm:eth-hardness}}
Let $\bigwedge_{\ell = 1}^m \Clause{\ell}$
be an instance of CNF-SAT with $n$ variables
and $m$ clauses (with $m$ polynomial in $n$).
Without loss of generality,
assume that $n = k^2$ is a perfect square.
(If it is not, we can add $O(\sqrt{n})$ dummy variables
to make it a perfect square.) 
Partition the variables into $k$ batches of $k$ variables each,
labeled $x_{j,i}$.

The overall construction idea is similar to the proofs of
Theorems~\ref{thm:directed-weak} and \ref{thm:undirected-cost}, but
slightly easier.
The (unweighted and undirected) graph looks as follows this time:

\begin{itemize}
\item For each batch $j$ ($1 \leq j \leq k$),
and each assignment $\vc{a} \in \Set{0,1}^k$,
construct three vertices: an \emph{assignment vertex}
$v_{j,\vc{a}}$ and two \emph{intermediate vertices}
$u_{j,\vc{a}}, u'_{j,\vc{a}}$.
Add edges between
$v_{j,\vc{a}}$ and $u_{j,\vc{a}}$, and between
$v_{j,\vc{a}}$ and $u'_{j,\vc{a}}$.
Add two extra nodes $\ExtraNode, \ExtraNode'$, connected via
an edge. 
Moreover, connect \ExtraNode with all assignment vertices
$v_{j,\vc{a}}$.

\item For each batch $j$ ($1 \leq j \leq k$),
add two critical gadgets \CritGadgetA{j} and \CritGadgetB{j},
each a simple path of length $2^{k - j + 3} - 1$.
Let \CritNodeA{j} and \CritNodeB{j} be the middle points of
\CritGadgetA{j} and \CritGadgetB{j}, respectively.
Connect \CritNodeA{j} to the intermediate nodes $u_{j,\vc{a}}$ for all
$\vc{a}$, and \CritNodeB{j} to $u'_{j,\vc{a}}$ for all $\vc{a}$.
Hence, all assignment vertices are connected to the
corresponding critical gadgets via paths of length two.

\item Corresponding to each clause \Clause{\ell} in the formula,
add a clause gadget \ClGadgetA{\ell},
which is a simple path of length $7$.
For each assignment vertex $v_{j, \vc{a}}$,
if $\vc{a}$ satisfies \Clause{\ell},
add a new intermediate node $u''_{j,\vc{a},\ell}$,
and connect it to both $v_{j, \vc{a}}$ and
the middle node of \ClGadgetA{\ell}.
\end{itemize}

The overall outline of the proof is similar to (but simpler than) the
one of Theorems~\ref{thm:directed-weak} and \ref{thm:undirected-cost}.
The key idea is again that to have any chance of finding a target in
the critical gadgets, an adaptive strategy must pick exactly one
assignment vertex from each batch; otherwise, a target in a
  critical gadget could not be identified.
This allows us to establish a one-to-one
correspondence between adaptive strategies and variable assignments. 
Revealing an edge to a critical or clause gadget would give
the adaptive strategy an easy winning option, so one can show that
w.l.o.g., all responses are to $\ExtraNode$. 
This rules out all clause gadgets for clauses satisfied by the
$\vc{a}$ for the queried vertex $v_{j, \vc{a}}$.

In order to succeed in the final two rounds with 
$\ExtraNode, \ExtraNode'$, a number of unqueried $v_{j,\vc{a}}$
and many $u''_{j,\vc{a},\ell}$ still remaining,
an algorithm must have eliminated all of the clause
gadgets from consideration, which is accomplished only when all
clauses are satisfied. 
(Conversely, if all critical and clause gadgets have been eliminated,
the algorithm can next query $\ExtraNode$ and the $v_{j,\vc{a}}$ that
is returned as the response.)
Hence, a satisfying variable assignment exists if and only if $k + 2$
queries are sufficient, as captured by the following lemma:

\begin{lemma}
There exists an adaptive strategy to find the target
in the constructed graph within at most $k + 2$ queries
if and only if the CNF formula is satisfiable.
\end{lemma}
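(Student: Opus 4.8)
The plan is to prove the two directions of the equivalence separately, following the template of the proofs of Theorems~\ref{thm:directed-weak} and~\ref{thm:undirected-cost} but exploiting the extra structure of this (simpler) construction: the hub vertex $\hat v$ is the unique ``default exit'' from every assignment vertex, and the sizes of the critical gadgets are calibrated so that a target revealed inside batch~$j$'s gadgets consumes \emph{exactly} the remaining budget. I will use repeatedly the following elementary fact, proved by the even-splitting induction of Lemma~\ref{lemma:gadgets}: a target known to lie on a path with $2^{s}-1$ vertices that is attached to the rest of the graph only at its midpoint is found in exactly $s-1$ further queries; adding one pendant vertex at the midpoint does not change this; and a $7$-vertex clause path together with its adjacent intermediate vertices needs exactly $2$ further queries.

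\textbf{Satisfiable $\Rightarrow$ a $(k+2)$-query strategy.} Fix a satisfying assignment and let $\vc a_j \in \{0,1\}^k$ be its restriction to batch~$j$. The strategy queries $v_{j,\vc a_j}$ in round~$j$, for $j = 1,\dots,k$. A shortest-path computation (as in Section~\ref{sec:deterministic}) shows that every vertex outside $v_{j,\vc a_j}$'s private neighbourhood ($u_{j,\vc a_j}$, $u'_{j,\vc a_j}$, the $u''_{j,\vc a_j,\ell}$ with $\vc a_j \models C_\ell$) and the structures hanging off it ($T_j$, $T'_j$, and the clause gadgets $P_\ell$ with $\vc a_j \models C_\ell$) is reached from $v_{j,\vc a_j}$ by a shortest path whose first edge is $(v_{j,\vc a_j},\hat v)$. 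Hence the response to the round-$j$ query either (i) declares the target, (ii)/(iii) points into $\{u_{j,\vc a_j}\}\cup T_j$ or $\{u'_{j,\vc a_j}\}\cup T'_j$ --- resolved in $k-j+2$ further queries by first querying the midpoint $t_j$ (resp.\ $t'_j$), for a total of $k+2$ --- (iv) points into a clause gadget and its intermediates, resolved in $2$ further queries, or (v) points toward $\hat v$. If every round $1,\dots,k$ gives response (v), then the round-$j$ query eliminates $T_j$, $T'_j$, $v_{j,\vc a_j}$'s private intermediates, and every $P_\ell$ with $\vc a_j \models C_\ell$; since the assignment is satisfying, all clause gadgets are eliminated, and what remains is contained in $\{\hat v,\hat v'\}$ together with the unqueried assignment vertices and their private intermediates. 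Querying $\hat v$ in round $k+1$ then yields $\hat v'$ or points to a surviving assignment vertex $v_{j,\vc b}$, after which the candidate set is the star centred at $v_{j,\vc b}$, so one more query in round $k+2$ finishes. In every case at most $k+2$ queries are used.

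\textbf{A $(k+2)$-query strategy $\Rightarrow$ satisfiable.} Equivalently, I will show that if the formula is unsatisfiable there is an adversary defeating every $(k+2)$-query strategy. The adversary answers ``toward $\hat v$'' to every query of an assignment vertex and, to every other query, answers so as to keep the largest possible portion of the critical and clause gadgets alive. The first claim is a timing argument in the spirit of Theorem~\ref{thm:directed-weak}: once the target is known to lie in one of the sibling gadgets $T_j, T'_j$ of batch~$j$, finishing requires $1+(k-j+2)$ queries, so if after query~$j$ both are still fully alive the strategy has already lost; moreover the only single query that rules out (a large enough part of) \emph{both} $T_j$ and $T'_j$ at once is a query to an assignment vertex of batch~$j$ answered ``toward $\hat v$'' --- a query to $t_j$ only disposes of $T_j$, and so on. Since the budget is exactly $k+2$ and the endgame at $\hat v,\hat v'$ genuinely needs the last two rounds, the strategy is forced to spend its first $k$ queries on exactly one assignment vertex $v_{j,\vc a_j}$ per batch, all answered ``toward $\hat v$'', defining a full assignment (a Hall-type argument even pins down the order $v_{1,\cdot},v_{2,\cdot},\dots$). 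Because the formula is unsatisfiable, some clause $C_{\ell_0}$ is false under this assignment; then, exactly as in the first direction, $P_{\ell_0}$ is never eliminated, so after round~$k$ the candidate set still contains the whole $7$-vertex path $P_{\ell_0}$, every vertex of which is at distance $\ge 4$ from all queried vertices. A short case analysis then shows the $2$ remaining queries cannot simultaneously ``reach'' $P_{\ell_0}$, localize inside it, and distinguish the many other surviving candidates --- so the strategy fails.

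\textbf{Main obstacle.} The delicate step is the forcing argument in the second direction: rigorously proving that an efficient strategy has \emph{no choice} but to spend its first $k$ queries on one assignment vertex per batch. I would isolate a standalone ``gadget lemma'' quantifying, for each relevant conditioning (target in $T_j$; in $T_j$ plus pendant; in $T_j\cup T'_j$; in a clause path plus intermediates; plus the extra cost of first distinguishing \emph{which} of two sibling gadgets is live), the exact number of further queries needed, and then run the adversary against the strategy while tracking the potential ``number of rounds still needed to finish the worst still-live gadget pair,'' exactly as in the proof of Theorem~\ref{thm:directed-weak}; every query that is not a fresh assignment vertex is then shown to either waste a round or fail to shrink some live pair $\{T_j,T'_j\}$ enough, forcing the budget to be exceeded. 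The clause-gadget half of the argument is a direct adaptation of the second part of Lemma~\ref{lemma:gadgets}.
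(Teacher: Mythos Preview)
Your proposal is correct and follows essentially the same approach as the paper, which itself does not give a standalone proof of this lemma but only a sketch referring back to the argument for Theorem~\ref{thm:directed-weak}. Your two directions, the adversary's ``answer toward $\hat v$'' policy, the forcing argument that the first $k$ queries must hit one assignment vertex per batch, and the endgame at $\hat v$ all mirror the paper's outline; your write-up is in fact more detailed than what the paper provides.

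One small imprecision worth fixing: in case (ii)/(iii) of your forward direction, the response edge $(v_{j,\vc a_j}, u_{j,\vc a_j})$ does \emph{not} restrict the candidate set to $\{u_{j,\vc a_j}\}\cup T_j$ alone --- every other intermediate $u_{j,\vc b}$ with $\vc b\neq\vc a_j$ is also at distance~$3$ from $v_{j,\vc a_j}$ via $u_{j,\vc a_j}-t_j-u_{j,\vc b}$, so the set is $\{u_{j,\vc b}:\vc b\in\{0,1\}^k\}\cup T_j$. Your proposed continuation (query $t_j$, then binary-search the indicated half of $T_j$) still finishes in $k-j+2$ further queries, since a response pointing from $t_j$ toward some $u_{j,\vc b}$ immediately identifies that vertex as the target; so the count is unaffected, but the description of the candidate set should be corrected. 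A symmetric remark applies to case (iii).
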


The constructed graph has $N = O(m k 2^{k})$ vertices
and $M = O(m k 2^{k})$ edges; thus $\log N = k + o(k)$.
Assume that some algorithm \AAA decides whether there exists any
adaptive strategy to find the target with $k + 2$ queries.
We would obtain the following complexity-theoretic consequences:

\begin{enumerate}
\item If the formula is a $3$-CNF-SAT formula,
and the running time of \AAA is $M^{o(\log N)} = M^{o(k)}$,
then the reduction would give us an algorithm for $3$-CNF-SAT
with running time $M^{o(k)} = 2^{o(n)}$,
which contradicts the ETH.

\item For general CNF-SAT instances,
if the running time of \AAA is 
$O(M^{(1 - \epsilon)\log N})$,
then the above reduction would solve CNF-SAT with running time 
\begin{align*}
O((m\sqrt{n}2^{\sqrt{n}})^{(1-\epsilon) \sqrt{n}})
= O(2^{(1 - \epsilon/2)n}),
\end{align*}
contradicting the SETH.
\end{enumerate}
\end{extraproof}


\section{Almost Undirected Graphs}
\label{sec:almost-undirected}

We consider the generalization of the problem
to (strongly connected) directed graphs. 
The example of a directed cycle shows that one can,
in general, not hope to find a target
using a sublinear number of queries.
Thus, in order to achieve positive results,
additional assumptions must be placed on the graph structure.
Indeed, we saw in Section~\ref{sec:negative} that
deciding, for general strongly connected graphs,
whether a given number of queries is sufficient is 
\Hardness{PSPACE}{complete}.

We show that if the graph is ``almost undirected,''
then the positive result of Theorem~\ref{thm:undirected-weak}
degrades gracefully.
Specifically, we assume that each edge $e$
with weight \Weight{e} is part of a cycle of total
weight at most $\MAXCYCLE \cdot \Weight{e}$.
Notice that for unweighted graphs, this means that each edge $e$
is part of a cycle of at most $\MAXCYCLE$ edges, and specifically
for $\MAXCYCLE = 2$, the graph is undirected.%
\footnote{However, for weighted graphs with $\MAXCYCLE = 2$,
$G$ will not necessarily be undirected.
\Eat{For instance, $G$ could be a cycle of
undirected edges of weight 1 and one directed edge of weight $n - 1$.}}

\begin{theorem}\label{thm:directed-short-cycles}
Algorithm~\ref{alg:undirected-weak-upper} has the following property:
if $G$ is a strongly connected, positively weighted graph, in
which each edge $e$ belongs to a cycle of total weight at most 
$\MAXCYCLE \cdot \Weight{e}$, 
then the algorithm finds the target using at most
\begin{align*}
\frac{1}{\ln \MAXCYCLE - \ln(\MAXCYCLE - 1)} \cdot \ln n
\leq \MAXCYCLE \ln(2) \cdot \log n
\end{align*}
queries in the worst case.
\end{theorem}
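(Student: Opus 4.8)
The plan is to reuse the analysis of Theorem~\ref{thm:undirected-weak} almost verbatim, isolating the single place where undirectedness was used and replacing it with the short-cycle hypothesis. Recall that when Algorithm~\ref{alg:undirected-weak-upper} queries a $1$-median $q$ of the current candidate set $S$ and receives the edge $e = (q,v)$, we split $S = S^{+} \cup S^{-}$ with $S^{+} = S \cap \Reach{q}{e}$. For $u \in S^{+}$ we still have $d(v,u) = d(q,u) - \Weight{e}$ directly from the definition of $\Reach{q}{e}$, and this step does not use undirectedness. The only step that did was the inequality $d(v,u) \le d(q,u) + \Weight{e}$ for $u \in S^{-}$, which relied on being able to walk from $v$ back to $q$ along the edge $e$.

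First I would establish the replacement for that step: since $e = (q,v)$ lies on a cycle of total weight at most $\MAXCYCLE \cdot \Weight{e}$, deleting $e$ from that cycle leaves a directed $v$-$q$ path of weight at most $(\MAXCYCLE - 1)\Weight{e}$, so $d(v,q) \le (\MAXCYCLE - 1)\Weight{e}$. By the triangle inequality, $d(v,u) \le d(v,q) + d(q,u) \le d(q,u) + (\MAXCYCLE - 1)\Weight{e}$ for every $u \in V$, in particular for all $u \in S^{-}$. Substituting this together with the exact identity for $u \in S^{+}$ into $\Potential{S}{v} = \sum_{u \in S} d(v,u)$ yields
\[
\Potential{S}{v} \;\le\; \Potential{S}{q} - \Weight{e}\,\SetCard{S^{+}} + (\MAXCYCLE - 1)\,\Weight{e}\,\SetCard{S^{-}}.
\]
Since $q$ minimizes $\Potential{S}{\cdot}$, the left-hand side is at least $\Potential{S}{q}$, which forces $\SetCard{S^{+}} \le (\MAXCYCLE - 1)\SetCard{S^{-}}$, hence $\SetCard{S^{+}} \le \frac{\MAXCYCLE - 1}{\MAXCYCLE}\SetCard{S}$.

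Thus every query that does not hit the target shrinks $\SetCard{S}$ by a factor of at least $\frac{\MAXCYCLE - 1}{\MAXCYCLE} = 1 - \tfrac{1}{\MAXCYCLE}$; starting from $\SetCard{S} = n$, after $k$ such queries $\SetCard{S} \le n\,(1 - \tfrac{1}{\MAXCYCLE})^{k}$, which is at most $1$ once $k \ge \ln n / \ln\!\frac{\MAXCYCLE}{\MAXCYCLE - 1}$, giving the first bound. The cleaner form $\MAXCYCLE \ln 2 \cdot \log n$ then follows from $\log n = \ln n / \ln 2$ and the elementary inequality $\tfrac{1}{\ln(\MAXCYCLE/(\MAXCYCLE - 1))} \le \MAXCYCLE$, i.e.\ $-\ln(1 - 1/\MAXCYCLE) \ge 1/\MAXCYCLE$, which is the standard bound $-\ln(1-x) \ge x$ applied at $x = 1/\MAXCYCLE \in (0,1)$ (note $\MAXCYCLE > 1$, so everything is well defined).

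The argument is essentially routine once the geometry is in place; the only real content — and the step I would flag as the crux — is recognizing that the short-cycle hypothesis is exactly what is needed to bound $d(v,q)$, i.e.\ to control how much the ``wrong-direction'' candidates in $S^{-}$ can have their distance to the new center $v$ increase. Beyond that, the accounting is the same as in Theorem~\ref{thm:undirected-weak}, with $\MAXCYCLE - 1$ playing the role of the constant $1$.
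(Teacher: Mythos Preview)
Your proof is correct and follows essentially the same argument as the paper: both use the short-cycle hypothesis to bound $d(v,q) \le (\MAXCYCLE-1)\Weight{e}$, feed this into the potential comparison to get $\SetCard{S^{+}} \le \frac{\MAXCYCLE-1}{\MAXCYCLE}\SetCard{S}$, and finish with the elementary inequality $\ln\frac{\MAXCYCLE}{\MAXCYCLE-1} \ge \frac{1}{\MAXCYCLE}$ (the paper writes it as $\int_{\MAXCYCLE-1}^{\MAXCYCLE} dx/x \ge 1/\MAXCYCLE$, you as $-\ln(1-x) \ge x$, which is the same thing).
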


\begin{proof}
In Algorithm~\ref{alg:undirected-weak-upper}, the potential is now
defined with respect to directed distances.
To analyze its performance, assume that the algorithm,
in some iteration, queried the node $q$,
and received as a response an edge $e = (q, v)$.
We define the sets $S^+$ and $S^-$ as in the proof
of Theorem~\ref{thm:undirected-weak}. 
As before, $d(v, u) = d(q, u) - \Weight{e}$
for all vertices $u \in S^+$.
On the other hand, there exists a path of total weight at most
$(\MAXCYCLE - 1) \cdot \Weight{e}$ from $v$ to $q$
(since $e$ appears in a cycle of length at most
$\MAXCYCLE \cdot \Weight{e}$).
Thus, for any vertex $u \in S^-$,
$d(v, u) \leq d(q, u) + (\MAXCYCLE - 1) \cdot \Weight{e}$.
Therefore,
\begin{align*}
\Potential{S}{v} \leq \Potential{S}{q} - \Weight{e} \cdot 
\big(\SetCard{S^{+}} - (\MAXCYCLE - 1) \cdot \SetCard{S ^{-}} \big).
\end{align*}
Since $\Potential{S}{q}$ is minimal,
$|S^{+}| \leq (\MAXCYCLE - 1) \cdot |S^{-}|$,
so $|S^{+}| \ \leq  \ \frac{\MAXCYCLE - 1}{\MAXCYCLE} \cdot |S|$.
Thus, after at most 
$\log_{\MAXCYCLE/(\MAXCYCLE - 1)} (n) = 
\frac{\ln n}{\ln \MAXCYCLE - \ln(\MAXCYCLE - 1)}$
queries, the target must be identified.
Finally,
\begin{align*}
\ln (\MAXCYCLE) - \ln(\MAXCYCLE - 1) 
 =  \int_{\MAXCYCLE-1}^\MAXCYCLE dx/x
 \geq  1/\MAXCYCLE,
\end{align*}
implying that 
$\frac{\ln n}{\ln \MAXCYCLE - \ln(\MAXCYCLE - 1)}
\leq \MAXCYCLE \cdot \ln n =
\MAXCYCLE \ln(2) \cdot \log n$.
\end{proof}

The upper bound of Theorem~\ref{thm:directed-short-cycles} 
is nearly matched,
up to a factor of $O(\log \MAXCYCLE)$,
by the following lower bound.

\begin{proposition}\label{prp:directed-short-cycles-lower-bound}
For any integers $N$ and $\MAXCYCLE \geq 2$,
there exists an unweighted and strongly connected graph $G$ of $n \geq
N$ vertices, 
such that each edge in $G$ belongs to a cycle of length \MAXCYCLE,
and at least $\frac{\MAXCYCLE - 1}{\log \MAXCYCLE} \cdot \log n$ 
queries are required to identify a target in $G$.
\end{proposition}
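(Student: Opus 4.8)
The plan is to construct, for a suitable $k$, a graph $G_k$ by an iterated ``cycle of cycles'' construction, and then show that an adversary can force $(\MAXCYCLE-1)k$ queries on it; since $G_k$ will have $n=\MAXCYCLE^k$ vertices, this is exactly $\frac{\MAXCYCLE-1}{\log\MAXCYCLE}\log n$. Concretely, let $G_1$ be the directed $\MAXCYCLE$-cycle, with one vertex designated as its \emph{root}. Given $G_{k-1}$ with root $r$, form $G_k$ by taking $\MAXCYCLE$ vertex-disjoint copies $H_1,\dots,H_\MAXCYCLE$ of $G_{k-1}$, writing $r_i$ for the copy of $r$ in $H_i$, adding the directed edges $r_i\to r_{i+1}$ (indices mod $\MAXCYCLE$), and designating $r_1$ as the root of $G_k$. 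Taking $k=\lceil\log_\MAXCYCLE N\rceil$ gives $n=\MAXCYCLE^k\ge N$, and $(\MAXCYCLE-1)k=\frac{\MAXCYCLE-1}{\log\MAXCYCLE}\log n\ge\frac{\MAXCYCLE-1}{\log\MAXCYCLE}\log N$. (For $\MAXCYCLE=2$ this degenerates to a path on $2^k$ vertices, recovering the classical $\log n$ bound.)

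First I would verify the easy structural facts: $G_k$ is unweighted and strongly connected (within a copy by induction, between copies through the root cycle), it has $\MAXCYCLE^k$ vertices, and every edge lies on a directed cycle of length exactly $\MAXCYCLE$ (old edges by induction, each new edge $r_i\to r_{i+1}$ on the length-$\MAXCYCLE$ cycle $r_1\to r_2\to\cdots\to r_\MAXCYCLE\to r_1$). A second, less trivial preliminary is a ``cleanness'' lemma for distances: for $u,v$ in the same copy $H_i$ one has $d_{G_k}(u,v)=d_{H_i}(u,v)$, and more generally every shortest path must leave $H_i$ through $r_i$ and re-enter the destination copy through its root. This follows from the triangle inequality, since any detour that exits and re-enters $H_i$ must traverse a full turn of the root cycle. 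A key consequence is that for any non-root vertex $v\in H_i$ there is an outgoing edge lying on a shortest path from $v$ to $r_i$ and hence (by concatenation with the root cycle) on a shortest path from $v$ to every vertex outside $H_i$; answering a query to $v$ with this edge keeps every other copy entirely ``alive'' in the candidate set, while restricting the candidates inside $H_i$ exactly as the corresponding answer inside a copy of $G_{k-1}$ would.

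The core of the proof is the adversary argument, which I would package as the recursion $f(G_k)\ge(\MAXCYCLE-1)+f(G_{k-1})$, where $f(\cdot)$ denotes the worst-case number of queries. The adversary keeps the candidate set a union of surviving top-copies together with an explored remnant inside at most one of them. To a query at a non-root of a surviving copy it replies with the cleanness edge above (this costs only ``progress inside that copy''); to a query at a top-root $r_i$ of a surviving copy it replies with $r_i\to r_{i+1}$, which is forced to delete all of $H_i$ but retains every other surviving copy. Since deleting the \emph{entire} copy $H_i$ from the candidate set is possible only via a query to $r_i$ (any within-$H_i$ answer keeps $r_i$ a candidate), the algorithm must spend $\MAXCYCLE-1$ such ``root queries'' before only one copy remains, and inside that last copy it confronts a fresh instance of $G_{k-1}$; unrolling the recursion yields $f(G_k)\ge(\MAXCYCLE-1)k$.

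The step I expect to be the main obstacle is making this recursion rigorous in the face of \emph{interleaving}: the algorithm may spread partial work over several top-copies before committing, and it may query the high-out-degree root vertices, which sit atop long nested chains, hoping to gain progress at several levels with one query. I would control this by running, inside each surviving top-copy, an independent simulated $G_{k-1}$-adversary, and charging each query against a potential that (i) counts the number of still-alive top-copies and (ii) recursively credits the within-copy potential of whichever surviving copy the adversary will ultimately defend; the point is that the simulated answers translate to legitimate $G_k$-answers by the cleanness lemma, that deleting a copy is worth at most one potential unit, and that within-copy queries change the potential by at most that copy's own (recursively bounded) decrease. Getting the bookkeeping exactly right — in particular that a root query costs at most one unit even when that copy is the ``most defended'' one, so a naive quantity like ``(number of alive copies)$-1$ plus the largest surviving inner potential'' must be replaced by a slightly more careful one, and that the constrained ``answer toward the root'' adversary inside a copy is still as strong as the unconstrained $G_{k-1}$-adversary — is where the real care lies; the structural and distance lemmas are routine by comparison.
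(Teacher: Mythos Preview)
Your construction is exactly the paper's (they write it as $H_{c,k}$, built from $c$ disjoint copies of $H_{c,k-1}$ with a cycle through one arbitrary vertex in each; your nested-root choice is one valid instance), and your ``cleanness'' observation about shortest paths is correct and is precisely what the paper uses. Where you diverge is in the adversary, and you are making the argument much harder than necessary.

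Your plan runs a simulated $G_{k-1}$-adversary inside every surviving copy simultaneously, tracks a potential across all of them, and then worries---rightly, given that setup---about whether the ``answer toward the root'' constraint weakens the inner adversary and about how to charge a root query that kills the currently best-defended copy. None of this machinery is needed. The paper's adversary is simply: for the \emph{first $c-1$ queries}, wherever they fall, answer with the first edge of a shortest path to the next top-cycle vertex $v'_i$. As you yourself observed, such an answer is consistent with the target lying anywhere outside the queried copy, so it reveals nothing that distinguishes among the remaining copies. After $c-1$ queries at most $c-1$ copies have been touched, so by pigeonhole some copy $G_i$ is completely unqueried and hence entirely alive. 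The adversary now commits to $G_i$ and recurses: later queries outside $G_i$ are answered the same way (wasted), and queries inside $G_i$ are handed to the $H_{c,k-1}$-adversary, whose answers are valid in the big graph by cleanness. Induction gives $(c-1)(k-1)$ further queries inside $G_i$, hence $(c-1)k$ in total.

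The insight you are missing is that the adversary need not defend \emph{any} copy during the first $c-1$ queries; it can afford to let each queried copy be partially or even wholly ruled out, because pigeonhole guarantees a pristine copy remains. This dissolves the interleaving concern (the first $c-1$ queries are just counted, wherever they land) and the multi-level-root concern (a query to a high-degree root during this phase is answered along the top cycle and costs the algorithm one of its $c-1$ queries like any other; after committing, queries to $v_i$ are handled entirely by the recursive adversary, which never needs to point ``outward''). No potential function and no parallel simulation are required.
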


\begin{proof}
We construct a family of unweighted strongly connected directed graphs
$\NestCycle{\MAXCYCLE}{k}, k = 0, 1, \ldots$ inductively.
\NestCycle{\MAXCYCLE}{0} is a single vertex.
For any $k \geq 1$, \NestCycle{\MAXCYCLE}{k} is obtained from
\MAXCYCLE disjoint copies of \NestCycle{\MAXCYCLE}{k - 1}.
For each of the \MAXCYCLE copies $i = 1, \ldots, \MAXCYCLE$,
let $v_i$ be an arbitrary vertex in that copy.
Now add a directed cycle of length \MAXCYCLE through the vertices $v_i$.
By construction, each edge is part of a cycle of length \MAXCYCLE; 
and by induction, \NestCycle{\MAXCYCLE}{k} is strongly connected
for all $k$. 

We will prove by induction that any strategy requires 
at least $(\MAXCYCLE - 1) \cdot k$ queries
in the worst case to identify the target in \NestCycle{\MAXCYCLE}{k}.
Because $n = \MAXCYCLE ^ k$, this proves a lower bound of 
$\frac{\MAXCYCLE - 1}{\log \MAXCYCLE} \cdot \log n$ on the number of
queries in an $n$-vertex graph.

The base case $k = 0$ of the induction is trivial.
For the induction step, let $G_1, G_2, \ldots, G_{\MAXCYCLE}$ be the
\MAXCYCLE copies of \NestCycle{\MAXCYCLE}{k - 1} that were combined
to form \NestCycle{\MAXCYCLE}{k}. 
For $i < k$, define $v'_i$ to be $v_{i + 1}$ and
let $v'_k$ be $v_1$.
Consider a query of a node $q \in G_i$.  
By construction, $v'_i$
lies on any path from $q$ to any vertex not in $G_i$.
In other words, if an algorithm receives an edge $(q, v)$
lying on a shortest path from $q$ to $v'_i$,
it might learn that the target is not in $G_i$,
but cannot infer which $G_j, j \neq i$ the target is in.

The adversary's strategy is now simple:
for the first $\MAXCYCLE - 1$ queries,
to each query $q \in G_i$, the adversary will give an edge toward $v'_i$,
i.e., the first edge of a shortest path from $q$ to $v'_i$.
At this point, there is at least one $G_i$ such that no vertex in $G_i$
has been queried. The adversary now picks one such $G_i$
(arbitrarily, if there are multiple),
and commits the target to $G_i$.
Then, he continues to answer queries to vertices
in $G_j, j \neq i$ in the same way as before.
Queries to vertices in $G_i$ are answered
with the adversarial strategy for \NestCycle{\MAXCYCLE}{k - 1}.

As a result, after $\MAXCYCLE - 1$ queries, there will always remain at
least one entirely unqueried copy of \NestCycle{\MAXCYCLE}{k - 1}; the
algorithm has no information about the location of the target vertex
in this copy, and by induction hypothesis, it will take at least
$(\MAXCYCLE - 1) \cdot (k - 1)$ queries to find the target.
Adding the $\MAXCYCLE - 1$ queries to reach this point gives us
a total of at least $(\MAXCYCLE - 1) \cdot k$ queries
to find the target in \NestCycle{\MAXCYCLE}{k}.
\end{proof}


\section{More Informative Queries on Directed Graphs} 
\label{sec:more-informative}

Instead of restricting cycle lengths in $G$,
an alternative way to obtain positive results
for directed graphs is to assume that
the responses to queries are more informative.
This approach is motivated by the directed cycle,
where the answer to a query reveals
no additional information to the algorithm. 
If the algorithm could learn not only the outgoing edge
but also the distance to the target,
then a single query would suffice on the cycle.
Indeed, we show that in general, this information is enough
to always find the target using at most $\log n$ queries.
We remark that learning \emph{only} the distance to the target
would not be enough to guarantee even
a sublinear number of queries:
in an unweighted undirected star, when the target is a leaf,
such an answer will reveal no information
except whether the queried node is the target.

Formally, we define more informative responses to queries as follows.
Let $\ReachDist{u}{e}{\ell} =
\SpecSet{v \in \Reach{u}{e}}{d(u, v) = \ell}$.
In response to querying node $q$, the algorithm will be given
an edge $e$ and distance $\ell$ such that 
$t \in \ReachDist{q}{e}{\ell}$.

\begin{theorem}\label{thm:directed-strong-appendix}
Assume that each query reveals the distance from the queried node $q$
to the target $t$ as well as an edge $e = (q, v)$
on a shortest $q$-$t$ path.
Then, there is an efficient algorithm which, for each directed,
strongly connected and positively weighted graph $G$, finds the target
using at most $\log n$ queries.%
\footnote{This upper bound is tight even for complete binary trees.}
\end{theorem}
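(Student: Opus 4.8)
The plan is to run the obvious directed analogue of Algorithm~\ref{alg:undirected-weak-upper}. We maintain a set $S$ of candidate targets, initialized to $S = V$, and repeat the following until $\SetCard{S} = 1$: query a directed $1$-median $q$ of $S$, i.e.\ any $q \in \argmin_{u \in V} \Potential{S}{u}$ with $\Potential{S}{u} = \sum_{w \in S} d(u,w)$ and $d$ the directed shortest-path distance; on receiving the response --- an edge $e = (q,v)$ lying on a shortest $q$-$t$ path together with the value $\ell = d(q,t)$ --- replace $S$ by $S \cap \ReachDist{q}{e}{\ell}$. Each iteration runs in polynomial time: $\Potential{S}{\cdot}$ and the sets $\Reach{q}{e}$, $\ReachDist{q}{e}{\ell}$ come from one run of Dijkstra's algorithm out of $q$ (plus, for bookkeeping, one out of $v$), and $q$ is then found by exhaustive search. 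The true target is consistent with every response, so $t \in S$ throughout and the algorithm is correct; the entire content of the theorem is therefore the bound on the number of iterations.

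To get the $\log n$ bound it suffices to show that every query at least halves $\SetCard{S}$, and I would argue this along the lines of the proof of Theorem~\ref{thm:undirected-weak}, with the revealed distance $\ell$ substituting for the symmetric triangle inequality, which is no longer available. Write $S^{+} = S \cap \ReachDist{q}{e}{\ell}$ for the set surviving the query and $S^{-} = S \setminus S^{+}$. Every $w \in S^{+}$ lies on a shortest $q$-$w$ path through $e$, so $d(v,w) = d(q,w) - \Weight{e}$; and since we are additionally told that $d(q,w) = \ell$ for each such $w$, the whole set $S^{+}$ is pinned to a single sphere around $v$, namely $d(v,w) = \ell - \Weight{e}$. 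For $w \in S^{-}$, the walk consisting of edge $e$ followed by a shortest $v$-$w$ path gives $d(v,w) \geq d(q,w) - \Weight{e}$. Feeding these relations into $\Potential{S}{v}$ and using $\Potential{S}{q} \leq \Potential{S}{v}$ (minimality of the $1$-median) constrains how $\SetCard{S^{+}}$ and $\SetCard{S^{-}}$ compare; the point is that, because all of $S^{+}$ sits on one sphere around $v$ --- which is exactly where the distance answer, and not merely the edge answer, is used --- one should be able to conclude $\SetCard{S^{+}} \leq \SetCard{S^{-}}$, hence $\SetCard{S^{+}} \leq \SetCard{S}/2$.

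For the tightness claim in the footnote, it is enough to observe that in a complete binary tree the shortest-path structure out of any vertex is again a tree, so the distance answer is redundant against an adversary that places $t$ at a leaf; the standard $\log n$ adversary argument for searching a complete binary tree then applies verbatim, and the upper bound is met.

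The step I expect to be the genuine obstacle is the halving argument. On a directed graph the backward distances $d(v,q)$ and $d(v,w)$ for $w \in S^{-}$ are completely uncontrolled, so a verbatim copy of the undirected potential computation only yields $\SetCard{S^{+}} \leq (d(v,q)/\Weight{e}) \cdot \SetCard{S^{-}}$, which is worthless when the edge $e$ leaves $q$ into a region that is hard to return from. Extracting real value from the revealed distance --- plausibly by carrying, as an invariant, that the current candidate set always lies on a known sphere, so that the potential comparison can be made against a reference vertex whose distances to all of $S$ are known in advance --- is what has to make the argument go through, and getting that interaction right is the crux.
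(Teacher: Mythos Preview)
Your proposal has a genuine gap, and you have in fact put your finger on it yourself: the sum-of-distances potential $\Potential{S}{u} = \sum_{w \in S} d(u,w)$ does not suffice in the directed setting. The inequality $d(v,w) \geq d(q,w) - \Weight{e}$ for $w \in S^{-}$ goes the wrong way for the potential comparison, and the ``sphere invariant'' you sketch does not repair this: after one query, $S$ lies on a sphere around the \emph{previous} query vertex $q_1$, which says nothing useful about the distances from the \emph{next} median $q_2$ or from $v_2$ to vertices in $S^{-}$. No amount of bookkeeping with the $1$-median potential will close this, because nothing in the setup bounds $d(v,w)$ from above for $w \in S^{-}$.

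The paper's fix is to change the potential function, not to rescue the old one. Define $\DirPotential{S}{u}$ to be the smallest radius $\ell$ such that strictly more than half of $S$ lies in the ball $\Gamma_\ell(u) = \{w : d(u,w) \leq \ell\}$, and query a vertex $q$ minimizing $\DirPotential{S}{q}$. Now suppose for contradiction that $\SetCard{S^{+}} > \SetCard{S}/2$. Every vertex of $S^{+}$ is at distance exactly $\ell$ from $q$, so more than half of $S$ is within distance $\ell$ of $q$ and hence $\DirPotential{S}{q} = \ell$. But every vertex of $S^{+}$ is at distance exactly $\ell - \Weight{e}$ from $v$, so more than half of $S$ is within distance strictly less than $\ell$ of $v$, giving $\DirPotential{S}{v} < \ell = \DirPotential{S}{q}$ and contradicting minimality of $q$. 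The point is that this potential only ever needs the \emph{forward} distances from $v$ to $S^{+}$, which are controlled; distances to $S^{-}$ never enter. Your observation that $S^{+}$ sits on a single sphere around $v$ is exactly the right fact, but it has to be fed into a potential that can use it without also demanding control over $S^{-}$.
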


\begin{proof}
The algorithm is nearly identical to
Algorithm~\ref{alg:undirected-weak-upper}.
It also queries a vertex $q$ minimizing a potential function
of the set $S$ of remaining candidate vertices at each iteration.
However, the potential function takes a different form.

For each distance $\ell$, let $\AtDistance{\ell}{v}$ be the set of
vertices at distance at most $\ell$ from $v$.
Define $\DirPotential{S}{v}$ as the minimum distance $\ell$
such that strictly more than half of the vertices of $S$
are in $\AtDistance{\ell}{v}$.
When receiving an answer of $(e,\ell)$ in response to a vertex query $q$,
the algorithm updates $S$ to $S \cap \ReachDist{q}{e}{\ell}$.
Except for using $\DirPotential{S}{v}$ in place of
$\Potential{S}{v}$ in Line~\ref{line:potential}, receiving 
the distance $\ell$ in addition to the edge $e$ in
Line~\ref{line:response},
and updating $S \AlgAssign S \cap \ReachDist{q}{e}{\ell}$ 
in Line~\ref{line:undirected-new-S}, the algorithm is identical to
Algorithm~\ref{alg:undirected-weak-upper}.

Similar to the analysis of Algorithm~\ref{alg:undirected-weak-upper},
we show that the size of $S$ decreases
by at least a factor of 2 in each iteration.
Consider one iteration in which a vertex $q$ is queried,
and the response is an edge $e = (q, v)$ and distance $\ell$.
We partition $S$ into the remaining candidates 
$S^{+} = S \cap \ReachDist{q}{e}{\ell}$,
and the eliminated vertices $S^{-} = S \setminus S^{+}$.

Assume for contradiction that $|S^{+}| > |S|/2$.
Because more than half of the vertices of $S$ are thus at
distance \emph{exactly} $\ell$ from $q$, we obtain that
$\DirPotential{S}{q} = \ell$.
And because $v$ lies on a shortest $q$-$u$ path for any $u \in S^{+}$,
the distance from $v$ to each $u \in S^+$ is strictly less
than $\ell$.
Thus, more than $\Half{|S|}$ vertices of $S$
are at distance less than $\ell$ from $v$,
implying that
$\DirPotential{S}{v} < \ell$. But this contradicts the choice of $q$
as minimizing \DirPotential{S}{v}.
\end{proof}


\section{Edge Queries on Trees} \label{sec:edge-queries}

As mentioned in the introduction,
a version of the problem that has been studied more frequently 
\cite{linial-saks:1985:searching,%
lam-yue:2001:edge-ranking-linear,%
BenAsher-farchi:1997:trees,%
BenAsher-farchi-newman:1999:trees-edge-poly,%
onak-parys:2006:trees-vertex-linear,%
mozes-onak-weimann:2008:trees-edge-linear},
is one in which the algorithm is allowed to query
the \emph{edges} of a graph (most often: a tree).
For the case of trees, by querying an edge $e = (u, v)$,
the algorithm learns whether the target is in the subtree
rooted at $u$ or the one rooted at $v$.

Ben Asher and Farchi~\cite{BenAsher-farchi:1997:trees} proved,
for a tree with maximum degree \MAXDEGREE,
an upper bound of 
$\log_{\MAXDEGREE/(\MAXDEGREE-1)} (n) = \Theta(\MAXDEGREE \log n)$
and a lower bound of $\frac{\MAXDEGREE-1}{\log \MAXDEGREE} \log n$
on the number of queries required,
which --- contrary to the authors' claim ---
leaves a small gap of $\Theta(\log \MAXDEGREE)$.
In this section,
we prove that the lower bound is practically tight:

\begin{theorem}\label{thm:tree-edge-upper}
For any tree $T$ with maximum degree $\MAXDEGREE > 1$,
there is an adaptive algorithm for finding the target using at most
$1 + \frac{\MAXDEGREE-1}{\log (\MAXDEGREE+1) - 1} \log n$
edge queries.
\end{theorem}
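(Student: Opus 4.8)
The plan is to mimic the vertex-query analysis (Theorem~\ref{thm:undirected-weak}) while carefully accounting for the fact that an edge query gives strictly less information at a high-degree vertex. As in the vertex case, I would maintain a set $S$ of candidate target vertices (initially $S = V$), and at each step query an edge chosen to guarantee that the candidate set shrinks by a good factor. The natural object to query is an edge near a \emph{centroid} of the subtree induced by (the Steiner tree of) $S$: a vertex $c$ whose removal leaves each component with at most $|S|/2$ candidates. If I query one of the edges incident to $c$, say $e = \{c, w\}$, the answer either tells me the target is on the $w$-side (a component of size $\le |S|/2$, great) or on the $c$-side, which still contains $c$ and possibly up to $|S|/2$ other candidates hanging off the other incident edges of $c$. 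So a single query is not enough to halve $S$ when $\deg(c)$ is large; this is precisely where the degree $\MAXDEGREE$ enters.

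The key idea I would use is to \emph{batch} the queries around a single centroid. Suppose $c$ has incident edges $e_1, \dots, e_k$ with $k \le \MAXDEGREE$, and let $n_i$ be the number of candidates in the component behind $e_i$ (so $\sum_i n_i \ge |S| - 1$ and each $n_i \le |S|/2$). I would query the edges $e_1, e_2, \dots$ one at a time in decreasing order of $n_i$. As long as the answers keep pointing \emph{away} from the corresponding component (i.e.\ the target is not behind $e_i$), I have eliminated those candidates; once an answer points \emph{into} a component, I have localized the target to that component and recurse there. If the first answer pointing inward happens on edge $e_j$, I have spent $j$ queries and the candidate set has size $n_j$; if no answer ever points inward (which can only happen if the target is $c$ itself, or after $k-1$ of the edges are exhausted), I have identified $c$ or reduced to one component with at most one query slack. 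The crucial arithmetic fact is that after $j$ queries the remaining set has size $n_j$, and because the $n_i$ are sorted in decreasing order and sum to roughly $|S|$ with $k \le \MAXDEGREE$ terms, $n_j \le |S|/j$ for the worst $j$ — more precisely, one gets a bound of the form ``$r$ queries reduce $|S|$ by a factor depending on how the mass is spread over at most $\MAXDEGREE$ children.'' Setting up a potential/recursion $f(n)$ for the worst-case number of queries and optimizing the adversary's split of $n$ among $\le \MAXDEGREE - 1$ children (the $k$th direction leading back toward the rest of the tree) yields a recursion whose solution is $1 + \frac{\MAXDEGREE - 1}{\log(\MAXDEGREE + 1) - 1}\log n$. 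The ``$+1$'' absorbs the edge/vertex discrepancy (an edge query can reveal the target is an endpoint, but one may need one extra probe), and the $\log(\MAXDEGREE+1)-1$ in the denominator comes from the fact that the best the algorithm can do after $\MAXDEGREE - 1$ queries at a centroid of degree $\MAXDEGREE$ is to have either localized to a component of size $\le n/(\MAXDEGREE - 1)$ at some earlier point, or to face a recursion that solves to this rate; balancing these cases against an adversary who piles mass to make every intermediate $n_j$ as large as possible gives exactly the claimed constant.

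Concretely, I would (i) fix the centroid $c$ of the Steiner tree of $S$ and order its incident edges by candidate count; (ii) prove the one-centroid lemma: for any $r \ge 1$, after querying the top $r$ edges in this order, either the target has been localized to a set of size at most $\frac{|S|}{r+1}$ (with $r$ queries used), or all $r$ top components have been eliminated; (iii) combine this with induction on $|S|$, writing $T(n)$ for the worst-case query count and obtaining $T(n) \le \max_{1 \le r \le \MAXDEGREE-1}\bigl(r + T(n/(r+1))\bigr)$ up to the $+1$ boundary correction, plus the terminal case $r = \MAXDEGREE - 1$ where one eliminates all but the ``upward'' component; (iv) solve this recursion. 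The adversary will always push to make $r = \MAXDEGREE - 1$ the binding case, eliminating $\MAXDEGREE - 1$ light-ish components before conceding, which is what forces the $\frac{\MAXDEGREE-1}{\log(\MAXDEGREE+1)-1}$ rate rather than the naive $\MAXDEGREE$.

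The main obstacle I expect is getting the constant \emph{exactly} right — in particular pinning down why the denominator is $\log(\MAXDEGREE+1) - 1$ and not, say, $\log\MAXDEGREE$ or $\log(\MAXDEGREE-1)$. This requires being careful about two competing effects: the centroid guarantee gives components of size $\le n/2$ (a factor 2), but batching $r$ edges improves this to $\le n/(r+1)$, and the adversary's freedom to distribute up to $n$ candidates among up to $\MAXDEGREE$ directions (one of which leads ``back up'' and carries most of the remaining mass in the recursion) means the true per-query progress is governed by how $r+1$ compares with $\MAXDEGREE$. The bookkeeping of which component counts toward which level of the recursion, and ensuring that the $+1$ genuinely suffices to handle all the boundary cases (target at a centroid, target at an endpoint of the last queried edge, a component of size $1$ or $2$), is the delicate part; the rest is a routine induction and a logarithm identity analogous to the $\ln\MAXCYCLE - \ln(\MAXCYCLE-1) \ge 1/\MAXCYCLE$ estimate already used in Theorem~\ref{thm:directed-short-cycles}.
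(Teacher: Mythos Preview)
Your overall plan---pick a centroid $c$, query its incident edges in decreasing order of subtree size, and amortize via a per-batch shrinkage factor---is exactly the paper's strategy. The gap is in your one-centroid lemma. The claim that after localizing to $T_j$ one has $|T_j|\le |S|/(j+1)$ is false in general. Take $d=3$ with $|T_1|=|T_2|=\lfloor n/2\rfloor-1$ and $|T_3|=1$; here $c$ is a centroid, but if the first answer eliminates $T_1$ and the second points into $T_2$, you have used $2$ queries and are left with roughly $n/2$ candidates, not $n/3$. The sorting argument alone only gives $|T_j|\le n/j$; the extra factor you want comes from the condition that $c$ is \emph{still} a separator of the surviving tree at step $j$ (so that $1+\sum_{i>j}|T_i|\ge|T_j|$), and your algorithm as described never re-checks this.

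The paper's fix is precisely that re-check: it breaks the execution into \emph{phases} in which the same separator is used, ending a phase either when the target is localized to some $T_k$ (case~1, shrinkage factor $\ge k+1$) or when $c$ ceases to be a separator after $k$ eliminations (case~2, shrinkage factor $\ge (k+2)/2$). The binding case is the weaker $(k+2)/2$, and maximizing $\sum_j k_j$ subject to $\prod_j (k_j+2)/2\le n$ is a convex problem solved at $k_j=\MAXDEGREE-1$, yielding $\frac{\MAXDEGREE-1}{\log((\MAXDEGREE+1)/2)}\log n=\frac{\MAXDEGREE-1}{\log(\MAXDEGREE+1)-1}\log n$. This is exactly why the denominator is $\log(\MAXDEGREE+1)-1$ rather than $\log\MAXDEGREE$: it comes from the case-2 factor $(k+2)/2$ at $k=\MAXDEGREE-1$, not from a per-query factor of $r+1$. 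A warning sign in your write-up: if your recursion $T(n)\le\max_{r\le\MAXDEGREE-1}\bigl(r+T(n/(r+1))\bigr)$ were valid, its solution would be $\frac{\MAXDEGREE-1}{\log\MAXDEGREE}\log n$, which matches the known \emph{lower} bound and is strictly better than what the theorem claims---so either the lemma or the recursion had to be wrong.
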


\begin{proof}
The algorithm is quite similar to the algorithm for vertex queries
in Section~\ref{sec:deterministic},
in that it repeatedly queries a median node $v$. 
However, since a vertex in the given tree
cannot be queried directly, the algorithm
instead finds a median node $v$ and
queries the edge $(v, u)$,
where $u$ is the neighbor of $v$
maximizing the size of its rooted subtree.

More precisely, the algorithm is as follows:
Given a tree $T$, the algorithm initializes $T' = T$.
In each iteration, it considers a separator node $v$
(a vertex such that each subtree (in fact, connected component)
of $T' \setminus \Set{v}$ contains at
most $|T'|/2$ vertices),
breaking ties in favor of the separator node from the previous iteration.
That is, if the last chosen separator is still a valid separator
for the current tree $T'$, the algorithm uses it for the next iteration;
otherwise, it picks a new separator node arbitrarily.
Among all neighbors of $v$, let $u$ be one maximizing the
size of the subtree $T_u$ of $T'$ rooted at $u$. The algorithm then
queries the edge $(v, u)$,
and depending on whether the target is in
the subtree rooted at $v$ or at $u$, updates $T'$ to either 
$T' \setminus T_u$ or $T_u$. When only a single vertex remains,
the algorithm returns it.
More formally, the algorithm is given as
Algorithm~\ref{alg:tree-edge}.

\InsertAlgorithm{Target Search using Edge Queries $(T)$}{alg:tree-edge}{
\STATE{$T' \AlgAssign T$.}
\STATE{$v$ \AlgAssign a separator node of $T'$: each subtree
of $T' \setminus \Set{v}$ contains at most $|T'|/2$ vertices.}
\WHILE{$|T'| > 1$}
	\IF{$v$ is not in $T'$ any more or it is not a separator for $T'$.}
		\STATE {$v$ \AlgAssign a separator for $T'$}.
	\ENDIF
	\STATE{Let $u_1, u_2, \ldots$ be the neighbors of $v$ in $T'$,
	and $T_1, T_2, \ldots$ the subtrees of $T'$
	rooted at $u_1, u_2, \ldots$.}
	\STATE{$i \AlgAssign \argmax\limits_{j} \Set{|T_j|}$.}
	\STATE{Query the edge $(v, u_i)$.}
	\IF{the target is in $T_i$}
		\STATE $T' \AlgAssign T_i$.
	\ELSE
		\STATE $T' \AlgAssign T \setminus T_i$.
	\ENDIF
\ENDWHILE
\RETURN{the unique vertex in $T'$}.
}

First off, recall that the existence of a separator node
was proved by Jordan~\cite{jordan:1869:assemblages}.
For the purpose of analysis, we divide the execution of the algorithm
into phases: maximal intervals during which the same vertex $v$
is used as a separator.
Notice that there may be multiple phases during which the same
vertex is chosen --- however, no two such phases can be adjacent
by maximality.

Consider one phase and its corresponding vertex $v$;
suppose that $v$ was used $k$ times,
and had degree $d$ in $T'$ at the beginning of the phase.
Let $u_1, u_2, \ldots, u_d$ be the neighbors of $v$ in $T'$,
with subtrees $T_1, T_2, \ldots, T_d$.
Without loss of generality, assume that
$|T_1| \geq |T_2| \geq \ldots \geq |T_d|$,
so that during those $k$ queries, the edges
$(v, u_1), (v, u_2), \ldots, (v, u_k)$ were queried.
If the phase is the last phase, and at some point, the tree only
contains two vertices (and one edge), we treat the last step,
determining the target with one more query, separately.
We now distinguish three cases, based on how the phase ended:

\begin{enumerate}
\item 
The \Ordinal{k} query revealed that the target is in $T_k$,
so that $v$ was permanently removed from $T'$
(along with all $T_i$ for $i \neq k$).
By the ordering of sizes, we have $|T_i| \geq |T_k|$ for all $i < k$.
Furthermore, because $v$ is a separator
at the \Ordinal{k} query of the phase,
$1 + \sum_{i = k + 1}^d |T_i| \geq |T_k|$.
Thus, $|T'| \geq (k + 1) |T_k|$
where $|T'|$ is the size of the tree at the beginning of the phase;
in other words, using $k$ queries,
the size of the remaining tree was reduced by a factor at least $k + 1$.

\item 
The \Ordinal{k} query did not place the target in $T_k$,
but the removal of $T_1, \ldots, T_k$ resulted in $v$ not being a
separator any more.
By the ordering of sizes, we have $|T_i| \geq |T_{k+1}|$
for all $i \leq k$.
Let $r = 1 + \sum_{i \geq k+1} |T_i|$ be the size of the
remaining tree after the removal of $T_1, \ldots, T_k$.
Because $v$ is not a separator any more, $r < 2 |T_{k+1}|$,
while $|T'| \geq k |T_{k+1}| + r$,
where $|T'|$ is the size of the tree at the beginning of the phase.
Thus, using $k$ queries, the size of the remaining tree was reduced
by a factor at least $\frac{k |T_{k+1}| + r}{r} \geq \frac{k+2}{2}$.

\item 
$T'$ consisted of a single node (the algorithm is done), or two
nodes with one edge (the algorithm is done with one more query).
\end{enumerate}

Thus, in the first two cases,
using $k$ queries, the size of the remaining tree
decreases by a factor at least $\frac{k+2}{2}$.
Now, for each phase $j$, let $k_j$ be the length of the phase.
The total number of queries is $\sum_j k_j$, and the constraint
imposed by the size decrease is that
$\prod_j \frac{k_j + 2}{2} \leq n$.
By writing $\frac{k_j+2}{2} = e^{x_j}$, we obtain a convex
objective function $\sum_j (2e^{x_j}-2)$
subject to an upper bound
$\sum_j x_j \leq \ln n$, which is
maximized by making each $x_j$ (and thus $k_j$)
as large as possible or equal to 0.
In other words, we obtain an upper bound by setting 
$k_j = \MAXDEGREE - 1$ for all $j$.
Notice that the only case in which the algorithm queries all
$d$ edges incident on a node is when there are exactly two nodes
remaining before the final query; this case is treated separately.

The number of phases is then
$\log_{(\MAXDEGREE+1)/2} n =
\frac{\log n}{\log (\MAXDEGREE+1) - 1}$.
Thus, the total number of queries, counting the possible final
query, is at most 
$1 + \frac{\MAXDEGREE-1}{\log(\MAXDEGREE+1)-1} \cdot \log n$,
completing the proof.
\end{proof}


\end{document}